\newtheorem{theor}{Theorem}
\newtheorem{lem}{Lemma}
\newtheorem{defn}{Definition}
\newtheorem{exam}{Example}
\newtheorem{problem}{Problem}
\newcommand{\spara}[1]{\smallskip\noindent{\bf #1}}
\newcommand{\squishlist}{
 \begin{list}{$\bullet$}
  {  \setlength{\itemsep}{0pt}
     \setlength{\parsep}{3pt}
     \setlength{\topsep}{3pt}
     \setlength{\partopsep}{0pt}
     \setlength{\leftmargin}{2em}
     \setlength{\labelwidth}{1.5em}
     \setlength{\labelsep}{0.5em}
} }
\newcommand{\squishlisttight}{
 \begin{list}{$\bullet$}
  { \setlength{\itemsep}{0pt}
    \setlength{\parsep}{0pt}
    \setlength{\topsep}{0pt}
    \setlength{\partopsep}{0pt}
    \setlength{\leftmargin}{2em}
    \setlength{\labelwidth}{1.5em}
    \setlength{\labelsep}{0.5em}
} }
\newcommand{\squishdesc}{
 \begin{list}{}
  {  \setlength{\itemsep}{0pt}
     \setlength{\parsep}{3pt}
     \setlength{\topsep}{3pt}
     \setlength{\partopsep}{0pt}
     \setlength{\leftmargin}{1em}
     \setlength{\labelwidth}{1.5em}
     \setlength{\labelsep}{0.5em}
} }
\newcommand{\squishend}{
  \end{list}
}
\newcommand{\eat}[1]{}
\newcommand{\NP}{\ensuremath{\mathbf{NP}}\xspace}
\newcounter{ccc}
\DeclareMathOperator*{\argmax}{arg\,max}
\newcommand{\bigO}{\mathcal{O}}
\newcommand\redout{\bgroup\markoverwith
{\textcolor{red}{\rule[.5ex]{2pt}{2pt}}}\ULon}
\begin{document}
\bstctlcite{IEEEexample:BSTcontrol}
\title{Voting-based Opinion Maximization
\thanks{Xiangyu Ke is the corresponding author. Arijit Khan acknowledges support from the Novo Nordisk Foundation grant NNF22OC0072415. Laks V.S. Lakshmanan's research was supported by a grant from the Natural Sciences and Engineering Research Council of Canada (NSERC).}}

\newcommand{\revise}[1]{#1}

\author{\IEEEauthorblockN{Arkaprava Saha}
\IEEEauthorblockA{
\textit{NTU, Singapore}\\
saha0003@e.ntu.edu.sg}
\and
\IEEEauthorblockN{Xiangyu Ke}
\IEEEauthorblockA{
\textit{ZJU, China}\\
xiangyu.ke@zju.edu.cn}
\and
\IEEEauthorblockN{Arijit Khan}
\IEEEauthorblockA{
\textit{AAU, Denmark}\\
arijitk@cs.aau.dk}
\and
\IEEEauthorblockN{Laks V.S. Lakshmanan}
\IEEEauthorblockA{
\textit{UBC, Canada}\\
laks@cs.ubc.ca}
}

\maketitle

\begin{abstract}
We investigate the novel problem of \textit{voting-based opinion maximization} in a social network:
\textit{Find a given number of seed nodes for a target campaigner, in the presence
of other competing campaigns, so as to maximize a voting-based score for the target
campaigner at a given time horizon.} 

The bulk of the influence maximization literature assumes that social network users can switch between only two discrete states, inactive and
active, and the choice to switch is frozen upon one-time activation.
In reality, even when having a preferred opinion, a user may not
completely despise the other opinions, and the preference level may {\em vary
over time} due to social influence. To this end, we employ
models rooted in {\em opinion formation and diffusion}, and use several {\em voting-based scores} to
determine a user's vote for each of the multiple campaigners {\em at a given time horizon}.

Our problem is \NP-hard and non-submodular for various scores.
We design greedy seed selection algorithms with quality guarantees for our
scoring functions via sandwich approximation. To improve the {\em efficiency}, we develop random walk and
sketch-based opinion computation, with {\em quality guarantees}.
Empirical results validate our effectiveness, efficiency, and scalability. 
\end{abstract}

\begin{IEEEkeywords}
social network, opinion maximization, voting
\end{IEEEkeywords}

\vspace{-2mm}
\section{Introduction}
\label{sec:intro}
%
\vspace{-1mm}
Social influence studies have
attracted extensive attention in the data management research community \cite{2018Khan2,GalhotraAR16,TSX15,AroraGR17,TangHXLTSL19,GBL11, TXS14, TTXY18}.
The classic influence maximization (IM) problem \cite{KKT03,DR01} identifies the top-$k$ seed users in a social network to maximize the expected number of influenced users in the network, starting from those seed nodes and following an influence diffusion
model (e.g., independent cascade (IC) and linear threshold (LT) \cite{KKT03}).
Several works also
focus on competitive influence maximization \cite{BKS07, CNWZ07, HSCJ12, LL15, KLRS21,BAA11,LuB0L13,KhanZK16} which aims to find
the seed set that maximizes the influence spread for a particular campaigner relative to the others or maximally blocks the diffusion of a competitor.

However, prior works on IM have two major limitations in modelling real-world opinion
formation and spreading. First, they consider maximizing the expected number of users adopting a specific campaign,
assuming that the reaction of each user to the campaign is {\em binary} (adopt or not). 
In reality, a user may not be completely opposed to the competing opinions, although
she could have a preference for one opinion,
where the degree of preference could vary among users.
This scenario can be accurately modelled by allowing the opinion of a user for each campaign to be a real number in $[0,1]$.
Second, 
in the IC and LT models, a
user's choice is frozen upon one-time activation -- not permitting to switch opinions later.
While this is realistic when purchasing one of the many competing products due to the user's limited budget,
it is insufficient for modelling {\it opinion formation and manipulation over time}, e.g., in scenarios like paid movie
services, elections, social issues, where a user's opinion is highly likely to change over time.

Due to the above shortcomings, we deviate from the classic influence diffusion (e.g., IC and LT models)
and investigate the problem of {\em opinion maximization} by employing models rooted in
opinion formation and diffusion, e.g., DeGroot \cite{DeG74} and Friedkin-Johnsen (FJ) \cite{FJ90,FJ99}.
In these settings, each user in a network has a \textit{real-valued} opinion about each campaign at
\textit{every} timestamp. Moreover, for each campaign,
the \textit{opinions of the users evolve} over discrete timestamps according
to an opinion diffusion model such as DeGroot or FJ (defined in \S~\ref{sec:diff_model}).
Given a target campaign and a time horizon (a future timestamp $t$), our problem is to select a seed set of size $k$
for the target campaigner, so that the target campaigner's odds
of being the {\em winner} at the time horizon $t$ are as high as possible.

Since opinion values are
non-binary, we require more sophisticated winning criteria than the \textit{expected influence spread} employed in  classic IM \cite{KKT03}.
Voting offers a well-understood  mechanism for determining winners in an election
among campaigners by considering the preferences of
users (``voters'') in a principled manner. We investigate voting-based scores
\cite{VM19,Gae06,Fish74} such as aggregated opinion values of all
users about a campaigner (\textit{cumulative}), rank of the target campaigner relative to others for
all users (\textit{\revise{plurality}}), or
the number of campaigners against whom the target campaigner wins in one-on-one
competitions (\textit{\revise{Copeland}}). These are natural choices based on voting theory when users
have non-binary opinion values towards multiple competitors.
Existing works on finding the top-$k$ seeds for opinion maximization \cite{GionisTT13, AKPT18}
\textit{are restricted to a single campaigner and
consider neither a given finite time horizon\footnote{\scriptsize In practice, the voting is held at a specific time horizon, instead of waiting for the diffusion to reach the Nash equilibrium as is done in \cite{GionisTT13}.}, nor voting-based scores with multiple competing campaigners\footnote{\scriptsize Only our cumulative score is similar to theirs due to its aggregate nature.}. To the best of our knowledge, voting-based opinion maximization
in the presence of multiple competing campaigns is a novel problem}.

\vspace{-0.3mm}
\spara{\revise{Applications.}}
\revise{
Our problem and solutions can be effective where users vote and the winner among multiple candidates is decided based on the election outcome. 
Examples include the presidential election, voting in the parliament, a plebiscite or a referendum  
(e.g., the referendum on the independence of Scotland) \cite{bruno94,Emerson2020}, etc.
We conduct a real-world case study about the ACM general election 2022 (\S \ref{sec:case}). 
Our case study shows that the election result might have reversed after introducing only 100 optimal seed
users. Our solution selects influential seeds based on (1) their common research interests with respect to the target candidate 
and (2) the initial preferences of the users in various research domains.
Moreover, our approach smartly focuses on switching the preferences of more neutral users. 
These demonstrate the usefulness of our problem and the effectiveness of our solution.}

\vspace{-0.3mm}
\spara{Challenges and Our Contributions.}
With multiple competing campaigns in a network, we formulate and study a \textit{novel problem in opinion maximization}:
Find the top-$k$ seed nodes for a target campaign that maximize a voting-based winning criterion for the target
at a given time horizon (\S~\ref{sec:prob}). Our contributions are as follows.

\noindent$\bullet$ {\bf Opinion Maximization and Voting Scores:}
To the best of our knowledge, opinion manipulation by introducing seed nodes has not been investigated before,
except, e.g., \cite{GalhotraAR16, GionisTT13, AKPT18,Li0WZ13,0005BSC15}. However, apart from \cite{GionisTT13, AKPT18}, prior works do not consider sophisticated
DeGroot/FJ opinion models. Also, opinion maximization at a finite time horizon with multiple campaigners
has not been explored even in \cite{GionisTT13, AKPT18}.
One of our novel contributions is bridging two different paradigms: {\bf (1)} seed selection for opinion formation and diffusion
till {\em a given finite time horizon}, and {\bf (2)} voting-based winning criteria (e.g., \revise{plurality}, 
\revise{Copeland}) with {\em multiple campaigners}.

\noindent$\bullet$ {\bf Sandwich Approximation:}
Our problem is \NP-hard (\S~\ref{sec:hard})
and non-submodular (\S~\ref{sec:subm}) under various winning criteria\footnote{\scriptsize The proofs of these results in \cite{GionisTT13} cannot
be extended trivially even to our basic model of the cumulative score for any finite time
horizon, warranting new techniques.}.
Despite these, we design bound functions for
all our non-submodular scores to derive accuracy guarantees for the
greedy algorithm via {\em sandwich approximation} \cite{LCL15} (\S~\ref{sec:sandwich}).

\noindent$\bullet$ {\bf Random Walks:}
Computing opinion values at the time horizon via DeGroot/ FJ requires
iterative matrix-vector multiplications, which is expensive.
{\em To improve the efficiency, we next propose random walk and sketching-based computations with approximation guarantees.}
Random walks have been used earlier to improve the efficiency of matrix multiplication
and PageRank computation \cite{ALNO07,CohenL99}. Our novelty is using random walks to find the $k$ seed nodes maximizing
a voting-based score by approximating the opinion values via the walks in $k$ iterations.
Also, we provide novel bounds on the number of walks required for each voting-based scoring function (\S~\ref{sec:random_walk}).

\noindent$\bullet$ {\bf Sketches:}
While sketches have been used
in  classic IM \cite{TSX15,TXS14,BBCL14},
\textit{ours is the first work that
uses sketches for opinion computation.} 
We adapt sketches for opinion diffusion models and voting-based scores, and derive non-trivial accuracy guarantees (\S~\ref{sec:sketching}).
Moreover, our sketches are simpler and less memory-consuming than RR-sets-based sketches \cite{TSX15,TXS14}.

Our thorough experimental evaluation and case study 
over five real-world social network datasets demonstrates the effectiveness, efficiency, and scalability of our solutions, 
over several baselines (\S~\ref{sec:exp}). Related work is discussed in \S~\ref{sec:related}, while in \S~\ref{sec:concl} we conclude and discuss future work.
\vspace{-2mm}
\section{Preliminaries}
\label{sec:preliminaries}
\vspace{-1mm}
A social network is modeled as a (directed) graph $\mathcal{G}=(V,E)$, where $V$ is the set of $n$ nodes
and $E\subseteq V\times V$ is the set of $m$ edges.
Each node is a user, and an edge
represents social relation between two users. We denote  matrices with upper-case letters and use lower-case
ones for their entries. We denote an $n\times n$ diagonal matrix by $diag(d_1, d_2, ..., d_n)$,
and the $n\times n$ identity matrix by $I_n$. A matrix $A=(a_{ij})$ is {\em column-stochastic}
if $a_{ij}\geq 0, \; \forall i,j$,  and $\sum_{i=1}^{n}a_{ij}=1, \; \forall j$.

Different news, campaigns, or opinions can propagate concurrently in the network,
leading to competitions \cite{BKS07,CNWZ07,BAA11}. They can be information about similar products of different brands,
multiple politicians campaigning for the same position, or different attitudes towards a
topic, e.g., for or against gun control. We call them 
\textit{candidates} and assume that there are $r>1$ candidates:  $C=\{c_1, c_2, ..., c_r\}$. All users' opinions (in the interval $[0,1]$) on all candidates are represented by an opinion matrix $B\in [0,1]^{r \times n}$.
$B_q\in [0,1]^{1\times n}$ is the $q^{th}$ row of $B$ (denoting all users' opinions on candidate $c_q$),
and $b_{qi}$ is its $i^{th}$ entry (opinion of user $i$ on  candidate $c_q$). The opinions evolve over
discrete timestamps $\{0,1,...,t\}$. 
We denote the opinion(s) at timestamp $t$
by, e.g., $B_q^{(t)}$ and $b^{(t)}_{qi}$.

\vspace{-2mm}
\subsection{Opinion Diffusion Models}
\label{sec:diff_model}
Unlike the classic influence diffusion, opinion diffusion involves aggregating the peers' opinions at each timestamp
\cite{Noor20}.
We introduce a column-stochastic influence matrix \cite{DeG74,HM20} $W\in [0,1]^{n\times n}$,
where $w_{ij}\in [0,1]$ denotes the influence weight from user $i$ to user
$j$.
Different candidates $c_q$ can have different matrices $W_q$.
Notice that {\em barring these weights, the graph structure and the nodes remain the same for all candidates}.
The set $E$ is the union of the edges with non-zero weights across all candidates. This setting is used in topic-aware IM \cite{CFLFTT15}.
We next present two widely used opinion diffusion models: DeGroot \cite{DeG74} and its extension FJ \cite{FJ90,FJ99}.

\vspace{-0.1mm}
\spara{The DeGroot Model} for a single candidate $c_q$ is given by:
\begin{small}
\begin{align}
  B_q^{(t)}=B_q^{(t-1)}W_q=B_q^{(t-2)}W_q^2=...=B_q^{(0)}W_q^t
  \vspace{-2mm}
  \label{eq:degroot}
\end{align}
\end{small}
At every timestamp, each user adopts the
weighted average of her in-neighbors' opinions from the previous timestamp.
Users without in-neighbors retain their initial opinions.
Since $W_q$ is column-stochastic, the opinion values remain in $[0,1]$.
We assume that the opinions about different candidates
diffuse independently. In multi-campaigner and multi-feature settings, independent propagation of opinions and influences
has been considered in \cite{
TuAG20,GuoCW21,GarimellaGPT17,KKC18}. Note that in our case, {\em while the opinion propagation for multiple
campaigns happens concurrently and independently, voting-based scores naturally incorporate competition among the campaigns
(\S~\ref{sec:voting})}.




\vspace{-0.1mm}
\spara{The Friedkin-Johnsen (FJ) Model} extends the DeGroot model by introducing the notion of {\em stubbornness}:
\begin{small}
\begin{align}
  B_q^{(t+1)} = B_q^{(t)} W_q \left( I - D_q \right) + B_q^{(0)} D_q
  \vspace{-2mm}
  \label{eq:fj}
\end{align}
\end{small}
$D_q=diag(d_{q1}, d_{q2}, ..., d_{qn})$ is a diagonal matrix: $d_{qi}$ represents the stubbornness
of user $i$ on retaining her initial opinion about candidate $c_q$. If $d_{qi}=1$, the user $i$ is {\em fully stubborn} and
sticks to her initial opinion about $c_q$. A {\em partially stubborn} user ($0<d_{qi}<1$)
aggregates the opinions from neighbors as well as her original opinion, while {\em non-stubborn} users
($d_{qi}=0$) follow the DeGroot model. Since the DeGroot model is a special case where all users are non-stubborn,
all our results with the FJ model also hold for the DeGroot model.

If the opinions of all users do not change after a specific timestamp, the diffusion reaches a state of {\em convergence}. The FJ model can reach convergence if and only if the edge weight matrix of the subgraph induced by all oblivious nodes is regular or there is no oblivious node \cite{PT17, PPTF17}. Oblivious nodes are (1) non-stubborn and (2) not reachable from any fully or partially stubborn node. One of our novel contributions is the seed selection for opinion maximization {\em at any given time horizon}, which introduces non-trivial additional hardness, as discussed in \S~\ref{sec:hard} and \S~\ref{sec:subm}.

\begin{figure}
  \centering
  \vspace{-3mm}
  \includegraphics[scale=0.28]{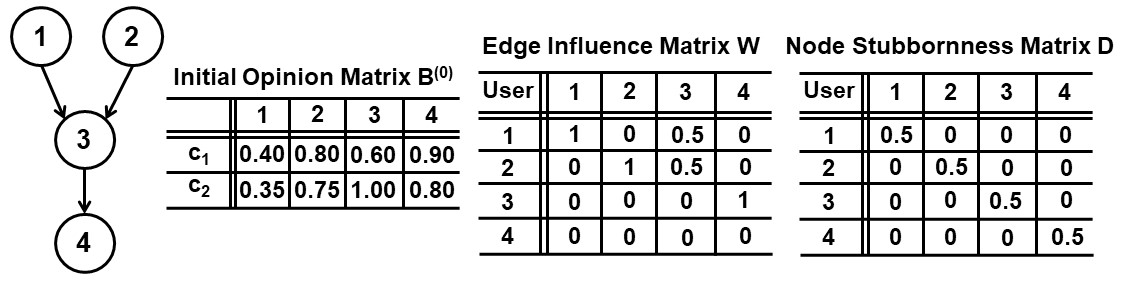}
  \vspace{-4mm}
  \caption{\small \revise{Running example. All users share the same influence weight and stubbornness matrices for both candidates.}}
  \label{fig:example_pre}
    \vspace{-4mm}
\end{figure}

\revise{
\begin{exam}
    \vspace{-2mm}
    The input graph in Figure~\ref{fig:example_pre} consists of 4 users and 3 edges. Suppose $c_1$ is our target candidate and $c_2$ is a competing candidate.
    Based on the FJ model, for any $x \in \{1, 2\}$, a user's opinion about candidate $c_x$ at any time horizon can be computed by taking the weighted average of her in-neighbors' opinions at the previous time horizon and then averaging with that of herself. Thus, users 1 and 2 will always keep their initial opinions, as they do not have any incoming edge. The opinion of user 3 at any time horizon $t$ can be computed as $b^{(t)}_{x3} = \frac{1}{2} \left[ b^{(t-1)}_{x3} + \frac{1}{2} \left( b^{(t-1)}_{x1} + b^{(t-1)}_{x2} \right) \right]$, which is the average opinion of users 1 and 2 at the previous time horizon, then averaged with that of user 3. For user 4, $b^{(t)}_{x4} = \frac{1}{2} \left[ b^{(t-1)}_{x3} + b^{(t-1)}_{x4} \right]$, which is the average of the opinions of users 3 and 4 at the previous time horizon.
    \vspace{-2mm}
    \label{ex:model}
\end{exam}}


%
\vspace{-1mm}
\subsection{Voting-based Scores}
\label{sec:voting}
\vspace{-0.8mm}
All campaigns start at timestamp $0$ and proceed concurrently (FJ model), independently of each other.
Given a time horizon $t$, we employ several voting-based scores \cite{VM19,Gae06,Fish74} to decide
the winning candidate. In particular, we compute a score $F ( B^{(t)}, c )$ for each candidate $c$.
The one with the maximum score is the winner at time $t$. We next define five major voting-based score functions that we study.

\spara{Cumulative Score.}  For a candidate $c_q$, the \textit{cumulative score} is the sum of all users' opinion values about her
at time $t$:
\vspace{-1mm}
\begin{small}
\begin{align}
  F\left(B^{(t)}, c_q\right)=\sum_{v\in V}b_{qv}^{(t)}
  \label{eq:cumu}
  \vspace{-4mm}
\end{align}
\end{small}

\vspace{-1.5mm}
\spara{\revise{Plurality} Score.} 
The \textit{\revise{plurality} score} counts the number of users who prefer $c_q$ to all other candidates
at time $t$:
\vspace{-1mm}
\begin{footnotesize}
\begin{align}
  F\left(B^{(t)}, c_q\right)=\sum_{v \in V}\mathbbm{1}\left[b_{qv}^{(t)} > \max_{c_x \in C \setminus \{c_q\}} b_{xv}^{(t)}\right] = \sum_{v \in V} \mathbbm{1} \left[ \beta \left( b_{qv}^{(t)} \right) \leq 1 \right]
  \label{eq:best-rank}
  \vspace{-4mm}
\end{align}
\end{footnotesize}
$\mathbbm{1}\left[\cdot\right]$ is an indicator that returns 1 if the condition inside is true, 0 otherwise; and $\beta \left( b_{qv}^{(t)} \right) = \sum_{c_x \in C} \mathbbm{1} \left[ b_{xv}^{(t)} \geq b_{qv}^{(t)} \right]$ is the rank of $c_q$ in the preference order for user $v$ at time $t$.
In practice, a user generally votes for only one politician, or has a limited budget to purchase one specific type of product. Intuitively, she selects the one with the highest opinion value in her mind -- the \revise{plurality} score captures this.

\vspace{-0.5mm}
\spara{$p$-Approval Score.}
Given an integer $p \in [1, r]$, the $p$-approval score of $c_q$ is defined as the number of users $v$ such that $c_q$ is among the top-$p$ preferred candidates for $v$ at time $t$, i.e.
\begin{small}
\begin{align}
  F\left(B^{(t)}, c_q\right)=\sum_{v \in V}\mathbbm{1} \left[ \beta \left( b_{qv}^{(t)} \right) \leq p \right]
  \label{eq:p-rank}
  \vspace{-4mm}
\end{align}
\end{small}
%

\vspace{-2mm}
\spara{Positional-$p$-Approval Score.}
Given an integer $p \in [1, r]$ and a sequence of position weights $(\omega[1], \omega[2], ..., \omega[r])$ such that $\omega[i] \in [0, 1] \, \forall i \in [1, r]$ and $\omega[i] \leq \omega[i - 1] \, \forall i \in [2, r]$, the positional-$p$-approval score of $c_q$ is the sum of the weights of the positions (up to $p$) of $c_q$ in the preference order of all users at time $t$. Formally,
\begin{small}
\begin{align}
  F\left(B^{(t)}, c_q\right) =
  \sum_{v \in V} \omega \left[ \beta \left( b_{qv}^{(t)} \right) \right] \times \mathbbm{1} \left[ \beta \left( b_{qv}^{(t)} \right) \leq p \right]
  \label{eq:weighted_p_rank}
  \vspace{-4mm}
\end{align}
\end{small}
Clearly, the $p$-approval and positional-$p$-approval scores are generalizations of the plurality score. In real-world applications like paid movie services, users can hold memberships of multiple platforms; the $p$-approval score accounts for this. Moreover, the service platforms usually provide multiple levels of membership having different prices and benefits. Thus, the platform still prefers a higher rank for itself by each user, since the user
may only purchase higher level memberships for her favorite ones. The positional-$p$-approval score captures this notion.
Both variants allow for ties and are more robust to small noises in the users' individual preference orders.

\spara{\revise{Copeland} Score.} We define an ordering $\succ_{M}$ on candidates: $c_q\succ_{M}c_p$ (i.e., $c_q$ wins over $c_p$), if more users have a higher opinion value for $c_q$ than for $c_p$, compared to the other way around, at time $t$. The score counts how many such one-on-one competitions a candidate $c_q$ wins:
\vspace{-2mm}
\begin{small}
\begin{align}
  &F\left(B^{(t)}, c_q\right) = \left|\{c_p \colon c_q\succ_M c_p\}\right| \nonumber \\
  &= \sum_{c_x \in C \setminus \{c_q\}} \mathbbm{1} \left[ \sum_{v \in V} \mathbbm{1} \left[ b^{(t)}_{qv} > b^{(t)}_{xv} \right] > \sum_{v \in V} \mathbbm{1} \left[ b^{(t)}_{qv} < b^{(t)}_{xv} \right] \right]
  \label{eq:condorcet}
  \vspace{-3mm}
\end{align}
\end{small}
The Condorcet winner \cite{Fish77} is the candidate that wins all such one-on-one competitions, i.e., has the maximum possible $F(B^{(t)},c_q)$ score, which is  $r-1$. In general, a Condorcet winner is not always guaranteed to exist \cite{Fish77}. However, maximizing the \textit{\revise{Copeland} score} boosts the target candidate to beat as many other candidates as possible, and to be as close to become a Condorcet winner as possible.
\vspace{-1.5mm}
\subsection{Problem Formulation}
\label{sec:prob}
\vspace{-1mm}
We study the novel problem of selecting $k$ seed nodes
for a target candidate that maximize one of the
voting-based scores discussed in \S~\ref{sec:voting} for the target candidate at a given time horizon. All our scoring functions are non-decreasing w.r.t.
seed sets (\S~\ref{sec:subm}). Maximizing the score boosts the target candidate's odds
of being as close as possible to winning.

For each node $s$ in the seed set $S$ for candidate $c_q$, we
increase $b_{qs}^{(0)}$ and $d_{qs}$ to 1 (i.e., node $s$ becomes fully stubborn
towards retaining the maximum opinion value about $c_q$).
We denote the modified initial opinion row vector $B_q$ and the stubbornness matrix $D_q$ as
$B_q[S]$ and $D_q[S]$, respectively. The problem is formulated as follows.
%
%
\vspace{-1mm}
\begin{problem} [\textsf{FJ-Vote}]
Given the initial opinion matrix $B^{(0)}$, a target candidate $c_q$,
influence matrix $W_q$, stubbornness matrix $D_q$, and a time horizon $t$,
find a set of $k$ seed nodes $S\subset V$ that maximizes the score for $c_q$ at timestamp $t$.
Formally,
\begin{small}
\vspace{-1mm}
\begin{align}
  S^*=\argmax_{S\subset V, |S|=k} F \left( B^{(t)}[S], c_q \right)
  \vspace{-5mm}
\end{align}
\end{small}
\vspace{-6mm}
\label{prob:fjvote}
\end{problem}
Here $B^{(t)}[S]$ is computed from $B^{(0)}[S]$ via the FJ model (Equation \ref{eq:fj}).
Note that $B^{(0)}[S]$ is obtained from the initial opinion matrix $B^{(0)}$
by updating its row vector $B_q$ to $B_q[S]$ according to the seed set $S$ for $c_q$. The function $F$ is based on one of the five voting scores 
 (\S~\ref{sec:voting}).
\begin{table}[t!]
	\scriptsize
	\vspace{-2mm}
	\centering\revise{
	\begin{center}
		\caption{\small Scores of candidate $c_1$ for various seed sets at $t=1$ in Figure \ref{fig:example_pre}. Assuming no seeds for $c_2$, the opinions of users 1, 2, 3, 4 about $c_2$ at $t = 1$ are resp. 0.35, 0.75, 0.78, 0.90.}
		\vspace{-2mm}
		\begin{tabular}{|c|c|c|c|c|c|c|c|}
		    \hline
			\multirow{2}{*}{\textbf{Seed Set}} &\multicolumn{4}{|c|}{\textbf{User}}  &\multicolumn{3}{|c|}{\textbf{Score}}  \\  \cline{2-8}
			& \textbf{1} & \textbf{2} & \textbf{3} & \textbf{4}  & \textbf{Cumu.} & \textbf{Plu.} & \textbf{Cope.}\\ \hline 
			$\{\}$ & {\bf 0.40} & {\bf 0.80} & 0.60 & 0.75 & 2.55 & 2 & 0 \\ \hline
			$\{1\}$ & {\bf 1.00} & {\bf 0.80} & 0.75 & 0.75 & {\bf 3.30} & 2 & 0 \\ \hline
			$\{2\}$ & {\bf 0.40} & {\bf 1.00} & 0.65 & 0.75 & 2.80 & 2 & 0\\ \hline
			$\{3\}$ & {\bf 0.40} & {\bf 0.80} & {\bf 1.00} & {\bf 0.95} & 3.15 & {\bf 4} & {\bf 1}\\ \hline
			$\{4\}$ & {\bf 0.40} & {\bf 0.80} & 0.60 & {\bf 1.00} & 2.80 & 3 & {\bf 1}\\ \hline
			$\{1,2\}$ & {\bf 1.00} & {\bf 1.00} & {\bf 0.80} & 0.75 & {\bf 3.55}  &  3 & {\bf 1}\\  \hline 
		\end{tabular}
		\vspace{-6mm}
		\label{tab:example}
	\end{center}}
\end{table}
\revise{
\begin{exam}
    \vspace{-2mm}
    Suppose we aim to choose one seed user to maximize the score for $c_1$ (i.e., improve $c_1$'s odds of winning against competitor $c_2$) at time horizon $t=1$. The optimal seed sets are quite different for various voting-based scores. As shown in Table~\ref{tab:example}, selecting user 1 as the seed leads to the maximum cumulative score; however, we still have only 2 users preferring our target candidate $c_1$ to $c_2$. Thus, the Copeland score of $c_1$ remains 0. Choosing user 3 as the seed will encourage all four users to favor $c_1$ over $c_2$, which results in the highest plurality score. Meanwhile, $c_1$ will become the Condorcet winner (Copeland score equals 1) when user 3 or 4 is selected as the seed, since more than half the users will have higher opinion values for $c_1$ than for $c_2$.
    \label{ex:seed}
    \vspace{-2mm}
\end{exam}}
\vspace{-0.2mm}
\spara{Remarks.}
We assume that the opinion diffusion for multiple candidates proceeds concurrently and independently, following \cite{TuAG20,GuoCW21,GarimellaGPT17,KKC18}.
{\bf (1)}
For the cumulative score, due to its aggregate nature, the top-$k$ seeds for the target candidate can be computed independent of the others, similar to the single-campaigner setting \cite{GionisTT13, AKPT18}. In contrast, {\em our other voting-based scores (\revise{plurality}, $p$-approval, positional-$p$-approval, and \revise{Copeland}) incorporate competition among the candidates via ranking-based formulations using each user's preference order.}
{\bf (2)}
As long as we know the seed sets for the non-target candidates at the beginning of the diffusion (i.e., at time $0$), our algorithm can compute their opinions at any time horizon, and we select the $k$ seed nodes for the target campaign (also at time $0$) so as to maximize the target's voting-based score at the time horizon, \textit{relative to the placement of seeds for non-target candidates} at time $0$.
Thus, {\em while our analyses and techniques apply for this general case where the competing candidates have seeds, for simplicity of notation and exposition, we assume w.l.o.g. that the non-target candidates have no seeds}.
\revise{{\bf (3)} Since we find the seed set of size at most $k$ that maximizes the score of the target candidate, winning is not always guaranteed, because even after selecting the $k$ optimal seed nodes for the target candidate, another candidate may still have a higher score than the target. In that case, the target candidate needs more seeds 
to win. The following variant of our problem can mitigate this issue.}
\revise{
\begin{problem} [\textsf{FJ-Vote-Win}]
Given the initial opinion matrix $B^{(0)}$, a target candidate $c_q$,
influence matrix $W_q$, stubbornness matrix $D_q$, and a time horizon $t$,
find a set of seed nodes $S^* \subset V$ of minimum size $k^*$ such that the score for $c_q$ at timestamp $t$ is the largest among all candidates.
Formally,
\vspace{-2mm}
\begin{scriptsize}
\begin{align}
  &\scriptsize S^*_k = \argmax_{S\subset V, |S|=k} F \left( B^{(t)}[S], c_q \right) \nonumber\\
  &\scriptsize k^* = \min\left\{k: \left[F\left(B^{(t)}\left[S^*_k \right], c_q \right) > \max_{c_x \in C \setminus \{c_q\}} F\left( B^{(t)}\left[S^*_k \right], c_x \right)\right]\right\} \nonumber\\
  &\scriptsize S^*=S^*_{k^*}
  \vspace{-4mm}
\end{align}
\end{scriptsize}
\label{prob:fjvotewin}
\end{problem}
}
\vspace{-6mm}
\revise{In \S~\ref{sec:overview}, we show that a solution to Problem~\ref{prob:fjvote} can be extended to solve this new problem. 
}
\vspace{-1mm}
\section{Basic Results \& Solution Framework}
\vspace{-1mm}

In this section, we discuss the hardness of our problem (\S~\ref{sec:hard}) and the submodularity of our scores (\S~\ref{sec:subm}), followed by a greedy solution to our problem (\S~\ref{sec:overview}). All of these are a part of our novel contributions. \revise{A summary of these properties for all our scores is given in Table \ref{tab:prop}.}

\begin{table}[t!]
    \centering
    \vspace{-2mm}
    \scriptsize
    \revise{
    \caption{Properties of our voting-based scores}
    \vspace{-2mm}
    \begin{tabular}{c||c|c|c|c}
        \hline
        {\bf Score} & {\bf \NP-hard} & {\bf Non-negative} & {\bf Non-decreasing} & {\bf Submodular} \\ \hline \hline
        {\bf Cumulative} & Yes & Yes & Yes & Yes \\ \hline
        {\bf Plurality} & Yes & Yes & Yes & No \\ \hline
        {\bf $p$-Approval} & Yes & Yes & Yes & No \\ \hline
        {\bf Pos.-$p$-Appr.} & Yes & Yes & Yes & No \\ \hline
        {\bf Copeland} & Open & Yes & Yes & No \\ \hline
    \end{tabular}}
    \label{tab:prop}
    \vspace{-5mm}
\end{table}

\vspace{-2mm}
\subsection{Hardness}
\label{sec:hard}
\vspace{-1mm}
We show that the decision version of Problem \ref{prob:fjvote} is \NP-hard for the cumulative and \revise{plurality} scores.
\vspace{-2mm}
\begin{theor}
The decision version of Problem \ref{prob:fjvote} is \NP-hard with the cumulative score.
\label{th:np_cum}
\end{theor}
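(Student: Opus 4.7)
\emph{Proof plan.} I would reduce from Maximum Independent Set (MIS) on $d$-regular undirected graphs, which is \NP-hard for every fixed $d \geq 3$. Given an instance $(G=(V,E), k)$ of MIS, I construct an \textsf{FJ-Vote} instance with a single target candidate $c_q$, the directed graph $\vec{G}$ obtained from $G$ by replacing every undirected edge with two oppositely oriented arcs, initial opinions $b_{qv}^{(0)} = 0$ and stubbornness $d_{qv} = 0$ for every $v$, an influence matrix with $(W_q)_{uv} = 1/d$ whenever $(u,v)$ is an arc of $\vec{G}$ and $0$ otherwise, time horizon $t=1$, and threshold $\tau = 2k$. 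Because $G$ is $d$-regular, every node has in-degree and out-degree $d$ in $\vec{G}$, so $W_q$ is column-stochastic as required by the model of \S~\ref{sec:diff_model}.

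The key step is to evaluate the cumulative score in closed form for an arbitrary seed set $S$ of size $k$. A seed $s$ has $b_{qs}^{(0)}[S]=1$ and $d_{qs}[S]=1$, so Equation~\ref{eq:fj} gives $b_{qs}^{(1)}[S]=1$. A non-seed $v$ has $d_{qv}[S]=0$ and $b_{qv}^{(0)}[S]=0$, hence
\[
b_{qv}^{(1)}[S] \;=\; \sum_{u \in S}(W_q)_{uv} \;=\; \frac{|N^-(v)\cap S|}{d},
\]
where $N^-(v)$ denotes in-neighbors in $\vec{G}$. Summing and swapping the order of summation,
\[
F\!\left(B^{(1)}[S], c_q\right) \;=\; k + \frac{1}{d}\sum_{u\in S}\bigl(d - |N^+(u)\cap S|\bigr) \;=\; 2k - \frac{2\,|E(S)|}{d},
\]
using that each $u\in S$ has out-degree $d$ in $\vec{G}$ and that the arcs of $\vec{G}$ lying inside $S$ are exactly the $2|E(S)|$ arcs obtained from the undirected edges of $G$ inside $S$.

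Because $|E(S)|\geq 0$ with equality iff $S$ is an independent set of $G$, we obtain $F(B^{(1)}[S],c_q) \geq 2k$ iff $S$ is an independent set of size $k$. Therefore the decision version of \textsf{FJ-Vote} under the cumulative score returns ``yes'' on the constructed instance iff $(G,k)$ is a yes-instance of MIS, completing the polynomial-time reduction. The main subtlety I expect is linking the weighted opinion dynamics to a purely combinatorial quantity while keeping $W_q$ column-stochastic; $d$-regularity is precisely what collapses $\sum_{u\in S}\sum_{v\notin S}(W_q)_{uv}$ into the clean arc-count $dk - 2|E(S)|$. A secondary but worth-checking point is that restricting to a single candidate is legitimate here, which follows from the cumulative score's aggregate, per-candidate decomposable nature noted in \S~\ref{sec:prob}.
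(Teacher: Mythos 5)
Your reduction is correct, and it is genuinely different from the paper's: the paper reduces from VERTEX COVER on general graphs, whereas you reduce from MAXIMUM INDEPENDENT SET on $d$-regular graphs. The mechanics diverge accordingly. The paper's construction (same bidirected graph, zero initial opinions and stubbornness, degree-normalized weights, $t=1$) uses the threshold $n$ and argues qualitatively: $S$ is a vertex cover iff every non-seed has all its in-neighbors in $S$, iff every node attains opinion exactly $1$ at $t=1$; otherwise some node loses at least $1/deg(v)$. Your construction instead derives the exact closed form $F(B^{(1)}[S],c_q)=2k-\tfrac{2|E(S)|}{d}$ and thresholds at $2k$, which is where $d$-regularity earns its keep --- it collapses the boundary arc count into $dk-2|E(S)|$. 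Your algebra checks out (seeds contribute $k$; each non-seed $v$ contributes $|N^-(v)\cap S|/d$; summing counts arcs leaving $S$). What each buys: the paper's route avoids any regularity assumption and reduces from the more standard problem, while yours yields a quantitatively sharper statement (an explicit score formula, hence in fact hardness of approximating the gap between $2k$ and $2k-2/d$) at the cost of invoking \NP-hardness of independent set on cubic graphs. Two minor points to tighten: since the model stipulates $r>1$ candidates, you should add a dummy competitor $c_x$ (as the paper does); this is harmless because, as you note, the cumulative score depends only on the target's opinions. And since Problem~\ref{prob:fjvote} fixes $|S|=k$ exactly, your equivalence is clean; if the decision version were stated with $|S|\leq k$, monotonicity plus $F\leq 2|S|$ still rules out smaller sets reaching $2k$, so nothing breaks.
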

\vspace{-4mm}
\begin{proof} 
We prove by a reduction from the \NP-hard VERTEX COVER problem \cite{K72}.
A vertex cover in an undirected graph $G = (V, E)$ is a subset of nodes such that every edge
in $E$ is incident to at least one of them. Given $G$ and an integer $k$, the decision version
of the problem asks if $G$ contains a vertex cover of size at most $k$.

Let $|V|=n$ and $|E|=m$. $G$ is transformed into a directed graph $\mathcal{G} = (V, E')$,
where $E'$ contains directed edges $(u, v)$ and $(v, u)$ for each undirected edge $(u, v) \in E$.
We create two candidates $c_q$ (our target) and $c_x$. For each $y \in \{q, x\}$, we set the following:
for each $i \in V$, $b^{(0)}_{yi} = 0$, $d_{yi} = 0$; and for each $(i, j) \in E'$, $w_{y;ij} = 1/deg(j)$,
where $deg(v)$ denotes the degree of node $v$ in $G$. This ensures that $W_y$ is column-stochastic.
The time horizon $t$ is set to $1$.
This reduction takes $\bigO(m+n)$ time.
We prove that a set $S$ of at most $k$ nodes is a vertex cover of $G$ if and only if
$F ( B^{(1)}[S], c_q ) \geq n$.

{\bf (1)} If $S$ is a vertex cover in $G$, then each node $v$ in $\mathcal{G}$ either belongs to $S$ or
has all of its incoming neighbors in $S$. In the former case, $b^{(1)}_{qv}[S] = 1$ by definition.
In the latter case, since $W_q$ is column-stochastic, it follows from Eq.~\ref{eq:fj} that $b^{(1)}_{qv}[S] = 1$.
This implies that $F ( B^{(1)}[S], c_q ) = n$.
{\bf (2)} If $S$ is not a vertex cover in $G$, then there exists at
least one edge $(u, v) \in E$ such that neither $u$ nor $v$ is in $S$.
This implies that $b^{(1)}_{qv}[S] \leq 1 - 1/deg(v) < 1$, which means that $F ( B^{(1)}[S], c_q ) < n$.
The theorem follows. 
\end{proof}
\vspace{-3mm}
\spara{Remark}: While  Problem~\ref{prob:fjvote} with the cumulative score is similar to \cite{GionisTT13}, a key difference is as follows. Unlike Problem~\ref{prob:fjvote}, \cite{GionisTT13} selects seeds to maximize the sum of the expressed opinions {\em at the Nash equilibrium}, instead of at \textit{a given finite time horizon}. The proofs of NP-hardness and submodularity in \cite{GionisTT13} rely on showing that an absorbing random walk is an unbiased estimate of the true equilibrium opinion. However, we cannot use absorbing random walks to estimate opinions at {\em a finite time horizon},  rendering their proofs inapplicable in our case. {\em Our \NP-hardness and submodularity proofs for the cumulative score are novel contributions}.

\vspace{-2mm}
\begin{theor}
  The decision version of Problem \ref{prob:fjvote} is \NP-hard with the \revise{plurality} score.
  \vspace{-1mm}
  \label{th:np_best}
\end{theor}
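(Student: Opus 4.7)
The plan is to adapt the vertex-cover reduction from Theorem~\ref{th:np_cum} to the plurality setting by carefully choosing the competitor's initial opinion so that, at every node, the plurality condition for $c_q$ aligns exactly with the condition that all of that node's incoming neighbors are covered by the seed set.

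First I would reuse the gadget from Theorem~\ref{th:np_cum}: given a vertex-cover instance $(G=(V,E),k)$ with $|V|=n$ and no isolated nodes (a w.l.o.g.\ assumption for VC), build the bidirected digraph $\mathcal{G}=(V,E')$, introduce two candidates $c_q$ (target) and $c_x$ (competitor), set $d_{y;i}=0$ and $w_{y;ij}=1/deg(j)$ for both $y\in\{q,x\}$, put $b^{(0)}_{qi}=0$ for all $i$, and fix the time horizon $t=1$. The only new ingredient is the competitor's initial opinion: set $b^{(0)}_{xi}=\alpha$ for all $i$, where $\alpha := 1 - \frac{1}{2n}$. Since $W_x$ is column-stochastic and $c_x$ has no seeds, this yields $b^{(1)}_{xv}=\alpha$ at every node $v$, independently of the seed set $S$ chosen for $c_q$. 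The reduction is polynomial in the input size.

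Then I would establish the biconditional ``$|S|\le k$ and $S$ is a vertex cover of $G$'' $\Leftrightarrow$ ``the plurality score of $c_q$ under seed set $S$ equals $n$.'' For the forward direction, if $S$ is a vertex cover then every non-seed $v$ has all incoming neighbors inside $S$, so the column-stochasticity argument already used in Theorem~\ref{th:np_cum} gives $b^{(1)}_{qv}[S]=1>\alpha$; seed nodes trivially satisfy $b^{(1)}_{qv}[S]=1>\alpha$. Hence $c_q$ strictly beats $c_x$ at every node and the plurality score reaches $n$. For the converse, if $S$ is not a vertex cover then some edge $(u,v)\in E$ has $u,v\notin S$; consequently $b^{(1)}_{qv}[S]\le 1 - 1/deg(v) \le 1 - 1/(n-1) < \alpha$, so $v$ prefers $c_x$ and the plurality score drops below $n$.

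The main obstacle is choosing $\alpha$ to strictly separate the two possible values of $b^{(1)}_{qv}[S]$ at a non-seed node $v$: exactly $1$ when all incoming neighbors of $v$ lie in $S$, and at most $1 - 1/deg(v)$ otherwise. Taking $\alpha = 1 - 1/(2n)$ works uniformly because $deg(v) \le n-1 < 2n$ for every node; any value strictly between $\max_v\bigl(1-1/deg(v)\bigr)$ and $1$ would do. Unlike the cumulative score, where aggregated opinion mass already encodes the covering condition, the per-node strict comparison underlying plurality forces this gap analysis, and this is the only real subtlety needed to extend the vertex-cover reduction of Theorem~\ref{th:np_cum} to this score.
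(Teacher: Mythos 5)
Your proposal is correct and follows essentially the same route as the paper: reuse the vertex-cover gadget from Theorem~\ref{th:np_cum} at time horizon $t=1$ and give the competitor a uniform initial opinion strictly between $\max_v\bigl(1-1/deg(v)\bigr)$ and $1$ (the paper writes it as $1-\delta$ with $0<\delta<\min_v 1/deg(v)$, while you instantiate $\alpha=1-1/(2n)$, which satisfies the same separation). The gap analysis you identify as the key subtlety is exactly the paper's one-line modification.
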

 \vspace{-3mm}
\begin{proof} 
The reduction remains the same as in the proof of Theorem \ref{th:np_cum}, except that $c_x$ satisfies $b^{(0)}_{xv} = 1 - \delta \; \forall v \in V$, where $0 < \delta < \min_{v \in V} 1/deg(v)$; this ensures that $b^{(1)}_{xv} = 1 - \delta$.
\end{proof}
The computational complexity of Problem \ref{prob:fjvote} with the \revise{Copeland} score is
open as of now. We, however, show in \S~\ref{sec:subm} that 
the \revise{Copeland} score is not submodular.
\vspace{-2mm}
\subsection{Submodularity}
\label{sec:subm}
\vspace{-1mm}
We show that the cumulative score used in Problem \ref{prob:fjvote} is submodular, while the \revise{plurality} and \revise{Copeland} scores are not. 
A set function $f:2^V\rightarrow \mathbb{R}^{\geq 0}$ over a ground set $V$ is
submodular if $f(X\cup\{i\})-f(X)\geq f(Y\cup\{i\})-f(Y), \, \forall X\subset Y\subset V, i\in V\setminus Y$.
The classic greedy algorithm returns a $(1-1/e)$-approximate solution for
maximizing a non-negative, non-decreasing, submodular function \cite{NWF78}. Including a user
$s$ into the seed set $S$ will increase her opinion value on $c_q$, which will in turn influence those of some other users.
Thus, after the inclusion
of $s$ into $S$, each user's opinion value and ranking of $c_q$ cannot decrease.
Hence, \textit{all our scoring functions are non-decreasing in seed sets for $c_q$}.

\vspace{-0.4mm}
\spara{Submodularity of the Cumulative Score.}
\vspace{-1mm}
\begin{theor}
The opinion value of any user $i$ about any candidate $c_q$ is submodular w.r.t. the seed set 
for that candidate. Formally, $\forall X\subseteq Y\subseteq V, s\in V\setminus Y$,
\begin{small}
\begin{align}
  b_{qi}^{(t)}[X\cup \{s\}]-b_{qi}^{(t)}[X]\geq b_{qi}^{(t)}[Y\cup \{s\}]-b_{qi}^{(t)}[Y]
\end{align}
\end{small}
\vspace{-5mm}
\label{lm:sub_opin}
\end{theor}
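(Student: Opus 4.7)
The plan is to prove the stated submodularity by induction on the time horizon $t$, carried out simultaneously with an auxiliary monotonicity claim: for every $X \subseteq Y \subseteq V$ and every user $i$, $b_{qi}^{(t)}[X] \leq b_{qi}^{(t)}[Y]$. The key observation is that the entry-wise FJ recurrence can be written as
\[
b_{qi}^{(t)}[S] \;=\; (1 - d_{qi}[S]) \sum_{j} w_{q;ji}\, b_{qj}^{(t-1)}[S] \;+\; d_{qi}[S]\, b_{qi}^{(0)}[S],
\]
where both $d_{qi}[S]$ and $b_{qi}^{(0)}[S]$ jump to $1$ when $i \in S$ (so the recurrence collapses to $b_{qi}^{(t)}[S] = 1$), and otherwise equal the original values and are independent of $S$. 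Hence for non-seed $i$, the update is a convex combination with non-negative coefficients $(1 - d_{qi})\, w_{q;ji}$ and $d_{qi}$, which preserves linear inequalities across different seed sets.

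The base case $t = 0$ is immediate: $b_{qi}^{(0)}[S]$ equals $1$ if $i \in S$ and equals the untouched initial value otherwise, so monotonicity and submodularity both follow by inspecting the membership of $s$ and $i$ in the four seed sets (in every case the two marginal gains either coincide or are both zero). For the inductive step, I would first establish monotonicity at time $t$ from monotonicity at $t-1$: if $i \in X$ both sides equal $1$; if $i \in Y \setminus X$ the RHS is $1$ and the LHS is at most $1$; and if $i \notin Y$ the constants $d_{qi}$ and $b_{qi}^{(0)}$ coincide on both sides, so the inequality reduces to a non-negative linear combination of the inductive monotonicity inequalities at $t-1$. Submodularity at time $t$ then follows by a parallel case analysis on $i$: when $i \notin Y \cup \{s\}$, both marginal gains at time $t$ equal $(1 - d_{qi})$ times the $w_{q;ji}$-weighted sum of the corresponding marginals at $t-1$, so submodularity is inherited; when $i \in X$ both marginals vanish; when $i \in Y \setminus X$ the RHS marginal is $0$ while the LHS marginal is non-negative by the freshly proved monotonicity at $t$; and when $i = s$, both marginals reduce to $1 - b_{qs}^{(t)}[\,\cdot\,]$ on their respective seed sets, so the claim becomes exactly monotonicity at $t$ applied to $X \subseteq Y$.

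The main obstacle is not algebraic but structural: the seed-pinning effect (both the initial opinion and the stubbornness being forced to $1$ for a seed) fractures the recurrence into several cases depending on how $i$ relates to $X$, $Y$, and $\{s\}$, and the simultaneous induction must be sequenced so that monotonicity at time $t$ is derived before submodularity at time $t$; otherwise the $i = s$ and $i \in Y \setminus X$ branches become circular. Once this ordering is in place, every case collapses to a non-negative linear combination of inductive hypotheses, and the theorem follows; submodularity of the cumulative score in Theorem~\ref{th:np_cum}'s setting is then a direct corollary, since a sum of submodular functions over $i \in V$ is submodular.
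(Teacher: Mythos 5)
Your proposal is correct and follows essentially the same route as the paper's proof: induction on $t$ with a case split on whether $i$ equals $s$, reducing the inductive step to a non-negative linear combination of the time-$(t-1)$ marginals together with monotonicity of the opinions in the seed set. The only cosmetic difference is that you carry the monotonicity claim explicitly through the induction and split the $i \neq s$ case into three sub-cases, whereas the paper handles those sub-cases uniformly via the factor $(1 - d_{qi}[S])$ and the inequality $1 - d_{qi}[X] \geq 1 - d_{qi}[Y]$.
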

%
%
\begin{proof}

We prove by induction on $t$. First, we prove for the base case ($t=0$). There are two sub-cases:

\noindent (1) When $i=s$, the initial opinion of $s$ will increase to 1.
\begin{small}
\begin{align}
b_{qs}^{(0)}[S\cup \{s\}]-b_{qs}^{(0)}[S]=1-b_{qs}^{(0)}\geq 0, \quad S\in\{X,Y\}
\end{align}
\end{small}
\noindent (2) When $i\neq s$, the initial opinion of node $i$ will not be affected by the inclusion of $s$ into the seed set $S$. We have:
\begin{small}
\begin{align}
b_{qi}^{(0)}[S\cup \{s\}]-b_{qi}^{(0)}[S]=0, \quad S\in\{X,Y\}
\end{align}
\end{small}
In each sub-case, the marginal gain is non-negative and the same irrespective of whether the current seed set is $X$ or $Y$.
Thus, the submodularity holds for the base case.

Next, we prove for the induction step. Assuming that the submodularity holds at any time-stamp $t$, we prove that it also
holds at the next time-stamp $t+1$, by considering two sub-cases as below.

\noindent (1) When $i=s$, we increase the stubbornness value $d_{qs}$ of node $s$ to 1, which ensures that
its opinion value remains the same as the initial opinion value 1, in any future time-stamp. Thus, for $S \in \{X,Y\}$, we have:
\begin{small}
\begin{align}
b_{qs}^{(t+1)}[S\cup \{s\}]-b_{qs}^{(t+1)}[S]=1-b_{qs}^{(t+1)}[S]\geq 0
\label{eq:sub_cumu_case1}
\end{align}
\end{small}
Since the opinion values are non-decreasing with respect to the inclusion of seed nodes, and $X\subseteq Y$, we have:
\begin{small}
\begin{align}
b_{qs}^{(t+1)}[X]\leq b_{qs}^{(t+1)}[Y]
\label{eq:mon}
\end{align}
\end{small}
Based on Equations~\ref{eq:sub_cumu_case1} and \ref{eq:mon}, we derive:
\begin{small}
\begin{align}
b_{qs}^{(t+1)}[X\cup \{s\}]-b_{qs}^{(t+1)}[X]\geq b_{qs}^{(t+1)}[Y\cup \{s\}]-b_{qs}^{(t+1)}[Y]
\end{align}
\end{small}
(2) When $i\neq s$, following the FJ model (Equation~\ref{eq:fj}), we compute the marginal gain as follows,
where $S\in\{X,Y\}$.
\begin{small}
\begin{align}
&b_{qi}^{(t+1)}[S\cup \{s\}]-b_{qi}^{(t+1)}[S]\nonumber \\
&= \underbrace{\left( 1-d_{qi}[S] \right) \sum_{j=1}^{n}\left[\left(b_{qj}^{(t)}[S\cup \{s\}]-b_{qj}^{(t)}[S]\right)\cdot w_{ji}\right]}_{\text{1st term}}\nonumber \\
    & \qquad + \underbrace{b_{qi}^{(0)}[S\cup \{s\}]\cdot d_{qi}[S\cup \{s\}] - b_{qi}^{(0)}[S]\cdot d_{qi}[S]}_{\text{2nd term}}\nonumber \\
    & = \left( 1-d_{qi}[S] \right) \sum_{j=1}^{n}\left[\left(b_{qj}^{(t)}[S\cup \{s\}]-b_{qj}^{(t)}[S]\right)\cdot w_{ji}\right]
    \label{eq:marginal_submodular}
\end{align}
\end{small}
In the above, the second term vanishes because $d_{qi}[S]=d_{qi}[S\cup \{s\}]$
and $b_{qi}^{(0)}[S]=b_{qi}^{(0)}[S\cup \{s\}]$. Notice that in the first term, we also use
the fact that $d_{qi}[S]=d_{qi}[S\cup \{s\}]$.

Now, let us consider the seed set $S$ as $X$ and $Y$, respectively. By the induction hypothesis, we have:
\begin{small}
\begin{align}
b_{qj}^{(t)}[X\cup \{s\}]-b_{qj}^{(t)}[X]\geq b_{qj}^{(t)}[Y\cup \{s\}]-b_{qj}^{(t)}[Y]
\label{eq:induction}
\end{align}
\end{small}
Furthermore, by the definition of seed set, we get:
\begin{small}
\begin{align}
1-d_{qi}[X]&=1-d_{qi}[Y]=0\quad\forall i\in X \nonumber \\
1-d_{qi}[X]&=1-d_{qi}\geq 1-d_{qi}[Y]=0\quad\forall i\in Y\setminus X \nonumber \\
1-d_{qi}[X]&=1-d_{qi}[Y]=1-d_{qi}\quad\forall i\in V\setminus Y, \quad i\neq s \nonumber
\end{align}
\end{small}
To summarize,
\begin{small}
\begin{align}
1-d_{qi}[X]\geq 1-d_{qi}[Y]\quad\forall i\neq s
\label{eq:stubborn_submodular}
\end{align}
\end{small}
Therefore, we derive the following $\forall i\neq s$.
\begin{small}
\begin{align}
&b_{qi}^{(t+1)}[Y\cup \{s\}]-b_{qi}^{(t+1)}[Y] \nonumber \\
    & = \left( 1-d_{qi}[Y] \right) \sum_{j=1}^{n}\left[\left(b_{qj}^{(t)}[Y\cup \{s\}]-b_{qj}^{(t)}[Y]\right)\cdot w_{ji}\right] \quad {\text{by Eq.~\ref{eq:marginal_submodular}}}\nonumber \\
    &\leq \left( 1-d_{qi}[Y] \right) \sum_{j=1}^{n}\left[\left(b_{qj}^{(t)}[X\cup \{s\}]-b_{qj}^{(t)}[X]\right)\cdot w_{ji}\right] \quad {\text{by Eq.~\ref{eq:induction}}}\nonumber \\
    &\leq \left( 1-d_{qi}[X] \right) \sum_{j=1}^{n}\left[\left(b_{qj}^{(t)}[X\cup \{s\}]-b_{qj}^{(t)}[X]\right)\cdot w_{ji}\right] \quad {\text{by Eq.~\ref{eq:stubborn_submodular}}}\nonumber \\
    &=b_{qi}^{(t+1)}[X\cup \{s\}]-b_{qi}^{(t+1)}[X] \quad {\text{by Eq.~\ref{eq:marginal_submodular}}}
\end{align}
\end{small}
This completes the proof.
\end{proof}

The cumulative score is the sum of all users' opinion values (Equation~\ref{eq:cumu}).
As the sum of submodular functions is also submodular, the cumulative score is submodular.

%
%
\spara{Non-Submodularity of the Other Scoring Functions.}
We show the non-submodularity of the \revise{plurality} and \revise{Copeland} scores using the same running example (Figure~\ref{fig:example_pre} and Table~\ref{tab:example}).

\revise{
\begin{exam}
    \vspace{-2mm}
    As shown in Table~\ref{tab:example}, inserting node 2 into the empty seed set results in zero marginal gain for both the plurality and Copeland scores. However, inserting node 2 into seed set $\{1\}$ will make user 3 preferring the target candidate $c_1$ (resulting in marginal gain $1$ for the plurality score) and also the number of users preferring $c_1$ more than the same for $c_2$ (resulting in marginal gain $1$ for the Copeland score). Hence, submodularity is violated for both scores.
    \vspace{-2mm}
    \label{ex:sub}
\end{exam}
}

\vspace{-1mm}
\subsection{Solution Overview}
\label{sec:overview}
\vspace{-1mm}
Since the cumulative score is non-negative, non-decreasing, and submodular,
the greedy framework (Algorithm~\ref{alg:whole}), which identifies the node that maximizes the
marginal gain in score at each round, can provide a $(1-1/e)$-approximate solution.
\revise{We show in \S~\ref{sec:practical_effectiveness} that there is a problem instance for which the well-known submodularity ratio $\psi$ \cite{BianB0T17,DasK18} becomes 0 for our other non-submodular voting-based scores; thus their approximation factor $(1-e^{-\psi})$ degrades and goes to 0.
However, in \S~\ref{sec:sandwich}, with the help of {\em Sandwich Approximation} \cite{LCL15}, we prove that the greedy
framework can still generate good approximate solutions for these scores.}

\begin{algorithm}[tb!]
  \caption{\small \texttt{Greedy} Seed Selection to Maximize the Score}
  \scriptsize
  \begin{algorithmic}[1]
    \REQUIRE Graph $\mathcal{G}=(V,E)$, initial opinion matrix $B^{(0)}$, influence matrix $W_i$ and stubbornness matrix $D_i$ for each candidate $c_i$, target candidate $c_q$, seed set size budget $k$, time horizon $t$, and a scoring function $F$
    \ENSURE Seed set $S^*$ of size $k$
    \STATE $S^* \gets \emptyset$
    \FOR {$j=1$ \TO $k$}
    \STATE $u \gets \argmax_{v \in V\setminus S^*} \left[ F \left( B^{(t)}[S^*\cup \{v\}], c_q \right) - F \left( B^{(t)}[S^*], c_q \right) \right]$ \label{line:max_gain}
    \STATE $S^* \gets S^*\cup \{u\}$
    \ENDFOR
    \RETURN $S^*$
  \end{algorithmic}
  \label{alg:whole}
\end{algorithm}
\vspace{-0.6mm}
\spara{Time Complexity with the Cumulative Score.}
To find the node that maximizes the marginal gain at each round of Algorithm~\ref{alg:whole},
one can apply
Eq.~\ref{eq:fj} $t$ times (due to the input time horizon $t$).
Since every such matrix-vector multiplication has time complexity $\bigO(m)$ using a sparse matrix package, we have $k$ rounds
(to find the top-$k$ seed nodes), and $\bigO(n)$ candidate nodes from which a seed node is selected in each round,
the final time complexity is $\bigO(ktmn)$.
As the cumulative score is monotone and submodular, we also apply the CELF optimization \cite{LeskovecKGFVG07}.
In \S~\ref{sec:random_walk} and
\S~\ref{sec:sketching},
we propose {\em random walk}- and {\em sketching}-based
estimation, respectively, to further improve
the efficiency, with theoretical quality guarantees.


\begin{algorithm}[tb!]
  \caption{\small \texttt{Greedy} Seed Selection for Winning}
  \scriptsize
  \begin{algorithmic}[1]
    \REQUIRE Graph $\mathcal{G}=(V,E)$, initial opinion matrix $B^{(0)}$, influence matrix $W_i$ and stubbornness matrix $D_i$ for each candidate $c_i$, target candidate $c_q$, time horizon $t$ and a scoring function $F$
    \ENSURE Seed set $S^*$ of minimum size for $c_q$ to win
    \STATE $S^* \gets \emptyset$, $l \gets 0$, $u \gets n$
    \WHILE {$u - l > 1$}
    \STATE $k \gets \frac{1}{2} \left( l + u \right)$
    \STATE $S \gets$ Algorithm \ref{alg:whole} with seed set size $k$
    \IF {$F\left(B^{(t)}\left[S \right], c_q \right) > \max_{c_x \in C \setminus \{c_q\}} F\left( B^{(t)}\left[S \right], c_x \right)$}
    \STATE $u \gets k$, $S^* \gets S$
    \ELSE
    \STATE $l \gets k$
    \ENDIF
    \ENDWHILE
    \RETURN $S^*$
  \end{algorithmic}
  \label{alg:win}
\end{algorithm}

\vspace{-0.6mm}
\spara{\revise{Remark.}}
\revise{
{\bf (1)} This greedy solution can be extended to solve Problem~\ref{prob:fjvotewin} about finding the smallest seed set size 
$k^*$ such that the target candidate wins. Since $0 \leq k^* \leq n$ and our scoring functions are non-decreasing, we resort 
to a binary search for $k^*$, with the initial lower (resp. upper) bound as $0$ (resp. $n$). In each iteration, we compute 
the optimal seed set $S$ of size at most the value midway between the bounds. If the target wins (resp. loses) with the seed set $S$, 
the upper (resp. lower) bound is updated to the middle value and the process is repeated till the bounds converge. The overall pseudocode is shown in Algorithm \ref{alg:win}.
{\bf (2)} Due to the hardness of our problem (\S~\ref{sec:hard}), we find an ``approximately optimal'' seed set 
(e.g., using Algorithm \ref{alg:whole}). Since such a seed set will lead to a lower voting-based score than that 
for the optimal solution, the final seed set size obtained could be larger than the true minimal one to achieve the 
winning criterion.}

\section{\revise{Plurality} Variants and \revise{Copeland} Scores: Sandwich Approximation}
\label{sec:sandwich}
\vspace{-1mm}
{\em Sandwich Approximation} \cite{LCL15} (\S \ref{sec:sandwich_approx}) is a powerful framework for providing  approximation guarantees (possibly lower than $(1-1/e)$) for non-submodular function maximization. 
\revise{Our novel contribution is to construct non-trivial upper and lower bound functions to enable sandwich approximation for
our \revise{plurality} (\S~\ref{sec:b_best}) and \revise{Copeland} (\S~\ref{sec:b_cond}) scores, as they must satisfy certain properties to admit good approximations. Furthermore, we empirically validate that the additional ratio introduced by sandwich approximation (which degrades the overall approximation) is reasonably high for our proposed bounding functions in all cases (\S~\ref{sec:practical_effectiveness}).} For simplicity, we re-write $F(B^{(t)}[S],c_q)$ as $F(S)$, since the target candidate $c_q$ is arbitrary but fixed.

\begin{algorithm}[tb!]
	\caption{\small Sandwich Approximation-Based Seed Selection}
	\scriptsize
	\begin{algorithmic}[1]
		\REQUIRE Graph $\mathcal{G}=(V,E)$, initial opinion matrix $B^{(0)}$, influence matrix $W_i$ and stubbornness matrix $D_i$ for each candidate $c_i$, target candidate $c_q$, seed set size budget $k$, time horizon $t$, and a (non-submodular) scoring function $F$ with a lower bound $LB$ and an upper bound $UB$
		\ENSURE Seed set $S^*$ of size $k$
		\STATE $S_U \gets \eta$-approximate solution to maximize $UB(\cdot)$
		\STATE $S_L \gets \tau$-approximate solution to maximize $LB(\cdot)$
		\STATE $S_F \gets$ Feasible solution (e.g. standard greedy) for $F(\cdot)$
		\RETURN $\argmax_{S\in \{S_{U},S_{L},S_F\}} F(S)$
	\end{algorithmic}
	\label{alg:sandwich}
\end{algorithm}

\vspace{-1mm}
\subsection{Sandwich Approximation}
\label{sec:sandwich_approx}
\vspace{-1mm}
For any non-submodular set function $F(S)$, $S\subseteq V$, suppose $UB(S)$ and $LB(S)$ are any set functions defined on the same ground set $V$, such that $LB(S)\leq F(S)\leq UB(S)$, $\forall S\subseteq V$. If we are able to compute approximate solutions for both $UB(S)$ and $LB(S)$,
then we can obtain the sandwich approximation for the targeted set function $F(S)$ as follows \revise{(pseudocode in Algorithm \ref{alg:sandwich})}.
{\bf (1)} Run the approximation algorithms to obtain an $\eta$-approximate solution $S_{U}$ to $UB(S)$ and a $\tau$-approximate solution $S_{L}$ to $LB(S)$, where $\eta$ (resp. $\tau$) is the approximation factor afforded by the algorithm for $UB(S)$ (resp. $LB(S)$).
{\bf (2)} Find a feasible solution $S_F$ to function $F(S)$, e.g., by applying the standard greedy algorithm.
{\bf (3)} Report the final solution $S^\#$:
$S^\#=\argmax_{S\in \{S_{U},S_{L},S_F\}} F(S)$.
\vspace{-4mm}
\begin{theor}[\cite{LCL15}]
  Sandwich approximation guarantees:
  \begin{small}
  \begin{align}
    \vspace{-3mm}
    F \left( S^\# \right) \geq \max \left\{ \eta \cdot \frac{F \left( S_{U} \right)}{UB \left( S_{U} \right)} \cdot F \left( S^*_F \right), \tau \cdot LB \left( S^*_F \right) \right\}
    \label{eq:sandwich}
  \end{align}
  \end{small}
  where $S^*_F$ 
  maximizes $F(S)$ subject to a constraint, e.g., a cardinality constraint $|S|\leq k$, or a matroid constraint.
  \label{th:sandwich}
\end{theor}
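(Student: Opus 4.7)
The plan is to bound $F(S^{\#})$ from below by each of the two quantities appearing inside the maximum on the right-hand side of Equation~\ref{eq:sandwich}, and then combine them. The key facts to exploit are the sandwich inequalities $LB(S) \le F(S) \le UB(S)$ for every $S \subseteq V$, the approximation guarantees $UB(S_U) \ge \eta \cdot UB(S^*_U)$ and $LB(S_L) \ge \tau \cdot LB(S^*_L)$ (where $S^*_U, S^*_L$ are the constrained maximizers of $UB$ and $LB$ respectively), and the fact that by definition $F(S^{\#}) \ge F(S_U)$ and $F(S^{\#}) \ge F(S_L)$ because $S^{\#}$ is the best of the three candidate solutions under $F$.

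First I would derive the lower-bound term $\tau \cdot LB(S^*_F)$. Since $LB(S_L) \le F(S_L)$ by the sandwich property and $S_L$ is a $\tau$-approximate maximizer of $LB$, one has $F(S_L) \ge LB(S_L) \ge \tau \cdot LB(S^*_L) \ge \tau \cdot LB(S^*_F)$, where the last inequality uses that $S^*_L$ is the optimum for $LB$ under the same feasibility constraint. Because $F(S^{\#}) \ge F(S_L)$, this already yields $F(S^{\#}) \ge \tau \cdot LB(S^*_F)$.

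Next I would derive the upper-bound term. Write $F(S_U) = \frac{F(S_U)}{UB(S_U)} \cdot UB(S_U)$, which is valid whenever $UB(S_U) > 0$ (the degenerate case $UB(S_U)=0$ forces $F(S^*_F)=0$ by the sandwich property, so the bound is trivial). Using $UB(S_U) \ge \eta \cdot UB(S^*_U) \ge \eta \cdot UB(S^*_F) \ge \eta \cdot F(S^*_F)$, one obtains
\begin{equation*}
F(S_U) \;\ge\; \frac{F(S_U)}{UB(S_U)} \cdot \eta \cdot F(S^*_F).
\end{equation*}
Since $F(S^{\#}) \ge F(S_U)$, this gives the first candidate bound. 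Taking the maximum of the two bounds yields Equation~\ref{eq:sandwich}.

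I do not expect any serious obstacle: the proof is essentially a chase through the two inequality chains and the optimality of $S^{\#}$ among the three candidates, and it does not use submodularity, monotonicity, or any structural property of $F$, $LB$, $UB$ beyond the pointwise sandwich $LB \le F \le UB$ and the quality of the approximate maximizers $S_U$ and $S_L$. The only mild subtlety is handling the ratio $F(S_U)/UB(S_U)$ cleanly when $UB(S_U)$ is zero, which I would dispatch by noting that this forces both sides of Equation~\ref{eq:sandwich} to collapse to $\tau \cdot LB(S^*_F)$ and be dominated by the second term.
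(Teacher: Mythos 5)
Your proof is correct. The paper does not prove this statement itself --- it imports it directly from \cite{LCL15} --- and your argument is precisely the standard two-chain derivation from that reference: $F(S^\#)\ge F(S_L)\ge LB(S_L)\ge\tau\, LB(S^*_L)\ge\tau\, LB(S^*_F)$ for one term, and $F(S^\#)\ge F(S_U)=\frac{F(S_U)}{UB(S_U)}\,UB(S_U)\ge\frac{F(S_U)}{UB(S_U)}\,\eta\, UB(S^*_U)\ge\frac{F(S_U)}{UB(S_U)}\,\eta\, F(S^*_F)$ for the other, with the degenerate case $UB(S_U)=0$ handled appropriately.
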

%
%
%
\vspace{-2mm}
\subsection{Bounds on the \revise{Plurality} Score Variants}
\label{sec:b_best}
\vspace{-1mm}
Motivated by 
this result, we design {\em non-negative, non-decreasing, submodular} lower and upper bounding functions $LB(S)$ and $UB(S)$ such that $0 \leq LB(S) \leq F(S) \leq UB(S)$ $\forall S \subseteq V$, thereby enabling sandwich approximation with $\eta = \tau = 1 - 1/e$ (Eq.~\ref{eq:sandwich}) via running the greedy algorithm (Algorithm~\ref{alg:whole}) on $LB(S)$, $F(S)$, and $UB(S)$, respectively. Note that ensuring the submodularity of $LB(\cdot)$ and $UB(\cdot)$ is one (not the only) way to enable sandwich approximation. This analysis is for $F(\cdot)$ denoting the positional-$p$-approval score; thus, it also holds for special cases, e.g.,  \revise{plurality} and $p$-approval scores. We first define two useful terms.
\begin{defn} [Favorable Users Set]
\label{def:fav}
The favorable users set, denoted by $V_q^{(t)}$, is the set of nodes (users) who would have the target candidate $c_q$ among their top-$p$ ranked candidates (according to their opinion values) at the time horizon $t$, even without introducing any seed for $c_q$. Formally,
\begin{small}
\begin{equation}
\label{eq:vq}
V_q^{(t)} = \left\{ v \in V : \beta \left( b_{qv}^{(t)} \right) \leq p \right\}
\vspace{-2mm}
\end{equation}
\end{small}
\end{defn}
Since the opinion of a user about $c_q$ increases with the seed set for $c_q$, and the users in $V_q^{(t)}$ have $c_q$ among their top-$p$ ranked candidates at the time horizon $t$ even without any seed for $c_q$, they will continue doing so on the addition of seed nodes for $c_q$.
Recall that the set of such users at the time horizon $t$ decides $c_q$'s positional-$p$-approval score.
Hence, we use $V_q^{(t)}$ to construct a lower bound for the positional-$p$-approval score (Definition~\ref{def:lb_best}).
\begin{defn} [Reachable Users Set]
\label{def:reachable}
The reachable users set, denoted by $N_S^{(t)}$, is the set of nodes (users) at most $t$ outgoing hops away from any node in a seed set $S$. Formally, denoting by $u \overset{h}{\rightsquigarrow} v$ the existence of a path with $h$ edges from $u$ to $v$,
\begin{small}
\begin{equation}
\label{eq:ns}
N_S^{(t)} = \bigcup_{s \in S} \bigcup_{h=0}^t \left\{ v \in V : s \overset{h}{\rightsquigarrow} v \right\}
\vspace{-2mm}
\end{equation}
\end{small}
\end{defn}
On adding seeds for $c_q$, along with the users in $V_q^{(t)}$, some additional users could also have higher opinions about $c_q$ at time $t$, who according to FJ model, can only be at most $t$ outgoing hops away from any seed node. Hence, $V_q^{(t)}$ and $N_S^{(t)}$ are used to construct an upper bound for the positional-$p$-approval score (Definition~\ref{def:ub_best}).
\begin{defn}
\label{def:lb_best}
The lower bounding function $LB(S)$ for the positional-$p$-approval score $F(S)$ is defined as the aggregated opinion value about $c_q$ at time $t$ for all users in the favorable users set, on the introduction of a seed set $S$ for $c_q$, times the weight $\omega[p]$ for position $p$. Formally,
\begin{small}
\begin{align}
LB(S) &= \omega[p]\sum_{v \in V_q^{(t)}} b_{qv}^{(t)}[S]
\vspace{-2mm}
\end{align}
\end{small}
\end{defn}
\vspace{-1mm}
\begin{defn}
\label{def:ub_best}
The upper bounding function $UB(S)$ for the positional-$p$-approval score $F(S)$ is defined as the total number of users either in the favorable users set or in the reachable users set, times the weight $\omega[1]$ for position $1$. Formally,
\begin{small}
\begin{align}
UB(S) &= \omega[1] \left| N_S^{(t)} \cup V_q^{(t)} \right|
\vspace{-2mm}
\end{align}
\end{small}
\end{defn}
\spara{Correctness Guarantee.}
We now have: 
%
\vspace{-1mm}
\begin{theor}
\label{lm:lb_br}
$LB(S)$ is {\bf (1)} non-negative, {\bf (2)} non-decreasing, {\bf (3)} submodular, and {\bf (4)} a lower bound for $F(S)$.
\end{theor}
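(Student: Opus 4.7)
The plan is to handle the four claims separately, in the order (1), (2), (3), then (4), since the first three are short consequences of results already in the excerpt (monotonicity of opinion values and Theorem~\ref{lm:sub_opin}), while (4) is the substantive step where one must relate a function of the \emph{no-seed} favorable set $V_q^{(t)}$ to the positional-$p$-approval score evaluated at an \emph{arbitrary} seed set $S$.

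For claim (1), non-negativity is immediate: by assumption $\omega[p]\in[0,1]$ and each $b_{qv}^{(t)}[S]\in[0,1]$, so $LB(S)\ge 0$. For claim (2), I will invoke the monotonicity of $b_{qv}^{(t)}[\cdot]$ in $S$ stated in the paragraph preceding the submodularity proofs (``each user's opinion value and ranking of $c_q$ cannot decrease''); summing a finite family of non-decreasing set functions and scaling by the non-negative constant $\omega[p]$ preserves non-decreasing\-ness. For claim (3), Theorem~\ref{lm:sub_opin} already gives submodularity of $b_{qv}^{(t)}[\cdot]$ for every fixed $v$, and submodularity is closed under non-negative linear combinations, so $LB(\cdot)=\omega[p]\sum_{v\in V_q^{(t)}}b_{qv}^{(t)}[\cdot]$ is submodular.

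The main obstacle is claim (4), the lower-bound inequality $LB(S)\le F(S)$ for every $S\subseteq V$. My plan has three sub-steps. First, I will argue that the favorable set $V_q^{(t)}$ defined at $S=\emptyset$ is \emph{stable} under seed additions: because the non-target candidates carry no seeds (the standing w.l.o.g.\ assumption of the paper), $b_{xv}^{(t)}[S]=b_{xv}^{(t)}$ for $x\ne q$, while $b_{qv}^{(t)}[S]\ge b_{qv}^{(t)}$ by monotonicity in $S$; hence for every $v\in V_q^{(t)}$ the rank satisfies $\beta(b_{qv}^{(t)}[S])\le \beta(b_{qv}^{(t)})\le p$. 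Second, from the non-increasing weight sequence I get $\omega[\beta(b_{qv}^{(t)}[S])]\ge \omega[p]$ for each such $v$, so
\begin{equation*}
F(S)\;\ge\;\sum_{v\in V_q^{(t)}}\omega\!\bigl[\beta(b_{qv}^{(t)}[S])\bigr]\;\ge\;\omega[p]\,|V_q^{(t)}|.
\end{equation*}
Third, since $b_{qv}^{(t)}[S]\le 1$ for every $v$, I obtain $LB(S)=\omega[p]\sum_{v\in V_q^{(t)}} b_{qv}^{(t)}[S]\le \omega[p]\,|V_q^{(t)}|\le F(S)$, closing the argument.

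I expect sub-step one of claim (4) to be the delicate piece: the definition of $V_q^{(t)}$ is anchored at $S=\emptyset$, and one must be explicit that adding seeds for $c_q$ only (a) increases opinion values for $c_q$ and (b) leaves competitors' opinion values unchanged, so that the rank $\beta$ cannot worsen for users already in $V_q^{(t)}$. Once this ``rank preservation'' lemma is in hand, the chain of inequalities that bridges the opinion-weighted $LB$ to the count-weighted $F$ via the bound $b_{qv}^{(t)}[S]\le 1$ and the monotonicity of $\omega$ is routine.
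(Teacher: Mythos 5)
Your proposal is correct and follows essentially the same route as the paper's proof: parts (1)--(3) are handled identically, and for part (4) your chain $LB(S)\le\omega[p]\,\lvert V_q^{(t)}\rvert\le F(S)$ via the rank-preservation observation $\beta\bigl(b_{qv}^{(t)}[S]\bigr)\le\beta\bigl(b_{qv}^{(t)}\bigr)\le p$ for $v\in V_q^{(t)}$ is exactly the paper's argument, with the intermediate quantity $\sum_{v\in V_q^{(t)}}\omega[p]$ written as $\omega[p]\,\lvert V_q^{(t)}\rvert$. Your explicit note that rank preservation also needs the competitors' opinions to be unaffected by seeds for $c_q$ is a slightly more careful statement of a step the paper leaves implicit.
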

\vspace{-3mm}
\begin{proof} 
{\bf (1)} Since $b_{qv}^{(t)}[S] \geq 0 \; \forall v \in V$ and $\omega[p] \geq 0$, $LB(S) \geq 0$.
{\bf (2)} $LB(S)$ is the sum of $b_{qv}^{(t)}[S]$ $\forall v \in V_q^{(t)}$ (multiplied by a non-negative constant $\omega[p]$), and each of them is non-decreasing
w.r.t. the inclusion of seeds in $S$.
{\bf (3)} From Theorem \ref{lm:sub_opin}, each $b_{qv}^{(t)}[S]$ is submodular, and hence so is $LB(S)$, which
is the sum of such functions  $\forall v \in V_q^{(t)}$ multiplied by a non-negative constant $\omega[p]$.
{\bf (4)} Notice that $b_{qv}^{(t)}[S] \geq b_{qv}^{(t)}$. Thus, $v \in V_q^{(t)}$ implies $\beta \left( b_{qv}^{(t)}[S] \right) \leq \beta \left( b_{qv}^{(t)} \right) \leq p$ or $\mathbbm{1} \left[ \beta \left( b_{qv}^{(t)}[S] \right) \leq p \right] = 1$; so $\omega \left[ \beta \left( b_{qv}^{(t)}[S] \right) \right] \geq \omega[p]$. Hence,
\begin{small}
\begin{align*}
LB(S) &= \omega[p] \sum_{v \in V_q^{(t)}} b_{qv}^{(t)}[S] \leq \sum_{v \in V_q^{(t)}} \omega[p] \\
&\leq \sum_{v \in V_q^{(t)}} \omega \left[ \beta \left( b_{qv}^{(t)}[S] \right) \right] \times \mathbbm{1} \left[ \beta \left( b_{qv}^{(t)}[S] \right) \leq p \right] \\
&\leq \sum_{v \in V} \omega \left[ \beta \left( b_{qv}^{(t)}[S] \right) \right] \times \mathbbm{1} \left[ \beta \left( b_{qv}^{(t)}[S] \right) \leq p \right] = F(S)
\vspace{-3mm}
\end{align*}
\end{small}
\end{proof}
\begin{lem}
\label{lm:nst}
If a user $v$ is not in the reachable users set, then the opinion of $v$ about $c_q$ does not change by virtue of the seed set. Formally, if $v \notin N_S^{(t)}$, then $b_{qv}^{(t)}[S] = b_{qv}^{(t)}$.
\end{lem}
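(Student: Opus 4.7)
The plan is to prove Lemma \ref{lm:nst} by induction on $t$, using the FJ update rule (Equation \ref{eq:fj}) and the definition of $N_S^{(t)}$.

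For the base case $t=0$, I would observe that $N_S^{(0)} \supseteq S$ (since every seed is trivially reachable in $0$ hops from itself), so $v \notin N_S^{(0)}$ implies $v \notin S$. By the construction of $B^{(0)}[S]$, only the initial opinions of the seed nodes are modified, so $b_{qv}^{(0)}[S] = b_{qv}^{(0)}$.

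For the inductive step, assuming the claim holds at time $t-1$, consider $v \notin N_S^{(t)}$. First, since $N_S^{(t)}$ is monotone in $t$, we still have $v \notin S$, hence $d_{qv}[S] = d_{qv}$ and $b_{qv}^{(0)}[S] = b_{qv}^{(0)}$. Writing out the entry-wise form of Equation \ref{eq:fj},
\begin{small}
\begin{equation*}
b_{qv}^{(t)}[S] \;=\; (1 - d_{qv}) \sum_{j=1}^n b_{qj}^{(t-1)}[S]\, w_{q;jv} \;+\; d_{qv}\, b_{qv}^{(0)},
\end{equation*}
\end{small}
so it suffices to show $b_{qj}^{(t-1)}[S] = b_{qj}^{(t-1)}$ for every $j$ with $w_{q;jv} > 0$. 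The key combinatorial observation is that whenever $w_{q;jv} > 0$, there is a directed edge $j \to v$; if such a $j$ belonged to $N_S^{(t-1)}$, then there would be a path of length at most $t-1$ from some seed $s$ to $j$, extendable by one hop to $v$, forcing $v \in N_S^{(t)}$ and contradicting our assumption. Therefore every relevant in-neighbor $j$ satisfies $j \notin N_S^{(t-1)}$, and the induction hypothesis gives $b_{qj}^{(t-1)}[S] = b_{qj}^{(t-1)}$. Substituting back yields $b_{qv}^{(t)}[S] = b_{qv}^{(t)}$.

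The main obstacle here is purely bookkeeping around edge directions and the convention in the FJ equation: one must be careful that $w_{q;jv}$ represents a $j \to v$ edge so that reachability extends in the correct direction (seeds reach $v$ via their out-neighbors, which are the in-neighbors feeding into $v$'s update). Everything else — the monotonicity $N_S^{(t-1)} \subseteq N_S^{(t)}$, the invariance of $d_{qv}[S]$ and $b_{qv}^{(0)}[S]$ for non-seed nodes, and vanishing of the terms with $w_{q;jv} = 0$ — follows immediately from the definitions, so the proof is short once the reachability contradiction is set up cleanly.
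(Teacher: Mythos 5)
Your induction is correct and is precisely the formalization of the argument the paper only sketches (``the influence of the seed set diffuses by one hop in each timestamp, and hence cannot spread beyond $t$ hops at timestamp $t$''): the base case, the invariance of $d_{qv}[S]$ and $b_{qv}^{(0)}[S]$ for $v \notin S$, and the contradiction showing every in-neighbor $j$ with $w_{q;jv}>0$ lies outside $N_S^{(t-1)}$ are all handled correctly, with the edge-direction bookkeeping consistent with the column-stochastic convention of $W_q$. No gaps.
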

\vspace{-1mm}
Intuitively, this follows from the FJ model; the influence of the seed set diffuses by one hop in each timestamp, and hence cannot spread beyond $t$ hops at timestamp $t$.
\begin{theor}
\vspace{-1mm}
$UB(S)$ is {\bf (1)} non-negative, {\bf (2)} non-decreasing, {\bf (3)} submodular, and {\bf (4)} an upper bound for $F(S)$.
\label{th:ub_br}
\end{theor}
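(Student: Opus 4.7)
The plan is to verify the four claims in the order listed, with submodularity being the most substantive step. For (1), both $\omega[1]\geq 0$ and $|N_S^{(t)}\cup V_q^{(t)}|\geq 0$ hold by definition, giving $UB(S)\geq 0$. For (2), directly from Definition~\ref{def:reachable} one has $S\subseteq S'\Rightarrow N_S^{(t)}\subseteq N_{S'}^{(t)}$, so $N_S^{(t)}\cup V_q^{(t)}\subseteq N_{S'}^{(t)}\cup V_q^{(t)}$; monotonicity of cardinality and non-negativity of $\omega[1]$ finish the step.

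For (3), I will decompose $N_S^{(t)}=\bigcup_{s\in S}R_s$, where $R_s=\bigcup_{h=0}^{t}\{v\in V:s\overset{h}{\rightsquigarrow}v\}$ depends only on $s$ and on the graph, not on the other members of $S$. This expresses $|N_S^{(t)}\cup V_q^{(t)}|$ as a coverage function on the ground set $V$ with a fixed ``base'' $V_q^{(t)}$ and incremental sets $R_s$. The marginal gain $|N_{S\cup\{s\}}^{(t)}\cup V_q^{(t)}|-|N_S^{(t)}\cup V_q^{(t)}|$ equals $|R_s\setminus(N_S^{(t)}\cup V_q^{(t)})|$, which is non-increasing in $S$ because $N_S^{(t)}\cup V_q^{(t)}$ grows with $S$. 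Scaling by the non-negative constant $\omega[1]$ preserves submodularity.

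For (4), I will argue that only users in $N_S^{(t)}\cup V_q^{(t)}$ can contribute a positive term to $F(S)$ and that each such term is at most $\omega[1]$. Concretely, if $v\notin N_S^{(t)}$, Lemma~\ref{lm:nst} gives $b_{qv}^{(t)}[S]=b_{qv}^{(t)}$; combined with the standing convention that non-target candidates carry no seeds (so their opinions at $v$ are also unchanged by $S$), this forces $\beta(b_{qv}^{(t)}[S])=\beta(b_{qv}^{(t)})$. Hence, if $v$'s indicator in $F(S)$ is $1$, then either $v\in N_S^{(t)}$ or $\beta(b_{qv}^{(t)})\leq p$, i.e., $v\in V_q^{(t)}$. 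Using $\omega[\beta(b_{qv}^{(t)}[S])]\leq\omega[1]$ (since position weights are non-increasing) and summing over at most $|N_S^{(t)}\cup V_q^{(t)}|$ contributing users yields $F(S)\leq\omega[1]\,|N_S^{(t)}\cup V_q^{(t)}|=UB(S)$.

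The main obstacle I anticipate is the rank-preservation step in (4): I must be careful that perturbing $S$ changes only the target candidate's opinion row, so that the ordering underlying $\beta(\cdot)$ is unaffected at users outside $N_S^{(t)}$. This is exactly the convention adopted in the remark following Problem~\ref{prob:fjvote}, and it is what makes the crude bound in Definition~\ref{def:ub_best} legitimate; everything else reduces to standard coverage-function reasoning and the rank-based definition of the positional-$p$-approval score.
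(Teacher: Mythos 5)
Your proposal is correct and follows essentially the same route as the paper's proof: parts (1) and (2) are identical, part (3) is the same coverage-function argument (your $R_s$ is the paper's $N_{\{s\}}^{(t)}$, and the marginal gain $\omega[1]\,|N_{\{s\}}^{(t)}\setminus(N_S^{(t)}\cup V_q^{(t)})|$ is exactly the quantity the paper compares across $X\subset Y$), and part (4) uses the same contrapositive via Lemma~\ref{lm:nst} together with the bound $\omega[\beta(\cdot)]\leq\omega[1]$. Your explicit remark that the rank $\beta(\cdot)$ is preserved outside $N_S^{(t)}$ because only the target candidate's opinion row depends on $S$ makes precise a step the paper leaves implicit, but it is the same argument.
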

%
\begin{proof} 
{\bf (1)} Since the size of any set is non-negative, $UB(S) \geq 0$.
{\bf (2)} $UB(S)$ is non-decreasing because, for any $X \subseteq Y$,
\begin{small}
\begin{align*}
UB(Y) &= UB(Y \cup X) = \omega[1] \left| N_{Y \cup X}^{(t)} \cup V_q^{(t)} \right| \\
&= \omega[1] \left| N_Y^{(t)} \cup N_X^{(t)} \cup V_q^{(t)} \right| \geq \omega[1] \left| N_X^{(t)} \cup V_q^{(t)} \right| = UB(X)
\end{align*}
\end{small}
{\bf (3)} $UB(S)$ is submodular because, for any $X \subset Y \subset V$ and $s \in V \setminus Y$, we have
\begin{footnotesize}
\begin{align*}
&UB(X \cup \{s\}) - UB(X) = \omega[1] \left( \left| N_{X \cup \{s\}}^{(t)} \cup V_q^{(t)} \right| - \left| N_X^{(t)} \cup V_q^{(t)} \right| \right) \\
&= \omega[1] \left| N_{\{s\}}^{(t)} \setminus \left( N_X^{(t)} \cup V_q^{(t)} \right) \right| \geq \omega[1] \left| N_{\{s\}}^{(t)} \setminus \left( N_Y^{(t)} \cup V_q^{(t)} \right) \right| \\
&= \omega[1] \left( \left| N_{Y \cup \{s\}}^{(t)} \cup V_q^{(t)} \right| - \left| N_Y^{(t)} \cup V_q^{(t)} \right| \right) = UB(Y \cup \{s\}) - UB(Y)
\end{align*}
\end{footnotesize}
{\bf (4)} Suppose $v \notin N_S^{(t)}$ and $v \notin V_q^{(t)}$. From Eq. \ref{eq:vq} and Lemma \ref{lm:nst}, $
\beta \left( b_{qv}^{(t)}[S] \right) = \beta \left( b_{qv}^{(t)} \right) > p$. Thus, $\beta \left( b_{qv}^{(t)}[S] \right) \leq p$ implies that $v \in N_S^{(t)} \cup V_q^{(t)}$. Also, $\omega \left[ \beta \left( b_{qv}^{(t)}[S] \right) \right] \leq \omega[1]$. Hence,
\begin{footnotesize}
\begin{align*}
F(S) &= \sum_{v \in V} \omega \left[ \beta \left( b_{qv}^{(t)}[S] \right) \right] \times \mathbbm{1} \left[ \beta \left( b_{qv}^{(t)}[S] \right) \leq p \right] \\
&\leq \sum_{v \in V} \omega[1] \times \mathbbm{1} \left[ v \in N_S^{(t)} \cup V_q^{(t)} \right] = \omega[1] \left| N_S^{(t)} \cup V_q^{(t)} \right| = UB(S)
\end{align*}
\end{footnotesize}
\vspace{-2mm}
\end{proof}
\vspace{-4mm}
\subsection{Upper Bound for the \revise{Copeland} Score}
\label{sec:b_cond}
We construct a non-negative, non-decreasing, submodular upper bounding function for the \revise{Copeland} score
in a similar way as in \S~\ref{sec:b_best}, under the constraint that no user has equal opinion values
about any two candidates at the time horizon. Notice that this constraint does not change the definition of
the \revise{Copeland} score (Equation \ref{eq:condorcet}) in any way; rather, whether this constraint holds or not depends
on the input dataset, the seed set, and the time horizon.
We enable sandwich approximation via running the
greedy algorithm (Algorithm~\ref{alg:whole}) on $F(S)$ and $UB(S)$ only,
and we get $\eta = 1 - 1/e$ in Equation \ref{eq:sandwich}.
As in \S~\ref{sec:b_best}, ensuring the submodularity of $UB(\cdot)$ is one (not the only) way to enable sandwich approximation.
The construction of a useful lower bound and the case when a user has equal preference to two candidates
at the time horizon are interesting open questions for future work.
\begin{defn}[Weakly Favorable Users Set]
  The weakly favorable users set, denoted by $U_{q}^{(t)}$, is the set of nodes (users) who prefer $c_q$ to at least one other candidate at the time horizon $t$, even without having any seed for $c_q$. Formally,
  \begin{small}
  \begin{equation}
    \vspace{-2mm}
    \label{eq:vq'}
    U_{q}^{(t)} = \left\{ v \in V : b_{qv}^{(t)} > \min_{c_x \in C \setminus \{c_q\}} b_{xv}^{(t)} \right\}
    \end{equation}
  \end{small}
  \vspace{-4mm}
  \label{def:max_fav}
\end{defn}
Since the \revise{Copeland} score computes the number of one-on-one competitions won by $c_q$, only those users who prefer $c_q$ to at least one other candidate, i.e., those in $U_{q}^{(t)}$, can contribute to this score, along with those users who could be influenced by the seed set, i.e., those in $N_S^{(t)}$. Thus, $U_{q}^{(t)}$ and $N_S^{(t)}$ are used to construct an upper bound as below.
\begin{defn}
The upper bounding function $UB(S)$ for the \revise{Copeland} score $F(S)$ is defined as the total number of users either in the weakly favorable users set or in the reachable users set, times the ratio of the number of non-target candidates to one more than half the total number of users.
\begin{small}
\vspace{-1mm}
\begin{align}
UB(S) = \frac{r-1}{\left\lfloor\frac{n}{2}\right\rfloor + 1} \left| N_S^{(t)} \cup U_{q}^{(t)} \right|
\end{align}
\end{small}
\vspace{-4mm}
\label{def:ub_cond}
\end{defn}
\vspace{-0.4mm}
\spara{Correctness Guarantee.} We show that $UB(S)$ is a non-negative, non-decreasing, submodular upper bound for $F(S)$.
\begin{theor}
    $UB(S)$ is {\bf (1)} non-negative, {\bf (2)} non-decreasing, {\bf (3)} submodular, and {\bf (4)} an upper bound for $F(S)$.
    \label{th:ub_cond}
\end{theor}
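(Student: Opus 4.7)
The plan is to mirror the proof of Theorem~\ref{th:ub_br}, since $UB(S)$ has the same structural form: a nonnegative constant times the cardinality of a union of $N_S^{(t)}$ with a fixed set ($U_q^{(t)}$ here, $V_q^{(t)}$ there). Consequently, parts (1)--(3) should go through essentially verbatim. Nonnegativity is immediate since $r>1$ makes the scalar factor nonnegative and set cardinalities are nonnegative. Monotonicity follows from $X\subseteq Y \Rightarrow N_X^{(t)}\subseteq N_Y^{(t)}$, hence $\left|N_X^{(t)}\cup U_q^{(t)}\right|\leq \left|N_Y^{(t)}\cup U_q^{(t)}\right|$. Submodularity is the standard coverage argument, expressing the marginal gain of adding $s$ as $\frac{r-1}{\lfloor n/2\rfloor+1}\left|N_{\{s\}}^{(t)}\setminus(N_X^{(t)}\cup U_q^{(t)})\right|$ and observing that this quantity is non-increasing as $X$ grows.

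The real work is in part (4), the upper bound. First I would note that since seeds for $c_q$ do not affect the opinion trajectory of any other candidate, $b_{xv}^{(t)}[S]=b_{xv}^{(t)}$ for every $x\neq q$. Fix any non-target $c_x$ and any voter $v$ with $b_{qv}^{(t)}[S]>b_{xv}^{(t)}[S]$. If $v\notin N_S^{(t)}$, Lemma~\ref{lm:nst} gives $b_{qv}^{(t)}[S]=b_{qv}^{(t)}$, so $b_{qv}^{(t)}>b_{xv}^{(t)}\geq \min_{c_y\neq c_q} b_{yv}^{(t)}$, placing $v$ in $U_q^{(t)}$ by Definition~\ref{def:max_fav}. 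Hence the set of voters preferring $c_q$ over $c_x$ is contained in $N_S^{(t)}\cup U_q^{(t)}$, regardless of which opponent $c_x$ is chosen.

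Next I would invoke the no-tie assumption: for every voter $v$ and every pair of candidates, exactly one strict inequality between their opinions holds at time $t$ with seed set $S$. The Copeland winning condition in Eq.~\ref{eq:condorcet} then collapses to requiring strictly more than $n/2$ voters on $c_q$'s side, equivalently at least $\lfloor n/2\rfloor+1$. Combining with the containment from the previous paragraph, if $\left|N_S^{(t)}\cup U_q^{(t)}\right|<\lfloor n/2\rfloor+1$, then no bout can be won, so $F(S)=0\leq UB(S)$. Otherwise, $F(S)\leq r-1 = \frac{r-1}{\lfloor n/2\rfloor+1}\left(\lfloor n/2\rfloor+1\right) \leq \frac{r-1}{\lfloor n/2\rfloor+1}\left|N_S^{(t)}\cup U_q^{(t)}\right| = UB(S)$.

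The main subtlety I expect is properly leveraging the no-tie assumption to collapse the two-sided inequality in Eq.~\ref{eq:condorcet} into a simple majority condition, and being careful that voters outside $N_S^{(t)}$ who strictly prefer $c_q$ to some $c_x$ land in $U_q^{(t)}$ via a \emph{strict} (not merely weak) inequality against the minimum opponent opinion. Everything else is routine once the structural parallel with Theorem~\ref{th:ub_br} is recognized.
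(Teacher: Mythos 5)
Your proof is correct, and parts (1)--(3) proceed exactly as in the paper, which simply defers to the counterparts in Theorem~\ref{th:ub_br} as you do. For part (4) you use the same two key ingredients as the paper --- the containment, via Lemma~\ref{lm:nst} and the strict inequality in Definition~\ref{def:max_fav}, of every voter who strictly prefers $c_q$ to some opponent inside $N_S^{(t)} \cup U_{q}^{(t)}$, and the no-tie assumption to rewrite the winning condition in Eq.~\ref{eq:condorcet} as ``at least $\left\lfloor n/2\right\rfloor + 1$ voters on $c_q$'s side'' --- but you close the argument differently. The paper bounds each one-on-one indicator by the fractional relaxation $\mathbbm{1}\left[\mathrm{count} \geq \left\lfloor n/2\right\rfloor + 1\right] \leq \mathrm{count}/\left(\left\lfloor n/2\right\rfloor + 1\right)$ and then sums over opponents and voters to reach $\frac{r-1}{\left\lfloor n/2\right\rfloor+1}\left|N_S^{(t)} \cup U_{q}^{(t)}\right|$; you instead split on whether $\left|N_S^{(t)} \cup U_{q}^{(t)}\right|$ reaches $\left\lfloor n/2\right\rfloor + 1$, concluding $F(S)=0$ in the small case (no bout can be won) and using the trivial bound $F(S) \leq r-1 \leq UB(S)$ in the large case. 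Your case split is slightly more elementary and makes transparent that the containment is only needed when the union is small, whereas the paper's term-by-term chain of inequalities yields the same constant without any case analysis. Both arguments are valid.
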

\begin{proof} 
{\bf (1)}, {\bf (2)} and {\bf (3)} can be proved by similar arguments as their counterparts in Theorem \ref{th:ub_br}.

  {\bf (4)} Suppose $v \notin N_S^{(t)}$ and $v \notin U_{q}^{(t)}$. From Equation \ref{eq:vq'} and Lemma \ref{lm:nst}, $\forall c_x \in C \setminus \{c_q\} \, : \, b_{xv}^{(t)} \geq b_{qv}^{(t)} = b_{qv}^{(t)}[S]$. Thus, $b_{qv}^{(t)}[S] > \min_{c_x \in C \setminus \{c_q\}} b_{xv}^{(t)}$ implies that $v \in N_S^{(t)} \cup U_{q}^{(t)}$. Hence, we have
  \begin{footnotesize}
    \begin{align*}
            &F(S) = \sum_{c_x \in C\setminus \{c_q\}} \mathbbm{1} \left[ \sum_{v \in V} \mathbbm{1} \left[ b^{(t)}_{qv}[S] > b^{(t)}_{xv} \right] > \sum_{v\in V} \mathbbm{1} \left[ b^{(t)}_{qv}[S] < b^{(t)}_{xv} \right] \right] \\
            &= \sum_{c_x \in C\setminus \{c_q\}} \mathbbm{1} \left[ \sum_{v \in V} \mathbbm{1} \left[ b^{(t)}_{qv}[S] > b^{(t)}_{xv} \right] \geq \left\lfloor\frac{n}{2}\right\rfloor + 1 \right] \\
            &\leq \sum_{c_x \in C\setminus \{c_q\}} \frac{1}{\left \lfloor \frac{n}{2} \right \rfloor + 1} \sum_{v \in V} \mathbbm{1} \left[b^{(t)}_{qv}[S] > b^{(t)}_{xv}\right]\\
            & \leq \frac{1}{\left\lfloor\frac{n}{2}\right\rfloor + 1} \sum_{c_x \in C\setminus \{c_q\}} \sum_{v\in V} \mathbbm{1} \left[b^{(t)}_{qv}[S] > \min_{c_y \in C \setminus \{c_q\}} b_{yv}^{(t)}\right] \\
            &\leq \frac{r-1}{\left\lfloor\frac{n}{2}\right\rfloor + 1} \sum_{v\in V} \mathbbm{1}\left[v \in N_S^{(t)} \cup U_{q}^{(t)}\right] = \frac{r-1}{\left\lfloor\frac{n}{2}\right\rfloor + 1} \left| N_S^{(t)} \cup U_{q}^{(t)} \right|=UB(S)
    \end{align*}
  \end{footnotesize}
  The second step above holds since no user has equal opinion values about any two candidates at time $t$ by our assumption.
\end{proof}
\vspace{-2mm}
\subsection{Practical Effectiveness of our Bounds}
\label{sec:practical_effectiveness}
We empirically compute the ratio $\frac{F\left(S_{U}\right)}{UB\left(S_{U}\right)}$ in Equation~\ref{eq:sandwich}, since sandwich approximation ensures an approximation factor of at least $\frac{F\left(S_{U}\right)}{UB\left(S_{U}\right)}\left(1-\frac{1}{e}\right)$.
We vary the major parameter, the number of seeds ($k$), from 100 to 1000 (with gap 100); each value corresponds to a \emph{trial}.
The ratio reaches $0.7$ in 90\% of the trials;
and in about 50\% of the trials it exceeds $0.8$ for both the \revise{plurality} and \revise{Copeland} scores. This results in an empirical
approximation factor of at least $0.8(1-1/e) \approx 0.51$ in more than half of our trials. It is only once that the ratio turns
out to be below 0.5 (0.46 for the \revise{plurality} score on the {\em Twitter\_Social\_Distancing} dataset), which is the worst case we observe
empirically.
In practice, our algorithm performs much better than several baselines (\S \ref{sec:exp}). 

\begin{figure}[t!]
	\vspace{-1mm}
	\centering
	\includegraphics[scale=0.2,angle=0]{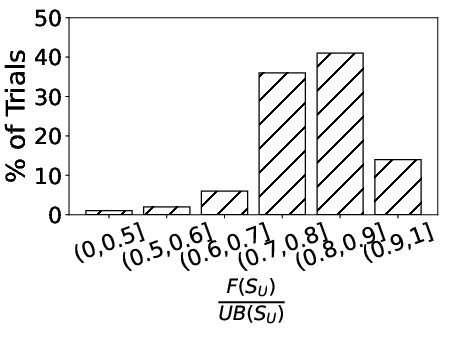}
	\includegraphics[scale=0.2,angle=0]{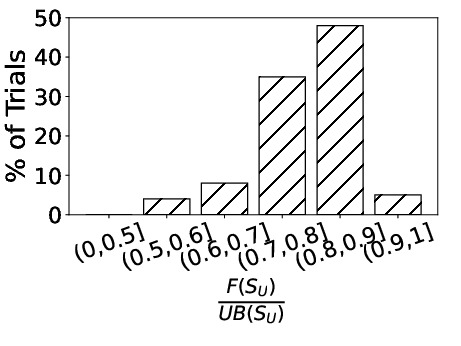}
 	\vspace{-3mm}
	\caption{\small Empirical study on the Sandwich approximation factor, with the \revise{plurality} score on the {\sf Twitter\_Social\_Distancing} dataset (left) and the \revise{Copeland} score on the {\sf Yelp} dataset (right). $100$ trials (runs) of the method were performed for each dataset-score pair.}
	\label{fig:bound}
 	\vspace{-4mm}
\end{figure}

The greedy algorithm for finding $S_U$ is much faster than that for computing $S_F$ (Algorithm \ref{alg:whole}), since it does not involve any expensive opinion computation. Meanwhile, $S_L$ is obtained via greedily maximizing the cumulative score on $V_q^{(t)}$ (Definition \ref{def:lb_best}), which is also much faster, since {\bf (1)} $\left| V_q^{(t)} \right| \ll |V|$ in practice, and {\bf (2)} the greedy algorithm for the cumulative score is much faster than that for the \revise{plurality} score (\S~\ref{sec:varyK}). Empirically, the running times for finding $S_U$ and $S_L$ are about 2\% and 5\%, respectively, of that for finding $S_F$.
\eat{\note[Laks]{I was not able to see where in \S~\ref{sec:exp} this finding is reported. Given that they all use greedy algorithms, what makes finding $S_U$ and $S_L$ so much faster than finding $S_F$? }}


\vspace{-0.4mm}
\spara{Remarks.} 
The notions of curvature, submodularity ratio, and submodularity index
have been exploited for establishing approximation guarantees of the greedy algorithm applied to the cardinality constrained maximization of non-submodular, non-decreasing
set functions \cite{BianB0T17,DasK18,ZhouS16}. For instance, when $F$ has submodularity ratio $\psi$, the greedy
algorithm for maximizing $F$ provides a $(1-e^{-\psi})$-approximation, where the submodularity ratio measures how
``close'' $F$ is to being submodular \cite{BianB0T17,DasK18}. Formally,
the submodularity ratio of $F$ is the largest scalar $\psi$ such that, for all $\Omega, S \subseteq V$,
\begin{small}
\begin{equation}
\label{eq:sr}
    \sum_{\omega \in \Omega \setminus S} [F(S \cup \{\omega\}) - F(S)] \geq \psi [F(S \cup \Omega) - F(S)]
\end{equation}
\end{small}
When the submodularity ratio of a function $F$ is 0, the approximation guarantee degrades and goes to 0 in limit. 
Unfortunately, as we show, there is a problem instance for which the submodularity ratio becomes 0 when $F$ denotes the \revise{plurality} score. 
\revise{
Consider the same running example as in Figure \ref{fig:example_pre}. From Table \ref{tab:example}, we have the following.
\begin{small}
\begin{align*}
    &B_2^{(1)} = [0.35, 0.75, 0.78, 0.9] \\
    &B_1^{(1)}[\emptyset] = [0.4, 0.8, 0.6, 0.75] \implies F(\emptyset) = 2 \\
    &B_1^{(1)}[\{1\}] = [1, 0.8, 0.75, 0.75] \implies F(\{1\}) = 2 \\
    &B_1^{(1)}[\{2\}] = [0.4, 1, 0.65, 0.75] \implies F(\{2\}) = 2 \\
    &B_1^{(1)}[\{1, 2\}] = [1, 1, 0.8, 0.75] \implies F(\{1, 2\}) = 3
\end{align*}
\end{small}
Clearly, in Equation \ref{eq:sr}, $\psi = 0$ for $S = \emptyset$ and $\Omega = \{1, 2\}$; hence, the submodularity ratio of $F$ is 0.}
The sandwich approximation method that we employ provides an alternative direction to derive an approximation guarantee
for the greedy algorithm applied to the cardinality constrained maximization of non-submodular, non-decreasing
set functions.

\section{\revise{Efficient} Random Walk-based Estimation}
\label{sec:random_walk}
\vspace{-1mm}
The greedy framework (Algorithm~\ref{alg:whole})
has time complexity $\bigO(ktmn)$ via inefficient direct matrix-vector multiplication (\S~\ref{sec:overview} and 
\S~\ref{sec:sandwich}).
\revise{In this section, we first introduce a random walk interpretation for the opinion value of any node at any timestamp (\S~\ref{sec:rand_walk_interpret}). Next, as our novel contribution, an {\em efficient} random walk-based method with a {\em smart truncation strategy} is designed to estimate the marginal gain (\S~\ref{sec:random_walk_alg}). Finally, we establish novel {\em quality guarantees}
of the proposed method for {\em all} our voting-based scores (\S~\ref{sec:rw_quality}).}
\vspace{-2mm}
\subsection{Random Walk Interpretation}
\label{sec:rand_walk_interpret}
\vspace{-1mm}
%
As the influence matrix $W_q$ is column-stochastic for any candidate $c_q$, the probabilities on the outgoing edges of each node add up to $1$ in the reverse graph.\footnote{\scriptsize{The reverse graph has the same set of nodes and edges, but with edge directions reversed. The weights on the edges, now
interpreted as probabilities, remain the same.}}
This enables
the following {\em Direct Generation} of $t$-step random walks with seed set $S$.
{\bf (1)} Each node $v$ in the reverse graph has a termination probability $d_{qv}[S] \in [0,1]$ that is equivalent to
its stubbornness (recall that $d_{qv}[S] = 1$ if $v\in S$ and $d_{qv}[S]=d_{qv}$ otherwise),
and the probabilities on its outgoing edges add up to $1$.
{\bf (2)} If a random walk is at node $v$ in the current step, it terminates at $v$ with probability $d_{qv}[S]$.
Otherwise, it proceeds to an out-neighbor of $v$ chosen according to the edge probabilities.
{\bf (3)} From a start node $u$, we repeat step (2) to generate a random walk.
It terminates when step (2) has been conducted $t$ times, or the walk stops early (i.e., before reaching length $t$)
at a node due to the termination probability.
{\bf (4)} If the random walk terminates at node $v$, then the node $u$ at time $t$ adopts the
initial opinion of node $v$: $X_{qu}^{(t)}[S] = b_{qv}^{(0)}[S]$.
We show that the expected opinion value of any node $u$ at any time $t$ when serving as the start node of the above reverse
random walk is the same as the exact opinion value of $u$ at time $t$ computed by matrix-vector multiplication.\footnote{\scriptsize{Random walks for approximating matrix-vector multiplication are employed in \cite{CohenL99} and in PageRank \cite{ALNO07}, albeit with subtle differences from how they are applied in our work. While \cite{ALNO07,CohenL99} require a one-time estimation of the vector entries, we need the same for $k$ iterations of the greedy algorithm, and we do so in an efficient way. Also, the quality guarantees required are different from \cite{ALNO07, CohenL99} and specific to each voting-based score. For more details, we refer to Appendix~\ref{sec:rw_pr}.}}
\begin{algorithm}[tb!]
	\caption{\small Random Walk-Based \texttt{Greedy} Seed Selection}
	\scriptsize
	\begin{algorithmic}[1]
		\REQUIRE Graph $\mathcal{G}=(V,E)$, initial opinion matrix $B^{(0)}$, influence matrix $W_i$ and stubbornness matrix $D_i$ for each candidate $c_i$, target candidate $c_q$, seed set size budget $k$, time horizon $t$, and a scoring function $F$
		\ENSURE Seed set $S^*$ of size $k$
		\FORALL {$v \in V$}
		\FOR {$j = 1$ \TO $\lambda_v$}
		\STATE Generate a $t$-step reverse random walk starting from $v$
		\ENDFOR
		\ENDFOR
		\STATE $S^* \gets \emptyset$
		\FOR {$i = 1$ \TO $k$}
		\STATE $u \gets \argmax_{v \in V \setminus S^*} \left[ \widehat{F} \left( \widehat{B}^{(t)}[S^* \cup \{v\}], c_q \right) - \widehat{F} \left( \widehat{B}^{(t)}[S^*], c_q \right) \right]$
		\STATE $S^* \gets S^* \cup \{u\}$
		\STATE Truncate all walks containing $u$ at $u$
		\ENDFOR
		\RETURN $S^*$
	\end{algorithmic}
	\label{alg:rw}
\end{algorithm}

\begin{theor}
\label{th:expect_direct}
For any $t \geq 0$ and seed set $S$, the expected value of the estimated opinion $X^{(t)}_{qu}[S]$ of any user $u$ about any candidate $c_q$ at timestamp $t$ using a $t$-step reverse random walk by {\em Direct Generation} is equal to the exact opinion of $u$ about $c_q$ at timestamp $t$ according to the FJ model. Formally,
\begin{small}
\begin{align}
    \mathbbm{E} \left[ X^{(t)}_{qu}[S] \right] = b^{(t)}_{qu}[S]
\end{align}
\end{small}
\end{theor}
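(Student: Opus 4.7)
The plan is to prove the statement by induction on the random walk length $t$, exploiting the close correspondence between the FJ update equation and the recursive structure of reverse random walks under \emph{Direct Generation}. Since $W_q$ is column-stochastic, for each node $u$ the quantities $\{w_{ju}\}_j$ form a valid probability distribution over $u$'s out-neighbors in the reverse graph, which lets every step of the walk be interpreted probabilistically exactly as prescribed by one application of the FJ recursion.

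For the base case $t = 0$, a $0$-step walk starting at $u$ makes no moves and returns $b^{(0)}_{qu}[S]$ deterministically, so $\mathbb{E}[X^{(0)}_{qu}[S]] = b^{(0)}_{qu}[S]$, matching the FJ initial condition. For the inductive step, assume the claim holds for walks of length $t$ from every starting node, and consider a $(t{+}1)$-step walk from $u$. Conditioning on the first action: with probability $d_{qu}[S]$ the walk terminates at $u$ and contributes $b^{(0)}_{qu}[S]$; otherwise, with probability $(1 - d_{qu}[S])$ it transitions to some node $j$ with probability $w_{ju}$ and then behaves as a fresh $t$-step walk from $j$. Taking expectations and applying the induction hypothesis gives
\begin{small}
\begin{align*}
\mathbb{E}\bigl[X^{(t+1)}_{qu}[S]\bigr] &= d_{qu}[S]\, b^{(0)}_{qu}[S] + (1 - d_{qu}[S]) \sum_{j} w_{ju}\, \mathbb{E}\bigl[X^{(t)}_{qj}[S]\bigr] \\
&= d_{qu}[S]\, b^{(0)}_{qu}[S] + (1 - d_{qu}[S]) \sum_{j} w_{ju}\, b^{(t)}_{qj}[S],
\end{align*}
\end{small}
which is exactly the $u$-th entry of the FJ update $B^{(t+1)}_q[S] = B^{(t)}_q[S]\, W_q (I - D_q[S]) + B^{(0)}_q[S]\, D_q[S]$ (Eq.~\ref{eq:fj}), i.e., $b^{(t+1)}_{qu}[S]$, closing the induction.

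The only subtlety (and the main thing to verify carefully rather than a genuine obstacle) is handling the seed-dependent quantities $d_{qu}[S]$ and $b^{(0)}_{qu}[S]$ uniformly for both $u \in S$ and $u \notin S$. When $u \in S$ we have $d_{qu}[S] = 1$ and $b^{(0)}_{qu}[S] = 1$, so the walk terminates immediately at $u$ and returns $1$, which matches $b^{(t+1)}_{qu}[S] = 1$ at every timestamp; when $u \notin S$ the values reduce to the original $d_{qu}$ and $b^{(0)}_{qu}$, so the argument above applies directly. Apart from this case split, the proof is a clean translation between the deterministic matrix–vector recursion and the law of total expectation for the first step of the walk, and no further tail or concentration argument is required for this particular claim (those will be needed later when bounding the number of walks in \S~\ref{sec:rw_quality}).
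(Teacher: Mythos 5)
Your proof is correct and takes essentially the same approach as the paper's: induction on $t$, with the inductive step obtained by conditioning on the first move of the walk (termination at $u$ with probability $d_{qu}[S]$ versus transition to a neighbor $j$ with probability $w_{ju}$), applying the induction hypothesis, and matching the result to the FJ recursion. The paper writes the step via the distribution over end nodes and then rearranges sums, but that is the same law-of-total-expectation computation; your explicit $u \in S$ case split is a minor extra care that the paper leaves implicit in the $[S]$ notation.
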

%
\begin{proof} 
We prove by induction on $t$. The base case ($t=0$) is trivial, since each node takes its initial opinion.
Next, assuming that the statement is true at timestamp $t$, we prove that it is true at timestamp $t+1$.
Let $\Pr \left( u \overset{t}{\rightsquigarrow} v \right)$ denote the probability that
a $t$-step reverse random walk starting from $u$ ends at $v$. Considering any $(t+1)$-step
reverse random walk from any node $u$, we have:
\begin{footnotesize}
\begin{align*}
    & \mathbbm{E}\left[X^{(t+1)}_{qu}[S]\right] = \sum_{v\in V} b^{(0)}_{qv}[S] \times \Pr \left( u \overset{t+1}{\rightsquigarrow} v \right) \\
    &= \left[\sum_{v\in V} b^{(0)}_{qv}[S] \left( 1 - d_{qu}[S] \right) \sum_{y\in V} w_{qyu} \times \Pr \left( y \overset{t}{\rightsquigarrow} v \right)\right] + b^{(0)}_{qu}[S] d_{qu}[S] \\
    &= \left[\left( 1 - d_{qu}[S] \right) \sum_{y\in V} w_{qyu} \sum_{v\in V} b^{(0)}_{qv}[S] \times \Pr \left( y \overset{t}{\rightsquigarrow} v \right) \right]+ b^{(0)}_{qu}[S] d_{qu}[S] \\
    &= \left( 1 - d_{qu}[S] \right) \sum_{y\in V} w_{qyu} \mathbbm{E}\left[X^{(t)}_{qy}[S]\right] + b^{(0)}_{qu}[S] d_{qu}[S] \\
    &= \left( 1 - d_{qu}[S] \right) \sum_{y\in V} b^{(t)}_{qy}[S] w_{qyu} + b^{(0)}_{qu}[S]d_{qu}[S] = b^{(t+1)}_{qu}[S]
\end{align*}
\end{footnotesize}
\end{proof}

\vspace{-4mm}
\subsection{The Algorithmic Workflow}
\label{sec:random_walk_alg}
%
We estimate the opinion of every user $v$
about any candidate $c_q$ at time $t$ by generating $\lambda_v$ independent $t$-step reverse
random walks starting from $v$.
The estimated opinion of node $v$
about candidate $c_q$ is computed as the average of the initial opinions of the
end nodes across all $\lambda_v$ random walks.
%
The seed set is generated greedily as in Algorithm~\ref{alg:whole}.
In Line~\ref{line:max_gain}, we select the best new seed based on the maximum \emph{estimated} marginal gain instead
of the maximum \emph{actual} marginal gain.
In each iteration, given the previously selected seed set $S^*$ for $c_q$, we need to compute
the marginal gain of including a candidate seed node $w$ into $S^*$, and hence the estimated opinions with the new seed set.
The {\em Direct Generation} approach would require the generation of new walks with the new seed set, which would be expensive.
Thus, we use an alternative {\em Post-Generation Truncation} technique as follows:
Before running Algorithm~\ref{alg:whole}, we generate (only once) $\lambda_v$ random walks from each node $v$ using the same approach as in
\S~\ref{sec:rand_walk_interpret} but with the empty seed set.
Thereafter, for any given seed set $S$, the estimated opinion $Y_{qv}^{(t)}[S]$ for a given walk
is the initial opinion of the end node of the walk truncated at the first occurrence of a node from $S$.
The overall estimated opinion $\widehat{b}_{qv}^{(t)}[S]$ of $v$ is the average of $Y_{qv}^{(t)}[S]$ across all $\lambda_v$ walks from $v$.
\revise{The overall pseudocode is given in Algorithm \ref{alg:rw}.}
%
%
The above approach is clearly more efficient since it does not involve regenerating random walks for each seed set.
It also does not introduce any further error, since the estimates $Y_{qv}^{(t)}[S]$ satisfy the same property as $X_{qv}^{(t)}[S]$ in Theorem \ref{th:expect_direct}, as shown below.

\eat{
The seed set is generated greedily as in Algorithm~\ref{alg:whole}.
In Line~\ref{line:max_gain}, we select the best new seed based on the maximum \textit{estimated} marginal gain instead
of the actual marginal gain. In each iteration, given the initial random walks for
candidate $c_q$ and the previously selected seed set $S$, the algorithm computes
the marginal gain of including a candidate seed node $w$ into $S$ as follows:
If a random walk starting at node $v$ and ending at node $u$ contains $w$,
it will now be truncated at $w$. Since the initial opinion of $w$ will be set to
$1$ if it is a new seed, the marginal gain in $b_{qv}^{(t)}$ due to this random walk is
$\frac{1-b_{qu}^{(0)}}{\lambda_v}$.
We find all the random walks containing $w$ and compute the estimated marginal gain
of including $w$ into $S$. The best node is determined by the maximum estimated marginal gain.
All the walks containing this best node (which will be selected as the seed in the current round)
will be truncated at that node for the subsequent iterations, since its stubbornness and initial opinion will be set to 1. The estimation obtained by the truncation approach\footnote{This technique is named after {\em Post-Generation Truncation}. The formal definition can be found in our extended version \cite{extended}.} satisfies the unbiased property, and hence does not introduce any additional error, as shown below.

We now discuss the efficiency of the aforementioned truncation method and the rationale behind it.
Intuitively, when a node $w$ is selected as a seed node, its initial opinion and stubbornness values
are both set to $1$.
By definition, any random walk which reaches $w$ will terminate there. Thus, we can obtain random walks
according to the new stubbornness values by simply truncating the existing walks at $w$, instead of
regenerating random walks for the current iteration.
We now formally show that the estimate obtained by the truncation approach also satisfies the unbiasedness property, and hence does not introduce any additional error.
}
\vspace{-2mm}
\begin{theor}
\label{th:expect_post}
For any $t \geq 0$, any node $u$ and any seed set $S$, let $Y^{(t)}_{qu}[S]$ denote the estimated opinion of $u$ about $c_q$ at time $t$ by the {\em Post-Generation Truncation} approach, i.e., the initial opinion of the end node of the resultant random walk after initially sampling a $t$-step reverse random walk starting from $u$ without any seed and then truncating the walk at the first occurrence of a node in $S$. Then
\vspace{-2mm}
\begin{equation}
    \small \mathbbm{E} \left[ Y^{(t)}_{qu}[S] \right] = b^{(t)}_{qu}[S]
\end{equation}
\end{theor}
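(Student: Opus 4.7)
The plan is to deduce Theorem \ref{th:expect_post} from Theorem \ref{th:expect_direct} by establishing that the Post-Generation Truncation and Direct Generation procedures induce the same distribution over endpoints of the reverse walk, and hence the same expected initial-opinion value. I would proceed by induction on $t$, paralleling the structure of the proof of Theorem \ref{th:expect_direct}, and drawing on the FJ recursion in Equation \ref{eq:fj}.

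The base case $t=0$ is immediate: both procedures return the initial opinion at $u$ itself. For the inductive step, first dispose of the case $u \in S$: truncation of any walk starting at $u$ occurs at step $0$, so $Y^{(t+1)}_{qu}[S] = b^{(0)}_{qu}[S] = 1$, which matches $b^{(t+1)}_{qu}[S]$ because $u$ is fully stubborn with initial opinion $1$ once added to $S$. When $u \notin S$, the stubbornness at $u$ is unchanged by the seed set, i.e., $d_{qu}[S] = d_{qu}$, so the first step of the walk has identical behavior under the two procedures: with probability $d_{qu}$ the walk terminates at $u$, and otherwise it transitions to an out-neighbor $y$ with probability $(1 - d_{qu}) w_{qyu}$. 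Since $u \notin S$, truncating the full $(t+1)$-step walk at the first visit to $S$ is equivalent to taking this first step and then truncating the remaining (unseeded) $t$-step walk from $y$ at the first visit to $S$. Applying the inductive hypothesis, $\mathbbm{E}\bigl[Y^{(t)}_{qy}[S]\bigr] = b^{(t)}_{qy}[S]$, and substituting into the first-step decomposition reproduces exactly the FJ update, yielding $\mathbbm{E}\bigl[Y^{(t+1)}_{qu}[S]\bigr] = b^{(t+1)}_{qu}[S]$.

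The main obstacle, and the only subtlety over simply reusing the proof of Theorem \ref{th:expect_direct}, is justifying that the two procedures are distributionally equivalent at nodes in $S$. In Post-Generation Truncation, a node $w \in S$ retains its original stubbornness $d_{qw}$, so the underlying walk may proceed beyond $w$; but any such continuation is erased by truncation at the first $S$-visit. Direct Generation, by contrast, forces termination at $w$ with probability one. Both mechanisms yield endpoint $w$ on the event that the walk reaches $w$ before any other node of $S$, and they agree path-by-path on walks that never visit $S$. This coupling of the two walk measures at the endpoint is what makes the inductive decomposition go through, and it crucially leverages the fact that the only stubbornness values altered by the seed set are at nodes in $S$ themselves, so transitions at every non-$S$ node encountered in the walk are identical in both procedures.
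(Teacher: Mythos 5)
Your proposal is correct and follows essentially the same route as the paper's proof: induction on $t$, splitting the inductive step into the cases $u \in S$ and $u \notin S$, using the first-step decomposition (terminate at $u$ with probability $d_{qu}$, else transition to $y$ with probability $w_{qyu}$), noting that truncation of the $(t+1)$-step walk reduces to truncation of the residual $t$-step walk from $y$ when $u \notin S$, and closing with the inductive hypothesis and the FJ recursion. Your closing remarks on the endpoint-level coupling between Direct Generation and Post-Generation Truncation are a useful clarification but do not constitute a different argument.
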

\vspace{-3mm}
\begin{proof} 
We prove by induction on the time horizon $t$. When $t = 0$, the walk consists only of $u$, and hence $Y^{(t)}_{qu}[S] = b^{(0)}_{qu}[S] = b^{(t)}_{qu}[S]$.
Now assume that the statement is true for time horizon $t$. Consider an execution (random walk generation followed by truncation) with time horizon $t+1$. If $u \in S$, the resultant walk (after truncation) will consist only of $u$ with initial opinion $1$, and hence $\mathbbm{E} \left[ Y^{(t+1)}_{qu}[S] \right] = 1 = b^{(t+1)}_{qu}[S]$. Otherwise, we have the following. With probability $d_{qu}$, $u$ is stubborn, i.e., the walk terminates at $u$ during generation; thus, the end node of the resultant walk will be $u$. With probability $1 - d_{qu}$, $u$ is not stubborn, i.e., during generation, the walk transitions to a random node $y$ (with probability $w_{qyu}$) from which a $t$-step walk is generated; thus, the end node of the resultant walk will be the end node of the truncated $t$-step walk from $y$. Since $u \notin S$, $d_{qu} = d_{qu}[S]$ and $b^{(0)}_{qu} = b^{(0)}_{qu}[S]$. Then
\vspace{-2mm}
\begin{small}
\begin{align*}
    &\mathbbm{E} \left[ Y^{(t+1)}_{qu}[S] \right] = \left( 1 - d_{qu} \right) \mathbbm{E} \left[ Y^{(t+1)}_{qu}[S] \Big| u \text{ was not stubborn} \right] \\
    &\qquad\qquad\qquad\qquad + d_{qu} \mathbbm{E} \left[ Y^{(t+1)}_{qu}[S] \Big| u \text{ was stubborn} \right] \\
    &= \left( 1 - d_{qu} \right) \sum_{y \in V} w_{qyu} \mathbbm{E} \left[ Y^{(t+1)}_{qu}[S] \Big| u \text{ transitioned to } y \right] + d_{qu} b^{(0)}_{qu} \\
    &= \left( 1 - d_{qu} \right) \sum_{y \in V} w_{qyu} \mathbbm{E} \left[ Y^{(t)}_{qy}[S] \right] + d_{qu} b^{(0)}_{qu} \\
    &= \left( 1 - d_{qu}[S] \right) \sum_{y \in V} w_{qyu} b^{(t)}_{qy}[S] + d_{qu}[S] b^{(0)}_{qu}[S] = b^{(t+1)}_{qu}[S]
\end{align*}
\end{small}
\end{proof}

\vspace{-0.4mm}
\spara{Time Complexity.} For the target candidate,
the generation of $t$-step reverse random walks starting from all nodes
takes $\bigO \left( t \sum_{v \in V} \lambda_v \right)$ time. First,
we analyze the time complexity of finding the top-$k$ seed nodes for the cumulative score
via random walk-based estimation. In each iteration of the greedy algorithm,
we estimate, for every node in all generated random walks, the new score
that would result if that node is added as a seed. Then, we add the node $w$
resulting in the largest estimated marginal gain as the new seed node.
Since the candidate seeds are only those nodes which are present in the walks, we can compute the marginal gains for all of them with one scan over all walks as follows. We initialize the marginal gain for each node to $0$. During the scan, when we encounter a node in a walk, we increase the marginal gain for that node by computing the increase in the estimated opinion of the start node $v$ of the walk, which is 1 minus the initial opinion of the end node of the walk, divided by $\lambda_v$.
This part scans all the random walks once and takes
$\bigO \left( t \sum_{v \in V} \lambda_v \right)$ time.
Next, all walks containing $w$ are truncated at $w$ for the subsequent iterations.
This step also takes $\bigO \left( t\sum_{v \in V} \lambda_v \right)$ time.
As the entire process is repeated $k$ times (to find the
top-$k$ seeds), the running time of the seed selection phase is
$\bigO \left( k t \sum_{v \in V} \lambda_v \right)$.

For the \revise{plurality} score variants and the \revise{Copeland} score, we additionally need to compute the exact opinion
values of each user about all other candidates at time $t$
via direct matrix-vector multiplication,
taking an additional $\bigO \left( (r-1)tm \right)$ time.
Thus, the overall time complexity for these scores is $\bigO\left(kt\sum_{v \in V} \lambda_v + (r-1)tm \right)$. Practically, thanks to the sparseness of the matrices, the dominant term is the first one due to the seed selection phase.
%
\vspace{-1mm}
\subsection{Accuracy Guarantees}
\label{sec:rw_quality}
The quality of the estimated opinions
depends on $\lambda_v$, i.e., the number of reverse random walks from $v$.

\vspace{-0.6mm}
\spara{Cumulative Score.}
The cumulative score aggregates the opinion values of all users about a
target candidate $c_q$. We provide a probabilistic accuracy guarantee about the estimated opinion.
\begin{theor}
\label{th:cumu_rws}
Given $\delta, \rho > 0$, if, for any node $v$, $\lambda_v$ satisfies
\begin{small}
\begin{equation}
\label{eq:num_walks}
    \lambda_v \geq \frac{1}{2\delta^2}\ln\left(\frac{2}{1-\rho}\right)
\end{equation}
\end{small}
then the following holds with probability at least $\rho$:
\begin{small}
\begin{equation}
    \left|\widehat{b}^{(t)}_{qv}[S] - b^{(t)}_{qv}[S]\right| < \delta
\end{equation}
\end{small}
\end{theor}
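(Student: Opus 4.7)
The plan is to recognize that $\widehat{b}^{(t)}_{qv}[S]$ is, by construction, the empirical average of $\lambda_v$ independent samples $Y^{(t)}_{qv,1}[S], \dots, Y^{(t)}_{qv,\lambda_v}[S]$, one from each $t$-step reverse random walk starting at $v$ (with post-generation truncation at $S$). Each sample equals the initial opinion of the end node of the truncated walk, and since all initial opinions lie in $[0,1]$ and seeds are assigned opinion $1$, every $Y^{(t)}_{qv,j}[S]$ is a bounded random variable in $[0,1]$.

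Next, I would invoke the previously established unbiasedness (Theorem on Post-Generation Truncation) to conclude that $\mathbb{E}[Y^{(t)}_{qv,j}[S]] = b^{(t)}_{qv}[S]$ for every $j$, and hence $\mathbb{E}[\widehat{b}^{(t)}_{qv}[S]] = b^{(t)}_{qv}[S]$ by linearity of expectation. Independence of the walks (each walk is generated from scratch, using only the edge probabilities and termination probabilities) lets me treat the $\lambda_v$ samples as i.i.d.

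The core step is then a direct application of Hoeffding's inequality for bounded i.i.d. variables: for any $\delta>0$,
\begin{equation*}
\Pr\!\left[\,\bigl|\widehat{b}^{(t)}_{qv}[S] - b^{(t)}_{qv}[S]\bigr| \geq \delta\,\right] \;\leq\; 2\exp\!\left(-2\lambda_v \delta^2\right).
\end{equation*}
Requiring the right-hand side to be at most $1-\rho$ and solving for $\lambda_v$ yields exactly the bound $\lambda_v \geq \frac{1}{2\delta^2}\ln\!\bigl(\tfrac{2}{1-\rho}\bigr)$ stated in the theorem, at which point the desired probability bound follows by taking complements.

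There is no real obstacle here beyond invoking Hoeffding cleanly; the only subtlety worth flagging is confirming the independence of the $\lambda_v$ walks under Post-Generation Truncation (each walk is sampled with the original stubbornness values, and truncation is a deterministic post-processing step per walk, so truncated walks remain mutually independent) and that the boundedness range $[0,1]$ gives the $(b-a)^2=1$ factor in the Hoeffding exponent. Everything else is routine algebra on the Hoeffding tail bound.
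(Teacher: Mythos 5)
Your proposal is correct and follows essentially the same route as the paper's proof: identify $\widehat{b}^{(t)}_{qv}[S]$ as the mean of $\lambda_v$ i.i.d.\ $[0,1]$-valued estimates, use the unbiasedness of the Post-Generation Truncation estimator together with linearity of expectation to get $\mathbbm{E}[\widehat{b}^{(t)}_{qv}[S]] = b^{(t)}_{qv}[S]$, and then apply Hoeffding's inequality to obtain the tail bound $2\exp(-2\lambda_v\delta^2) \leq 1-\rho$, which rearranges to the stated lower bound on $\lambda_v$. The additional remarks on independence of the walks and the $[0,1]$ range are sound and merely make explicit what the paper leaves implicit.
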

\begin{proof} 
As mentioned in \S~\ref{sec:random_walk_alg}, $\widehat{b}_{qv}^{(t)}[S]$ is the average of $Y_{qv}^{(t)}[S]$ across all $\lambda_v$ walks from $v$. From Theorem \ref{th:expect_post} and the linearity of expectation, $\mathbbm{E} \left[ \widehat{b}^{(t)}_{qv} [S] \right] = b^{(t)}_{qv}[S]$. From Hoeffding's inequality,
\begin{small}
\begin{equation*}
    \Pr \left( \left|\widehat{b}^{(t)}_{qv}[S] - b^{(t)}_{qv}[S]\right| < \delta \right) \geq 1 - 2 \exp \left( - 2\lambda_v\delta^2 \right) \geq \rho
\end{equation*}
\end{small}
\end{proof}

\vspace{-0.6mm}
\spara{\revise{Plurality} Score Variants}. As in \S~\ref{sec:b_best}, the following analysis is shown for the positional-$p$-approval score, and hence also works for special cases, e.g., the \revise{plurality} and $p$-approval scores. Each user contributes a value which is equal to the weight of the rank of $c_q$ in her preference ordering if the rank is at most $p$, and $0$ otherwise (Equation \ref{eq:weighted_p_rank}).
Theorem~\ref{th:bound_br_rw} ensures that, with a high probability, our approach correctly estimates this contributed value.
\begin{theor}
\label{th:bound_br_rw}
Given a user $v$ and a seed set $S$ for candidate $c_q$, let $\gamma_v[S] = \min_{c_p \in C \setminus \{c_q\}} \left| b^{(t)}_{pv} - b^{(t)}_{qv}[S] \right|$, $\lambda_v \geq \frac{1}{2\left(\gamma_v[S]\right)^2} \ln \left( \frac{2}{1 - \rho} \right)$. Assume $\gamma_v[S]\neq 0$.
%
Then, with probability at least $\rho$, the following holds:
\begin{scriptsize}
\begin{equation}
\label{eq:br_rw_acc}
    \omega \left[ \beta \left( \widehat{b}_{qv}^{(t)}[S] \right) \right] \cdot \mathbbm{1} \left[ \beta \left( \widehat{b}_{qv}^{(t)}[S] \right) \leq p \right] = \omega \left[ \beta \left( b_{qv}^{(t)}[S] \right) \right] \cdot \mathbbm{1} \left[ \beta \left( b_{qv}^{(t)}[S] \right) \leq p \right]
    \vspace{-1mm}
\end{equation}
\end{scriptsize}
\end{theor}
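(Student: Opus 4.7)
The plan is to reduce this to the concentration bound already established in Theorem \ref{th:cumu_rws} and then translate pointwise accuracy of the estimated opinion into exact agreement of the rank of $c_q$. Since the non-target candidates' opinions $b^{(t)}_{pv}$ are computed exactly (not estimated), the only source of error in $\beta(\widehat{b}^{(t)}_{qv}[S])$ is the deviation of $\widehat{b}^{(t)}_{qv}[S]$ from $b^{(t)}_{qv}[S]$. Thus, controlling this single scalar deviation by a margin strictly smaller than the distance from $b^{(t)}_{qv}[S]$ to every competitor's opinion will suffice to force the two ranks to coincide.

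First, I would instantiate Theorem \ref{th:cumu_rws} with $\delta = \gamma_v[S]$. The hypothesis on $\lambda_v$ in the current theorem is precisely what is required, so with probability at least $\rho$ we obtain
\begin{equation*}
\left|\widehat{b}^{(t)}_{qv}[S] - b^{(t)}_{qv}[S]\right| < \gamma_v[S].
\end{equation*}
I would condition on this event for the remainder of the argument.

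Next, I would argue that under this event, for every competitor $c_p \in C \setminus \{c_q\}$, the strict inequalities between $b^{(t)}_{pv}$ and $\widehat{b}^{(t)}_{qv}[S]$ go the same way as between $b^{(t)}_{pv}$ and $b^{(t)}_{qv}[S]$. Indeed, by definition of $\gamma_v[S]$ we have $|b^{(t)}_{pv} - b^{(t)}_{qv}[S]| \geq \gamma_v[S] > |\widehat{b}^{(t)}_{qv}[S] - b^{(t)}_{qv}[S]|$, which (since $\gamma_v[S] \neq 0$ by assumption so there are no ties with $c_p$) prevents $\widehat{b}^{(t)}_{qv}[S]$ from crossing $b^{(t)}_{pv}$. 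Summing the indicators appearing in the definition of $\beta$ then yields $\beta(\widehat{b}^{(t)}_{qv}[S]) = \beta(b^{(t)}_{qv}[S])$, and both the indicator $\mathbbm{1}[\beta(\cdot) \leq p]$ and the weight $\omega[\beta(\cdot)]$ take identical values on the two sides of Equation \ref{eq:br_rw_acc}.

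The main obstacle I anticipate is not analytical but conceptual bookkeeping: one must be careful that the rank function $\beta$ is evaluated against the \emph{true} opinions of all non-target candidates, so the estimation error enters $\beta$ only through its argument; once this is made explicit the rank preservation argument is immediate. A secondary subtlety is the tie case: the assumption $\gamma_v[S] \neq 0$ rules out exact ties between $b^{(t)}_{qv}[S]$ and any $b^{(t)}_{pv}$, which is what lets us conclude that the estimated opinion sits on the same side of each competitor's opinion as the true opinion, rather than landing exactly on a tie and flipping the indicator.
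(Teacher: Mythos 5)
Your proposal is correct and follows essentially the same route as the paper's own proof: the paper likewise shows that $\bigl|\widehat{b}^{(t)}_{qv}[S] - b^{(t)}_{qv}[S]\bigr| < \gamma_v[S]$ holds with probability at least $\rho$ (via a direct application of Hoeffding's inequality, which is exactly what your invocation of Theorem~\ref{th:cumu_rws} with $\delta = \gamma_v[S]$ amounts to) and then argues, competitor by competitor, that this margin prevents the estimate from crossing any $b^{(t)}_{xv}$, so the rank of $c_q$ is preserved. Your explicit remarks about the non-target opinions being exact and about $\gamma_v[S]\neq 0$ ruling out ties match the (implicit) case split in the paper's argument.
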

\begin{proof} 
To satisfy Eq. \ref{eq:br_rw_acc}, it suffices to ensure that the estimated position of candidate $c_q$ in the preference ranking of user $v$ is correct.
Clearly, this is true if the following hold.

    $\bullet$ $\forall c_x$ s.t. $b^{(t)}_{qv}[S]-b^{(t)}_{xv}> 0$ (i.e., $\gamma_v[S] \leq b^{(t)}_{qv}[S]-b^{(t)}_{xv}$):
        \begin{footnotesize}
        \begin{align*}
            & -\gamma_v[S] < \widehat{b}^{(t)}_{qv}[S] - b^{(t)}_{qv}[S] = \left(\widehat{b}^{(t)}_{qv}[S] - b^{(t)}_{xv}\right) - \left(b^{(t)}_{qv}[S] - b^{(t)}_{xv}\right) \\
            \implies& -\gamma_v[S] < \left(\widehat{b}^{(t)}_{qv}[S] - b^{(t)}_{xv}\right) - \gamma_v[S] \implies \widehat{b}^{(t)}_{qv}[S] - b^{(t)}_{xv} > 0 \\
            \implies & \text{The estimated ordering between } c_q \text{ and } c_x \text{ is correct}
        \end{align*}
        \end{footnotesize}
    $\bullet$ $\forall c_x$ s.t. $b^{(t)}_{xv} - b^{(t)}_{qv}[S]> 0$ (i.e., $\gamma_v[S] \leq b^{(t)}_{xv} - b^{(t)}_{qv}[S]$):
        \begin{footnotesize}
        \begin{align*}
            & -\gamma_v[S] < b^{(t)}_{qv}[S] - \widehat{b}^{(t)}_{qv}[S] = \left(b^{(t)}_{xv} - \widehat{b}^{(t)}_{qv}[S]\right) - \left(b^{(t)}_{xv} - b^{(t)}_{qv}[S]\right) \\
            \implies & -\gamma_v[S] < \left(b^{(t)}_{xv} - \widehat{b}^{(t)}_{qv}[S]\right) - \gamma_v[S] \implies b^{(t)}_{xv} - \widehat{b}^{(t)}_{qv}[S] > 0 \\
            \implies & \text{The estimated ordering between } c_q \text{ and } c_x \text{ is correct}
        \end{align*}
        \end{footnotesize}
From Hoeffding's inequality,
both the above points hold with probability at least $ 1 - 2 \exp \left( -2\lambda_v[\gamma_v[S]]^2 \right) \geq \rho$.
\end{proof}

In each iteration of Algorithm \ref{alg:whole}, the estimation of the opinion of user $v$ about $c_q$ involves an average over $\lambda_v$ random walks. However, the quantity $\gamma_v[S]$ in Theorem \ref{th:bound_br_rw} 
depends on the seed set $S$ for candidate $c_q$. For a given $S$, $\gamma_v[S]$ can be computed exactly via matrix-vector multiplication. But since $S$ differs from iteration to iteration (specifically, one node is added in each iteration),
a value of $\gamma_v[S]$ (and hence $\lambda_v$) that works well in one iteration may not work well in another iteration. As we generate random walks right in the beginning and reuse them for the subsequent iterations, 
a value of $\gamma_v[S]$ that works well in all iterations is
\begin{small}
\begin{equation}
\label{eq:gamma}
    \gamma_v^* = \min_{S \subseteq V \, : \, |S| \leq k} \gamma_v[S]
\end{equation}
\end{small}
However, efficiently computing the minimum over all seed sets $S$ of size at most $k$ is challenging. Thus, we estimate it heuristically using a greedy approach. Starting with $S = \emptyset$, we first estimate the opinion of user $v$ about $c_q$ by averaging over $\alpha$ random walks; $\alpha$ could, for example, be set to $\frac{1}{2\delta^2} \ln \left(\frac{2}{1-\rho}\right)$ in order to guarantee that, with probability at least $\rho$, each estimate differs from the true value by at most $\delta$. Once these estimates are found, we can estimate $\gamma_v[S]$ as $\widehat{\gamma_v}[S]$. After this, we repeatedly add to $S$ that node which minimizes the new $\widehat{\gamma_v}[S]$ computed using the newly estimated opinion values. The repetition stops once $|S| = k$ or there is no decrease in $\widehat{\gamma_v}[S]$, at which point we return $\widehat{\gamma_v}[S]$ as our estimate of $\gamma_v^*$.

\spara{\revise{Copeland} Score.} This score denotes the number of candidates whom the target candidate defeats in one-on-one competitions. Thus, we need the one-on-one winner to be estimated correctly (with a high probability) using the estimated opinion values.
%
\begin{theor}
\label{th:bound_con_rw}
Given a user $v$ and a seed set $S$ for candidate $c_q$, let $\gamma_v[S] = \min_{c_p \in C \setminus \{c_q\}} \left| b^{(t)}_{pv} - b^{(t)}_{qv}[S] \right|$. Suppose $\gamma_v[S] \neq 0$ and $\lambda_v \geq \frac{1}{2\left(\gamma_v[S]\right)^2} \ln \left( \frac{1}{1 - \rho} \right)$.
Then the following holds with probability at least $\rho$ for any $c_x \neq c_q$.
\begin{small}
\begin{equation}
\label{eq:con_rw_acc}
    \mathbbm{1} \left[ \widehat{b}^{(t)}_{qv}[S] > b^{(t)}_{xv} \right] = \mathbbm{1} \left[ b^{(t)}_{qv}[S] > b^{(t)}_{xv} \right]
\end{equation}
\end{small}
\end{theor}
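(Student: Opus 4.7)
My plan is to closely mirror the proof of Theorem~\ref{th:bound_br_rw}, exploiting the fact that the requirement here is weaker: we only need the single pairwise ordering between $c_q$ and the given $c_x$ to be estimated correctly, rather than the full rank of $c_q$ among all $r$ candidates. This is precisely why the sufficient sample size carries $\ln\left(\frac{1}{1-\rho}\right)$ instead of $\ln\left(\frac{2}{1-\rho}\right)$: a \emph{one-sided} Hoeffding inequality will suffice in place of the two-sided one used before.

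First, since $\gamma_v[S] \neq 0$ and by definition $\gamma_v[S] \leq \left| b^{(t)}_{xv} - b^{(t)}_{qv}[S] \right|$, exactly one of $b^{(t)}_{qv}[S] > b^{(t)}_{xv}$ or $b^{(t)}_{qv}[S] < b^{(t)}_{xv}$ holds. I would proceed by case analysis on these two possibilities. In the first case the RHS indicator in Eq.~\ref{eq:con_rw_acc} equals $1$, and using $b^{(t)}_{qv}[S] - b^{(t)}_{xv} \geq \gamma_v[S]$, a sufficient condition for the LHS indicator also to equal $1$ is $\widehat{b}^{(t)}_{qv}[S] - b^{(t)}_{qv}[S] > -\gamma_v[S]$. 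In the second case the RHS equals $0$, and an analogous reduction shows that it suffices to have $\widehat{b}^{(t)}_{qv}[S] - b^{(t)}_{qv}[S] < \gamma_v[S]$.

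Next, I would invoke Theorem~\ref{th:expect_post} to recall that $\widehat{b}^{(t)}_{qv}[S]$ is an average of $\lambda_v$ i.i.d.\ $[0,1]$-valued random variables (the initial opinions at the endpoints of the post-generation-truncated walks) whose common expectation is $b^{(t)}_{qv}[S]$. A one-sided Hoeffding inequality applied in the direction relevant to each case then bounds the failure probability by $\exp\left(-2\lambda_v \gamma_v[S]^2\right)$, which the chosen lower bound on $\lambda_v$ renders at most $1-\rho$. Combining the two cases yields the theorem.

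The main (and only minor) obstacle is a bookkeeping subtlety: the indicators use strict inequalities, so technically one must worry about the boundary event $\widehat{b}^{(t)}_{qv}[S] = b^{(t)}_{xv}$. This is handled uniformly by noting that $\gamma_v[S] > 0$ strictly, so the strict forms of the Hoeffding-controlled events $\left\{ \widehat{b}^{(t)}_{qv}[S] - b^{(t)}_{qv}[S] > -\gamma_v[S] \right\}$ and $\left\{ \widehat{b}^{(t)}_{qv}[S] - b^{(t)}_{qv}[S] < \gamma_v[S] \right\}$ already imply the strict inequalities on $\widehat{b}^{(t)}_{qv}[S]$ vs.\ $b^{(t)}_{xv}$ that the indicators require. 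Beyond this, the proof is a direct and simpler analogue of Theorem~\ref{th:bound_br_rw}.
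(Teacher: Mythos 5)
Your proposal is correct and follows essentially the same route as the paper's proof: reduce the correctness of the pairwise indicator to the one-sided deviation event $\widehat{b}^{(t)}_{qv}[S] - b^{(t)}_{qv}[S] > -\gamma_v[S]$ (or its mirror), then apply a one-sided Hoeffding bound to get failure probability at most $\exp\left(-2\lambda_v \gamma_v[S]^2\right) \leq 1-\rho$. The paper disposes of the second case by a ``without loss of generality'' where you spell out both cases explicitly, but the argument is the same.
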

\begin{proof} 
Assume, without loss of generality, that $b^{(t)}_{qv}[S] > b^{(t)}_{xv}$. Then, $\gamma_v[S] \leq b^{(t)}_{qv}[S] - b^{(t)}_{xv}$
by definition. We want to ensure that $c_q$ is correctly predicted to be ranked higher than $c_x$ for user $v$, i.e., Equation \ref{eq:con_rw_acc} is satisfied. Clearly, this is true if the following holds.
\begin{footnotesize}
\begin{align*}
    & -\gamma_v[S] < \widehat{b}^{(t)}_{qv}[S] - b^{(t)}_{qv}[S] = \left(\widehat{b}^{(t)}_{qv}[S] - b^{(t)}_{xv}\right) - \left(b^{(t)}_{qv}[S] - b^{(t)}_{xv}\right) \\
    \implies & -\gamma_v[S] < \left(\widehat{b}^{(t)}_{qv}[S] - b^{(t)}_{xv}\right) - \gamma_v[S] \implies \widehat{b}^{(t)}_{qv}[S] - b^{(t)}_{xv} > 0 \\
    \implies & \text{The estimated ordering between } c_q \text{ and } c_x \text{ is correct}
\end{align*}
\end{footnotesize}
From Hoeffding's inequality, this holds with probability at least $ 1 - \exp \left( -2\lambda_v[\gamma_v[S]]^2 \right) \geq \rho$.
\end{proof}
We estimate $\gamma_v[S]$ the same way as with the \revise{plurality} score.

\eat{
\spara{Remark.} To apply Theorems \ref{th:bound_br_rw} and \ref{th:bound_con_rw}, one needs to 
know whether $\gamma_v[S] \neq 0$ or not. 
%
To determine this efficiently in a probabilistic manner, we design a statistical hypothesis test which returns false with a high probability if $\gamma_v[S] = 0$.
This is stated formally below.
\begin{lem}
Given a seed set $S$ for $c_q$, suppose we generate $\lambda$ $t$-step reverse random walks starting from $v$.
We then compute $\widehat{b}^{(t)}_{qv}[S]$, which is  the average of the initial opinions (towards $c_q$) from the end nodes 
across all such walks, where
\begin{small}
\begin{equation}
  \lambda \geq \frac{1}{2\delta^2} \ln \left(\frac{2}{1-\rho}\right)
\end{equation}
\end{small}
Define $\widehat{\gamma_v}[S] = \min_{c_p \in C \setminus \{c_q\}} \left| b^{(t)}_{pv} - \widehat{b}^{(t)}_{qv}[S] \right|$. If $\gamma_v[S] = 0$, then $\widehat{\gamma_v}[S] < \delta$ with probability at least $\rho$.
\end{lem}
\begin{proof}
From Hoeffding's inequality, with probability at least $\rho$, $\left| b^{(t)}_{qv}[S] - \widehat{b}^{(t)}_{qv}[S] \right| < \delta$. If $\gamma_v[S] = 0$, there exists a candidate $c_x \neq c_q$ such that $b^{(t)}_{xv} = b^{(t)}_{qv}[S]$. This means, with probability at least $\rho$, $\left| b^{(t)}_{xv} - \widehat{b}^{(t)}_{qv}[S] \right| < \delta$, and hence $\widehat{\gamma_v}[S] < \delta$.
\end{proof}
}

\section{Sketch-based Estimation}
\label{sec:sketching}
Random walk-based approximation (\S~\ref{sec:random_walk}) requires the generation of reverse random walks starting from {\em all} nodes,
which could still be expensive. In this section, we further propose a {\em more efficient} reverse sketching-based {\em approximation} technique.
Notice that reverse sketching was used earlier in influence maximization (IM) \cite{BBCL14,TXS14,TSX15}.
\revise{We are the first to prove that the real-valued opinions in the FJ model can be estimated via reverse sketching and use it for opinion maximization. Moreover, our sketches (i.e., walks) are simpler and less memory consuming than the ones based on RR-sets (i.e., BFS trees), used in the classic IM.}
\vspace{-1mm}
\subsection{The Algorithmic Workflow}
\label{sec:alg_sketch}
We repeat the following $\theta$ times independently: Generate $\lambda_v$ $t$-step reverse random walks starting from a node $v$ chosen uniformly at random.
We refer to the set of generated walks as the sketch set. These sketches are similar to the tree-structured sketches used in the classic IM \cite{BBCL14,TXS14,TSX15} (see below for an intuition).
However, our sketches are walks, which are simpler and less memory consuming.
The opinions and the corresponding voting-based scores are estimated with the sketch set,
as detailed in \S~\ref{sec:sketch_quality}. The greedy seed selection workflow remains the same as in Algorithm~\ref{alg:whole}. The overall pseudocode is shown in Algorithm \ref{alg:sketch}.

\begin{algorithm}[tb!]
	\caption{\small Sketch-Based \texttt{Greedy} Seed Selection}
	\scriptsize
	\begin{algorithmic}[1]
		\REQUIRE Graph $\mathcal{G}=(V,E)$, initial opinion matrix $B^{(0)}$, influence matrix $W_i$ and stubbornness matrix $D_i$ for each candidate $c_i$, target candidate $c_q$, seed set size budget $k$, time horizon $t$, number of sketches $\theta$, and a scoring function $F$
		\ENSURE Seed set $S^*$ of size $k$
		\FOR {$j = 1$ \TO $\theta$}
		\STATE Choose a start node $v_j \in V$ uniformly at random
		\STATE Generate a $t$-step reverse random walk starting from $v_j$
		\ENDFOR
		\STATE $S^* \gets \emptyset$
		\FOR {$i = 1$ \TO $k$}
		\STATE $u \gets \argmax_{v \in V \setminus S^*} \left[ \widehat{F} \left( \widehat{B}^{(t)}[S^* \cup \{v\}], c_q \right) - \widehat{F} \left( \widehat{B}^{(t)}[S^*], c_q \right) \right]$
		\STATE $S^* \gets S^* \cup \{u\}$
		\ENDFOR
		\RETURN $S^*$
	\end{algorithmic}
	\label{alg:sketch}
\end{algorithm}


The reverse reachable (RR) sets in \cite{BBCL14,TXS14,TSX15}
are constructed by randomized BFS or DFS (sampling the incoming edges with their probabilities when we reach new nodes) from a start node, whose final status is decided by the initial statuses of the nodes in the set.
If an RR set contains a seed, the start node is said to be influenced and hence set to ``activated''.
This suggests that an RR set can alternatively be viewed as a directed tree rooted at the start node (resulting from the BFS or DFS). When a node is made a seed, the tree is truncated by removing all descendants of the seed, and then the ``activated'' status of the seed is ``pushed'' up to the start node.
In our method, we adopt a similar
technique: sampling one incoming edge when we reach new nodes in a walk, leading to a path. The final opinion of the start node of a walk is decided
by the initial opinion of the end node. If a walk contains a seed,
it is truncated at the seed, whose initial opinion (set to $1$) is ``pushed'' up to the start node.

\spara{Time Complexity.} The main difference between the sketching-based
estimation method (\S~\ref{sec:sketching}) and the random walk-based estimation method (\S~\ref{sec:random_walk})
is the total number of nodes from which we need to generate random walks.
Therefore, the running time of random walk generation is reduced to $\bigO \left( t \frac{\theta}{n} \sum_{v \in V} \lambda_v \right)$,
and the running time of the seed selection phase is reduced to
$\bigO \left( k t \frac{\theta}{n} \sum_{v \in V} \lambda_v \right)$.

For the \revise{plurality} and \revise{Copeland} scores, the computation of the opinion values of each user about all other candidates
takes an additional $\bigO \left( (r - 1) t m \right)$ time. Thus, for these scores, the overall time complexity is $\bigO \left( k t \frac{\theta}{n} \sum_{v \in V} \lambda_v + (r - 1) t m \right)$.
\subsection{Accuracy Guarantee for the Cumulative Score}
\label{sec:sketch_quality}
We discuss the number of sketches ($\theta$) required to
ensure that $\widehat{F} ( \widehat{B}^{(t)}[S], c_q )$ is a good estimate of $F ( B^{(t)}[S],c_q )$.
Let $v_j$ denote the $j^{th}$ sampled node, i.e., the start node of sketch $j$, where $j \in [1, \theta]$.

Denoting by $\widehat{b}^{(t)}_{qv_j}[S]$ the average of $Y_{qv_j}^{(t)}[S]$ (\S~\ref{sec:random_walk_alg}) across all $\lambda_{v_j}$ random walks from $v_j$, the estimated cumulative score is defined as:
\vspace{-2mm}
\begin{small}
\begin{equation}
    \widehat{F} \left( \widehat{B}^{(t)}[S],c_q \right) = \frac{n}{\theta} \sum_{j=1}^\theta \widehat{b}^{(t)}_{qv_j} [S]
\end{equation}
\end{small}
\vspace{-2mm}

Inspired by \cite{TSX15}, we aim to find a value of $\theta$ such that the true cumulative score for the seed set returned by Algorithm \ref{alg:sketch} is very close to the optimal score with a high probability. This is shown in Theorem \ref{th:theta}. Before proving this, we first show some useful lemma in this regard.

\begin{lem}
\label{lem:conc}
Let $p[S] = \frac{1}{n}F\left( B^{(t)}[S],c_q \right)$. The following inequalities hold for all values of $\theta$ and for all $\beta > 0$:
\begin{footnotesize}
\begin{align}
    \Pr \left( \sum_{i=1}^\theta \widehat{b}_{qv_i}^{(t)}[S] - \theta \cdot p[S] \geq \beta \cdot \theta \cdot p[S] \right) &\leq \exp \left( - \frac{\beta^2}{2 + \frac{2}{3}\beta} \cdot \theta \cdot p[S] \right) \label{eq:ut} \\
    \Pr \left( \sum_{i=1}^\theta \widehat{b}_{qv_i}^{(t)}[S] - \theta \cdot p[S] \leq - \beta \cdot \theta \cdot p[S] \right) &\leq \exp \left( - \frac{\beta^2}{2} \cdot \theta \cdot p[S] \right) \label{eq:lt}
\end{align}
\end{footnotesize}
\end{lem}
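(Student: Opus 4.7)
The plan is to apply the standard multiplicative Chernoff bounds to the sum $X=\sum_{i=1}^{\theta}\widehat{b}_{qv_i}^{(t)}[S]$, viewing each summand as a bounded, independent random variable with mean $p[S]$. Before invoking the bound, I would establish three preliminary facts: (a) each summand lies in $[0,1]$; (b) each summand has mean $p[S]$; and (c) the $\theta$ summands are mutually independent.

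For (a), by construction $\widehat{b}_{qv_i}^{(t)}[S]$ is the empirical average of $\lambda_{v_i}$ values $Y_{qv_i}^{(t)}[S]$, and each $Y_{qv_i}^{(t)}[S]$ is the initial opinion of the end node of a (possibly truncated) walk, which lies in $[0,1]$ since $B^{(0)}\in[0,1]^{r\times n}$ and seeding only promotes entries to $1$. For (b), conditioning on the start node $v_i$, Theorem \ref{th:expect_post} and linearity give $\mathbb{E}[\widehat{b}_{qv_i}^{(t)}[S]\mid v_i]=b_{qv_i}^{(t)}[S]$; since $v_i$ is drawn uniformly from $V$ in Algorithm \ref{alg:sketch}, the tower rule yields $\mathbb{E}[\widehat{b}_{qv_i}^{(t)}[S]]=\frac{1}{n}\sum_{v\in V}b_{qv}^{(t)}[S]=p[S]$. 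For (c), independence is immediate from the construction: each of the $\theta$ sketches uses a fresh uniform sample of the start node together with fresh independent random walks.

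With these in hand, I would invoke the standard multiplicative Chernoff bounds for sums of independent $[0,1]$-valued random variables applied to $X$ with mean $\mu=\theta\cdot p[S]$: for any $\beta>0$,
\[
\Pr(X-\mu\geq\beta\mu)\leq\exp\!\Bigl(-\tfrac{\beta^{2}\mu}{2+\tfrac{2}{3}\beta}\Bigr),\qquad \Pr(\mu-X\geq\beta\mu)\leq\exp\!\Bigl(-\tfrac{\beta^{2}\mu}{2}\Bigr).
\]
Substituting $\mu=\theta\cdot p[S]$ yields the two stated inequalities \eqref{eq:ut} and \eqref{eq:lt} verbatim.

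The main obstacle is not technical but rather bookkeeping: two layers of randomness (which start node is sampled, and then the walks generated from it) must be handled cleanly so that the summands are correctly identified as i.i.d.\ with mean $p[S]$ and bounded in $[0,1]$. Once that is done, the result reduces to citing the classical Chernoff--Hoeffding inequalities in their multiplicative form with denominator $2+\frac{2}{3}\beta$ on the upper tail (which holds for all $\beta>0$ and is the form required by the subsequent derivation of $\theta$ in Theorem \ref{th:theta}).
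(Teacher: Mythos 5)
Your proposal is correct and follows essentially the same route as the paper: both establish that the summands $\widehat{b}_{qv_i}^{(t)}[S]$ are independent, $[0,1]$-valued, and have mean $p[S]$ (via Theorem~\ref{th:expect_post} and uniform sampling of the start node), and then apply a Bernstein/Chernoff-type concentration bound with mean $\mu = \theta\cdot p[S]$. The only cosmetic difference is that the paper explicitly bounds $\mathbbm{E}\bigl[\widehat{b}_{qv_j}^{(t)}[S]^2\bigr] \leq p[S]$ and $Var \leq p[S](1-p[S])$ before plugging into the additive form of Theorem~\ref{th:conc}, whereas you cite the multiplicative forms directly; these are the same inequalities.
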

\begin{proof} 
For the independent random variables $\widehat{b}^{(t)}_{qv_j}[S], j \in [1, \theta]$, using Theorem \ref{th:expect_post},
\begin{small}
\begin{align*}
    \mathbbm{E} \left[ \widehat{b}^{(t)}_{qv_j} [S] \right] &= \sum_{v=1}^n \frac{1}{n} \mathbbm{E} \left[ \widehat{b}^{(t)}_{qv} [S] \right]
    = \frac{1}{n} \sum_{v=1}^n b^{(t)}_{qv} [S] \\
    &= \frac{1}{n} F\left( B^{(t)}[S],c_q \right) = p[S]
\end{align*}
\end{small}
Since $\widehat{b}^{(t)}_{qv_j} [S] \in [0, 1]$ and $F\left( B^{(t)}[S],c_q \right) \in [0, n]$,
\begin{small}
\begin{gather*}
    \widehat{b}^{(t)}_{qv_j} [S] - \mathbbm{E} \left[ \widehat{b}^{(t)}_{qv_j} [S] \right] \leq 1 \\
    \mathbbm{E} \left[ \widehat{b}^{(t)}_{qv_j} [S]^2 \right] \leq \mathbbm{E} \left[ \widehat{b}^{(t)}_{qv_j} [S] \right] = p[S] \\
    Var \left[ \widehat{b}^{(t)}_{qv_j} [S] \right] = \mathbbm{E} \left[ \widehat{b}^{(t)}_{qv_j} [S]^2 \right] - \left( \mathbbm{E} \left[ \widehat{b}^{(t)}_{qv_j} [S] \right] \right)^2 \leq p[S] \left( 1 - p[S] \right)
\end{gather*}
\end{small}
Following the concentration inequalities in Theorem \ref{th:conc} in Appendix \ref{sec:ineq}, we obtain Inequalities \ref{eq:ut} and \ref{eq:lt} as follows:
\begin{small}
\begin{align*}
    & \Pr \left( \sum_{i=1}^\theta \widehat{b}_{qv_i}^{(t)}[S] - \theta \cdot p[S] \geq \beta \cdot \theta \cdot p[S] \right) \\
    & \leq \exp \left( - \frac{(\beta \cdot \theta \cdot p[S])^2}{2 \left( \theta \cdot p[S] \left( 1 - p[S] \right) + \frac{\beta}{3} \cdot \theta \cdot p[S] \right)} \right) \\
    & \leq \exp \left( - \frac{\beta^2}{2 + \frac{2}{3}\beta} \cdot \theta \cdot p[S] \right) \\
    & \Pr \left( \sum_{i=1}^\theta \widehat{b}_{qv_i}^{(t)}[S] - \theta \cdot p[S] \leq - \beta \cdot \theta \cdot p[S] \right) \\
    & \leq \exp \left( - \frac{(\beta \cdot \theta \cdot p[S])^2}{2 \sum_{i=1}^\theta p[S]} \right) = \exp \left( - \frac{\beta^2}{2} \cdot \theta \cdot p[S] \right)
\end{align*}
\end{small}
\end{proof}

\begin{lem}
\label{lem:submod}
Given any node $u$ and any $t$-step reverse random walk from $u$ without any seed, let $Y^{(t)}_{qu}[S]$ denote the initial opinion of the end node of the resultant random walk after truncating the walk at the first occurrence of a seed node in $S$. Then $Y^{(t)}_{qu}[S]$ is submodular with respect to $S$.
\end{lem}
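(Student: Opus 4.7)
The key observation is that the quantity $Y^{(t)}_{qu}[S]$ depends on the fixed walk only through whether the seed set $S$ intersects the node-set of the walk. The plan is to write $Y^{(t)}_{qu}[S]$ as a simple closed form in terms of an indicator of set intersection, and then invoke standard closure properties of submodular functions.

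First, I fix an arbitrary $t$-step reverse random walk from $u$, viewed as a deterministic sequence of nodes $u=v_0,v_1,\dots,v_L$ with $L\le t$; let $V_w=\{v_0,\dots,v_L\}$. By the definition of $Y^{(t)}_{qu}[S]$, if $V_w\cap S\neq\emptyset$ then the walk is truncated at the first index $i^*=\min\{i:v_i\in S\}$, and the end node $v_{i^*}$ is a seed whose initial opinion is set to $1$; otherwise the walk is not truncated, the end node is $v_L\notin S$, and its initial opinion is the unmodified $b^{(0)}_{q,v_L}$. Thus
\begin{equation*}
Y^{(t)}_{qu}[S]\;=\;b^{(0)}_{q,v_L}\;+\;\bigl(1-b^{(0)}_{q,v_L}\bigr)\cdot\mathbbm{1}\!\left[V_w\cap S\neq\emptyset\right].
\end{equation*}
Because $b^{(0)}_{q,v_L}\in[0,1]$, the coefficient $1-b^{(0)}_{q,v_L}$ is non-negative, and $b^{(0)}_{q,v_L}$ itself is just a constant in $S$.

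Next, I would invoke the standard fact that the coverage indicator $g(S):=\mathbbm{1}[V_w\cap S\neq\emptyset]$ is submodular: for any $X\subseteq Y\subseteq V$ and $s\in V\setminus Y$, the marginal gain $g(X\cup\{s\})-g(X)$ equals $1$ precisely when $s\in V_w$ and $V_w\cap X=\emptyset$, and is $0$ otherwise; similarly for $Y$. Since $V_w\cap X=\emptyset$ is a weaker condition than $V_w\cap Y=\emptyset$ (because $X\subseteq Y$), the marginal gain at $X$ dominates the marginal gain at $Y$, giving the diminishing-returns inequality for $g$. Scaling $g$ by the non-negative constant $1-b^{(0)}_{q,v_L}$ and adding the constant $b^{(0)}_{q,v_L}$ preserves submodularity, so $Y^{(t)}_{qu}[S]$ is submodular in $S$, as claimed.

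I do not expect any serious obstacle: the result reduces to the well-known submodularity of the coverage indicator once the correct closed form for $Y^{(t)}_{qu}[S]$ is identified. The only point requiring care is a brief case analysis to confirm that the closed-form expression is valid in the corner cases $u\in S$ (truncation at step $0$) and $v_L\in S$ (truncation possibly at step $L$), both of which fall into the ``$V_w\cap S\neq\emptyset$'' branch and therefore give the value $1$, as needed.
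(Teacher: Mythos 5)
Your proof is correct, but it takes a genuinely different route from the paper's. The paper argues directly on marginal gains via a three-way case analysis on where the new seed $s$ sits relative to the truncated walks for the smaller set $P$ and the larger set $Q$: if $s$ lies in neither truncated walk both marginals are $0$; if it lies only in the one for $P$ the marginal for $Q$ is $0$ while that for $P$ is nonnegative by monotonicity; if it lies in both, the marginals are $1-Y^{(t)}_{qu}[P]\geq 1-Y^{(t)}_{qu}[Q]$, again by monotonicity. You instead observe that $Y^{(t)}_{qu}[S]$ depends on $S$ only through whether $S$ intersects the walk's node set $V_w$ (the value is $1$ whenever a truncation occurs, regardless of where, and $b^{(0)}_{q,v_L}$ otherwise), write the closed form $Y^{(t)}_{qu}[S]=b^{(0)}_{q,v_L}+\bigl(1-b^{(0)}_{q,v_L}\bigr)\mathbbm{1}\left[V_w\cap S\neq\emptyset\right]$, and reduce to the standard submodularity of the coverage indicator plus closure under nonnegative scaling and constant shifts. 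I checked the closed form against the corner cases (including $v_L\in S$, where the seed resets $b^{(0)}_{q,v_L}$ to $1$ but the formula still evaluates to $1$), and it holds. Your argument is arguably more illuminating: it exposes $Y^{(t)}_{qu}[\cdot]$ as an affine image of a coverage function, which gives monotonicity for free and makes the analogy with RR-set sketches explicit; the paper's case analysis is more elementary and self-contained but proves only the inequality it needs. Either proof is acceptable.
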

\begin{proof} 
Consider $P \subset Q \subset V$ and $s \in V \setminus Q$. It is easy to see that $Y^{(t)}_{qu}[S]$ is non-decreasing in $S$. We have the cases below.
\begin{itemize}
    \item $s$ does not belong to the truncated walk w.r.t. $P$. Then the same holds for $Q$ also, since $P \subset Q$. In that case,
    $Y^{(t)}_{qu}[P \cup \{s\}] = Y^{(t)}_{qu}[P]$ and $Y^{(t)}_{qu}[Q \cup \{s\}] = Y^{(t)}_{qu}[Q]$. Thus, 
    $Y^{(t)}_{qu}[P \cup \{s\}] - Y^{(t)}_{qu}[P] = 0 = Y^{(t)}_{qu}[Q \cup \{s\}] - Y^{(t)}_{qu}[Q]$.
    \item $s$ belongs to the truncated walk w.r.t. $P$ but not $Q$. Then $Y^{(t)}_{qu}[Q \cup \{s\}] - Y^{(t)}_{qu}[Q] = 0 \leq Y^{(t)}_{qu}[P \cup \{s\}] - Y^{(t)}_{qu}[P]$.
    \item $s$ belongs to the truncated walks w.r.t. both $P$ and $Q$. In that case, $Y^{(t)}_{qu}[P \cup \{s\}] - Y^{(t)}_{qu}[P] = 1 - Y^{(t)}_{qu}[P] \geq 1 - Y^{(t)}_{qu}[Q] = Y^{(t)}_{qu}[Q \cup \{s\}] - Y^{(t)}_{qu}[Q]$.
\end{itemize}
\end{proof}

For simplicity of notation, in what follows, we denote $F ( B^{(t)}[S], c_q )$ and $\widehat{F} ( \widehat{B}^{(t)}[S], c_q )$ by $F(S)$ and $\widehat{F}(S)$, respectively. We also use the following notations throughout the remainder of the section:
\begin{itemize}
    \item $OPT$: The maximum cumulative score for any size-$k$ seed set
    \item $S^o$: The size-$k$ seed set maximizing $F(S)$; this means $F \left(S^o\right) = OPT$.
\end{itemize}

\begin{lem}
\label{th:sketch}
Let $\delta_1 \in (0,1)$, $\epsilon_1 > 0$, and
\begin{equation}
    \small \theta_1 = \frac{2n}{OPT \cdot \epsilon_1^2} \cdot \ln \left( \frac{1}{\delta_1} \right) \label{eq:theta_1}
\end{equation}
If $\theta \geq \theta_1$, then $\widehat{F} \left( S^o \right) \geq \left( 1 - \epsilon_1 \right) \cdot OPT$ holds with probability at least $1 - \delta_1$. Formally,
\begin{equation*}
    \small \Pr \left( \widehat{F}(S^o) \geq (1 - \epsilon_1) \cdot OPT \right) \geq 1 - \delta_1
\end{equation*}
\end{lem}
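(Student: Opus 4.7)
The plan is to reduce the claim $\widehat{F}(S^o) \geq (1-\epsilon_1) \cdot OPT$ to a lower-tail deviation event for the sum $\sum_{j=1}^{\theta} \widehat{b}^{(t)}_{qv_j}[S^o]$, and then directly apply the already-proved concentration bound from Lemma \ref{lem:conc} (specifically Inequality \ref{eq:lt}). This is essentially an algebraic substitution once the right event is identified.

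First, I would rewrite the target event. By definition, $\widehat{F}(S^o) = \frac{n}{\theta} \sum_{j=1}^{\theta} \widehat{b}^{(t)}_{qv_j}[S^o]$ and $OPT = F(S^o) = n \cdot p[S^o]$. Therefore,
\begin{equation*}
\widehat{F}(S^o) \geq (1-\epsilon_1) \cdot OPT \iff \sum_{j=1}^{\theta} \widehat{b}^{(t)}_{qv_j}[S^o] \geq (1-\epsilon_1) \cdot \theta \cdot p[S^o],
\end{equation*}
so that the complementary (failure) event is exactly $\sum_{j=1}^{\theta} \widehat{b}^{(t)}_{qv_j}[S^o] - \theta \cdot p[S^o] \leq -\epsilon_1 \cdot \theta \cdot p[S^o]$.

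Next, I would instantiate Inequality \ref{eq:lt} of Lemma \ref{lem:conc} with $S = S^o$ and $\beta = \epsilon_1$ to bound the failure probability by $\exp\left( -\tfrac{\epsilon_1^2}{2} \cdot \theta \cdot p[S^o] \right)$. Substituting $p[S^o] = OPT/n$ and using $\theta \geq \theta_1 = \frac{2n}{OPT \cdot \epsilon_1^2} \ln(1/\delta_1)$ gives
\begin{equation*}
\exp\!\left(-\tfrac{\epsilon_1^2}{2} \cdot \theta \cdot \tfrac{OPT}{n}\right) \;\leq\; \exp\!\left(-\tfrac{\epsilon_1^2}{2} \cdot \tfrac{2n \ln(1/\delta_1)}{OPT \cdot \epsilon_1^2} \cdot \tfrac{OPT}{n}\right) \;=\; \exp(-\ln(1/\delta_1)) \;=\; \delta_1,
\end{equation*}
so the success event holds with probability at least $1 - \delta_1$, yielding the lemma.

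There is essentially no main obstacle beyond making sure the concentration bound is applied with the correct seed set. One subtle point worth mentioning is that $S^o$ is a fixed (non-random) optimizer, so the random variables $\widehat{b}^{(t)}_{qv_j}[S^o]$ really are i.i.d.\ with the expectation computed in the proof of Lemma \ref{lem:conc}, which is what justifies the direct use of Bernstein-style concentration without any union bound over seed sets. (The union bound over all size-$k$ seed sets, needed to handle the greedy algorithm's actually-returned set rather than $S^o$, is presumably handled in a subsequent lemma and is not required here.)
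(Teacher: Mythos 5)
Your proposal is correct and follows essentially the same route as the paper's proof: both reduce the claim to the lower-tail event for $\sum_{j=1}^{\theta}\widehat{b}^{(t)}_{qv_j}[S^o]$, apply Inequality \ref{eq:lt} of Lemma \ref{lem:conc} with $\beta=\epsilon_1$ and $S=S^o$, and substitute $p[S^o]=OPT/n$ together with $\theta\geq\theta_1$ to obtain the bound $\delta_1$. Your remark that no union bound is needed here because $S^o$ is a fixed set is a correct observation that the paper leaves implicit.
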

\begin{proof} 
Let $p[S^o] = \frac{F(S^o)}{n}$. Using Inequality \ref{eq:lt}, we have
\begin{small}
\begin{align*}
    & \Pr \left( \widehat{F}(S^o) \leq (1 - \epsilon_1) \cdot OPT \right) \\
    &= \Pr \left( \frac{1}{\theta} \sum_{i=1}^\theta \widehat{b}_{qv_i}^{(t)}[S^o] \leq (1 - \epsilon_1) \cdot \frac{F(S^o)}{n} \right) \\
    &= \Pr \left( \sum_{i=1}^\theta \widehat{b}_{qv_i}^{(t)}[S^o] - \theta \cdot p[S^o] \leq - \epsilon_1 \cdot \theta \cdot p[S^o] \right) \\
    &\leq \exp \left( -\frac{\epsilon_1^2}{2} \cdot \theta \cdot p[S^o] \right) \leq \exp \left( -\frac{\epsilon_1^2}{2} \cdot \theta_1 \cdot \frac{F(S^o)}{n} \right) = \delta_1
\end{align*}
\end{small}
\end{proof}

\begin{lem}
\label{th:greedy}
Given $\epsilon > 0$, let $\delta_2 \in (0,1)$, $\epsilon_1 \in \left( 0, \frac{\epsilon}{1 - \frac{1}{e}} \right)$, and 
\begin{equation}
    \small \theta_2 = \frac{2n \left( 1 - \frac{1}{e} \right)}{OPT \left( \epsilon - \left(1 - \frac{1}{e} \right) \epsilon_1 \right)^2} \cdot \ln \left( \frac{\binom{n}{k}}{\delta_2} \right) \label{eq:theta_2}
\end{equation}
Let $S^*$ denote the size-$k$ seed set returned by Algorithm \ref{alg:sketch}. If $\theta \geq \theta_2$ and $\widehat{F} \left( S^o \right) \geq \left( 1 - \epsilon_1 \right) \cdot OPT$, then $F(S^*) \geq \left( 1 - \frac{1}{e} - \epsilon \right) \cdot OPT$ holds with probability at least $1 - \delta_2$. Formally,
\begin{footnotesize}
\begin{equation*}
    \Pr \left( F(S^*) \geq \left( 1 - \frac{1}{e} - \epsilon \right) \cdot OPT \bigg| \widehat{F}(S^o) \geq (1 - \epsilon_1) \cdot OPT \right) \geq 1 - \delta_2
\end{equation*}
\end{footnotesize}
\end{lem}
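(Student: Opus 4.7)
The plan is to combine a greedy approximation guarantee for $\widehat{F}$ (which I will first argue is monotone and submodular) with a Chernoff-style tail bound (Inequality~\ref{eq:ut}) applied via a union bound over all ``bad'' size-$k$ seed sets, following the template of the TIM/IMM analyses from classical influence maximization.

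First, I would observe that $\widehat{F}(S) = \frac{n}{\theta} \sum_{i=1}^{\theta} \widehat{b}^{(t)}_{qv_i}[S]$ is a non-negative, non-decreasing, submodular function of $S$: non-negativity and monotonicity are immediate from the definition of $Y^{(t)}_{qv}[S]$, while submodularity follows from Lemma~\ref{lem:submod} together with the fact that non-negative linear combinations of submodular functions are submodular. Therefore the greedy procedure in Algorithm~\ref{alg:sketch} produces a set $S^*$ satisfying
\begin{small}
\begin{equation*}
\widehat{F}(S^*) \;\geq\; (1 - 1/e)\,\max_{|S|=k}\widehat{F}(S) \;\geq\; (1 - 1/e)\,\widehat{F}(S^o) \;\geq\; (1 - 1/e)(1 - \epsilon_1)\,OPT,
\end{equation*}
\end{small}
using the conditioning hypothesis $\widehat{F}(S^o) \geq (1 - \epsilon_1) OPT$.

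Next, I would define the family of ``bad'' candidates $\mathcal{B} = \{S \subseteq V : |S| = k,\; F(S) < (1 - 1/e - \epsilon)\,OPT\}$ and show that $\Pr[S^* \in \mathcal{B}] \leq \delta_2$. For any fixed $S \in \mathcal{B}$, if $S$ is returned by the greedy procedure then the previous step forces $\widehat{F}(S) \geq (1 - 1/e)(1 - \epsilon_1)\,OPT$, which in turn gives $\widehat{F}(S) - F(S) > \epsilon_2 \cdot OPT$ where $\epsilon_2 := \epsilon - (1 - 1/e)\epsilon_1 > 0$ by the range assumed for $\epsilon_1$. Applying Inequality~\ref{eq:ut} with the choice $\beta = \epsilon_2 \cdot OPT / F(S)$ (so that $\beta \cdot F(S) = \epsilon_2 \cdot OPT$) would upper-bound this probability by $\exp\!\left(-\tfrac{3 \epsilon_2^2\, OPT^2\, \theta}{2n\,(3F(S) + \epsilon_2 \cdot OPT)}\right)$ after routine simplification of the Chernoff exponent.

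The main obstacle will be the algebraic step that converts this into a clean threshold matching $\theta_2$. Using $F(S) < (1 - 1/e - \epsilon)\,OPT$ together with $\epsilon_2 < \epsilon$, I would bound $3 F(S) + \epsilon_2 \cdot OPT < 3(1 - 1/e - \epsilon)\,OPT + \epsilon \cdot OPT < 3(1 - 1/e)\,OPT$, which makes the exponent above at least $\ln\!\binom{n}{k}/\delta_2$ whenever $\theta \geq \theta_2$. Hence the per-set failure probability is at most $\delta_2 / \binom{n}{k}$, and a union bound over the at most $\binom{n}{k}$ sets in $\mathcal{B}$ yields the desired conditional bound $\Pr[S^* \in \mathcal{B} \mid \widehat{F}(S^o) \geq (1 - \epsilon_1)\,OPT] \leq \delta_2$, completing the proof.
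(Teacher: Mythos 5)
Your proposal is correct and follows essentially the same route as the paper's own proof: both establish submodularity (and monotonicity) of $\widehat{F}$ via Lemma~\ref{lem:submod}, invoke the $(1-1/e)$ greedy guarantee together with the hypothesis $\widehat{F}(S^o)\geq(1-\epsilon_1)OPT$ to force $\widehat{F}(S)-F(S)\geq\epsilon_2\cdot OPT$ for any returned ``bad'' set, apply Inequality~\ref{eq:ut} with the same choice $\beta=\epsilon_2\cdot OPT/F(S)$, simplify using $F(S)<(1-1/e-\epsilon)OPT$ and $\epsilon_2\leq\epsilon$, and finish with a union bound over the $\binom{n}{k}$ candidate sets. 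The only difference is cosmetic bookkeeping in the Chernoff exponent (clearing the factor of $3$ before bounding the denominator rather than after), which yields the identical threshold $\theta_2$.
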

\begin{proof} 
It suffices to show that any size-$k$ seed set $S$ satisfying $F(S) < \left( 1 - \frac{1}{e} - \epsilon \right) \cdot OPT$ is returned by Algorithm \ref{alg:sketch} with probability at most $\frac{\delta_2}{\binom{n}{k}}$. In that case, by the union bound, there is at least $1 - \delta_2$ probability that no such set is returned.\\
Consider a given seed set $S$ satisfying $F(S) < \left( 1 - \frac{1}{e} - \epsilon \right) \cdot OPT$. Let $p[S] = \frac{F(S)}{n}$. Notice that $\widehat{b}^{(t)}_{qv_j}[S]$ is the average of $Y_{qv_j}^{(t)}[S]$ across all $\lambda_{v_j}$ random walks from $v_j$, and $\widehat{F}(S) = \frac{n}{\theta} \sum_{j=1}^\theta \widehat{b}^{(t)}_{qv_j}[S]$. Thus, Lemma \ref{lem:submod} implies that $\widehat{F}(S)$ is submodular w.r.t. $S$. Let $S^+$ be the size-$k$ seed set maximizing $\widehat{F}(S)$. If $S$ is returned by Algorithm \ref{alg:sketch}, from the submodularity of $\widehat{F}(\cdot)$ and the assumption that $\widehat{F} \left( S^o \right) \geq \left( 1 - \epsilon_1 \right) \cdot OPT$, we have
\begin{footnotesize}
\begin{equation*}
    \widehat{F}(S) \geq \left( 1 - \frac{1}{e} \right) \cdot \widehat{F}(S^+) \geq \left( 1 - \frac{1}{e} \right) \cdot \widehat{F}(S^o) \geq \left( 1 - \frac{1}{e} \right) \cdot (1 - \epsilon_1) \cdot OPT
\end{equation*}
\end{footnotesize}
Combining with $F(S) < \left( 1 - \frac{1}{e} - \epsilon \right) \cdot OPT$,
\begin{small}
\begin{align*}
    \widehat{F}(S) - F(S) &\geq \left( 1 - \frac{1}{e} \right) \cdot (1 - \epsilon_1) \cdot OPT - \left( 1 - \frac{1}{e} - \epsilon \right) \cdot OPT \\
    &= \left( \epsilon - \left( 1 - \frac{1}{e} \right) \epsilon_1 \right) \cdot OPT = \epsilon_2 \cdot OPT
\end{align*}
\end{small}
where $\epsilon_2 = \epsilon - \left( 1 - \frac{1}{e} \right) \epsilon_1 > 0$. This means the probability of $S$ being returned is at most the probability that $\widehat{F}(S) - F(S) \geq \epsilon_2 \cdot OPT$. Using this, Inequality \ref{eq:ut} and the fact that $n \cdot p[S] = F(S) < \left( 1 - \frac{1}{e} - \epsilon \right) \cdot OPT$, we have
\begin{footnotesize}
\begin{align*}
    \Pr (S \text{ is returned}) &\leq \Pr \left( \widehat{F}(S) - F(S) \geq \epsilon_2 \cdot OPT \right) \\
    &= \Pr \left( \sum_{i=1}^\theta \widehat{b}_{qv_i}^{(t)} - \theta \cdot p[S] \geq \frac{\epsilon_2 \cdot OPT}{n \cdot p[S]} \cdot \theta \cdot p[S] \right) \\
    &\leq \exp \left( -\frac{\epsilon_2^2 \cdot OPT^2}{2n^2 \cdot p[S] + \frac{2}{3}\epsilon_2 n \cdot OPT} \cdot \theta \right) \\
    &< \exp \left( -\frac{\epsilon_2^2 \cdot OPT^2}{2n \left( 1 - \frac{1}{e} - \epsilon \right) \cdot OPT + \frac{2}{3}\epsilon_2 n \cdot OPT} \cdot \theta \right) \\
    &< \exp \left( -\frac{\left( \epsilon - \left( 1 - \frac{1}{e} \right) \epsilon_1 \right)^2 \cdot OPT}{2n \left( 1 - \frac{1}{e} \right)} \cdot \theta \right) \\
    &\leq \exp \left( -\frac{\left( \epsilon - \left( 1 - \frac{1}{e} \right) \epsilon_1 \right)^2 \cdot OPT}{2n \left( 1 - \frac{1}{e} \right)} \cdot \theta_2 \right) = \frac{\delta_2}{\binom{n}{k}}
\end{align*}
\end{footnotesize}
\end{proof}

\begin{theor}
\label{th:theta}
Given any $\epsilon, l > 0$, setting
\begin{footnotesize}
\begin{equation}
    \theta \geq \frac{2n}{OPT \cdot \epsilon^2}\left[ \left(1 - \frac{1}{e}\right) \sqrt{\ln \left( 2n^l \right)} + \sqrt{\left(1 - \frac{1}{e}\right) \left[ \ln \left( 2n^l \right) + \ln \binom{n}{k} \right]} \right]^2 \label{eq:theta_cum}
\end{equation}
\end{footnotesize}
ensures that Algorithm \ref{alg:sketch} returns a $(1-1/e-\epsilon)$-approximate solution $S^*$ with probability at least $1 - n^{-l}$. More formally,
\begin{small}
\begin{equation}
    \Pr \left( F \left( S^* \right) \geq \left(1-\frac{1}{e}-\epsilon\right)OPT \right) \geq 1 - \frac{1}{n^{l}}
\end{equation}
\end{small}
\end{theor}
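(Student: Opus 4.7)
The plan is to combine Lemmas \ref{th:sketch} and \ref{th:greedy} via a union bound, after picking the free parameters $\delta_1,\delta_2,\epsilon_1$ so that the resulting lower bound on $\theta$ matches Equation \ref{eq:theta_cum}. First I would set $\delta_1 = \delta_2 = \frac{1}{2n^l}$ so that $\ln(1/\delta_1) = \ln(1/\delta_2) = \ln(2n^l)$. Then, writing $A := \sqrt{\ln(2n^l)}$ and $B := \sqrt{(1-1/e)\,[\ln(2n^l)+\ln\binom{n}{k}]}$, the bounds $\theta_1$ and $\theta_2$ from Lemmas \ref{th:sketch} and \ref{th:greedy} become
\[
\theta_1 = \frac{2n A^2}{OPT\cdot \epsilon_1^2}, \qquad \theta_2 = \frac{2n B^2}{OPT\cdot(\epsilon - (1-1/e)\epsilon_1)^2}.
\]

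Next I would choose $\epsilon_1$ to balance $\theta_1$ and $\theta_2$, which minimizes $\max(\theta_1,\theta_2)$. Solving $A/\epsilon_1 = B/(\epsilon - (1-1/e)\epsilon_1)$ gives the canonical choice $\epsilon_1^* = A\epsilon/(B + (1-1/e)A)$. A short verification shows $\epsilon_1^* \in (0, \epsilon/(1-1/e))$ (the upper bound follows from $B > 0$), so this $\epsilon_1^*$ satisfies the hypothesis of Lemma \ref{th:greedy}. Plugging $\epsilon_1^*$ into $\theta_1$ yields
\[
\theta_1 = \frac{2n}{OPT\cdot\epsilon^2}\bigl[(1-1/e)A + B\bigr]^2,
\]
which is exactly the right-hand side of Equation \ref{eq:theta_cum}; by construction $\theta_2$ gives the same value, so the hypothesis $\theta \geq \max(\theta_1,\theta_2)$ is met.

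Finally, I would assemble the probability bound. Let $E_1$ be the event $\{\widehat{F}(S^o)\geq (1-\epsilon_1^*)OPT\}$ and $E_2$ the event $\{F(S^*) \geq (1-1/e-\epsilon)OPT\}$. Lemma \ref{th:sketch} gives $\Pr(E_1) \geq 1-\delta_1$, and Lemma \ref{th:greedy} gives $\Pr(E_2 \mid E_1)\geq 1-\delta_2$. Therefore
\[
\Pr(E_2) \geq \Pr(E_1 \wedge E_2) = \Pr(E_2 \mid E_1)\Pr(E_1) \geq (1-\delta_1)(1-\delta_2) \geq 1 - \delta_1 - \delta_2 = 1 - n^{-l},
\]
which is the claim.

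The main obstacle I anticipate is purely algebraic: showing that the optimal $\epsilon_1^*$ indeed recovers the seemingly complicated expression inside the square brackets of Equation \ref{eq:theta_cum}, and simultaneously verifying $\epsilon_1^* < \epsilon/(1-1/e)$ so that Lemma \ref{th:greedy} legitimately applies. Everything else is a direct composition of the two preparatory lemmas with a union bound on the failure events.
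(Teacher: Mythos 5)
Your proposal is correct and matches the paper's own proof essentially verbatim: the same choices $\delta_1=\delta_2=\tfrac{1}{2}n^{-l}$ and $\epsilon_1 = \epsilon\sqrt{\ln(2n^l)}\,/\,[(1-1/e)\sqrt{\ln(2n^l)}+\sqrt{(1-1/e)(\ln(2n^l)+\ln\binom{n}{k})}]$, the same verification that $\theta_1=\theta_2$ equal the stated bound, and the same multiplication of $(1-\delta_1)(1-\delta_2)$ via the conditional-probability decomposition. Your added remark that this $\epsilon_1$ is the balancing choice minimizing $\max(\theta_1,\theta_2)$ is a nice motivation the paper leaves implicit, but the argument is the same.
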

\begin{proof} 
Define $\delta_1 = \delta_2 = 0.5 \times n^{-l}$ and
\begin{footnotesize}
\begin{equation*}
    \epsilon_1 = \epsilon \cdot \frac{\sqrt{l \ln n + \ln 2}}{\left(1 - \frac{1}{e} \right) \sqrt{l \ln n + \ln 2} + \sqrt{\left( 1 - \frac{1}{e} \right) \left( l \ln n + \ln \binom{n}{k} + \ln 2 \right)}}
\end{equation*}
\end{footnotesize}
It is easy to see that $\delta_1, \delta_2, \epsilon_1$ satisfy the conditions in Lemma \ref{th:sketch} and \ref{th:greedy}, and that $\theta_1$ (Equation \ref{eq:theta_1}) and $\theta_2$ (Equation \ref{eq:theta_2}) are both the same and given by the RHS of Equation \ref{eq:theta_cum}. Thus, if $\theta$ satisfies Equation \ref{eq:theta_cum}, i.e., $\theta \geq \theta_1$ and $\theta \geq \theta_2$, combining Lemmas \ref{th:sketch} and \ref{th:greedy},
\begin{small}
\begin{align*}
    &\Pr \left( F(S^*) \geq \left( 1 - \frac{1}{e} - \epsilon \right) \cdot OPT \right) \\
    &\geq \Pr \left( F(S^*) \geq \left( 1 - \frac{1}{e} - \epsilon \right) \cdot OPT \bigg| \widehat{F}(S^o) \geq (1 - \epsilon_1) \cdot OPT \right) \\
    &\qquad \Pr \left( \widehat{F}(S^o) \geq (1 - \epsilon_1) \cdot OPT \right) \\
    &\geq (1 - \delta_1) (1 - \delta_2) > 1 - (\delta_1 + \delta_2) = 1 - n^{-l}
\end{align*}
\end{small}
\end{proof}

Since the above results hold for any value of $\lambda_v$, we set $\lambda_v = 1 \; \forall v \in V$.\footnote{\scriptsize{Although $\lambda_v = 1$ could
result in a very inaccurate estimate $\widehat{b}^{(t)}_{qv_j}$, we sample $\theta$ start nodes uniformly at random, all of which
need not be distinct; thus, it is still likely that the number of walks from a particular start node is more than 1. By
ensuring that $\theta$ is large enough, our overall cumulative score estimate is very
accurate with a high probability.}} 
In order to estimate a lower bound on $OPT$ in Equation \ref{eq:theta_cum}, we design a statistical hypothesis test which, on an input $x$, returns false with a high probability if $OPT < x$. Since $OPT \in [k, n]$, we can easily identify a lower bound on $OPT$ by running the test for $x \in \left\{ \frac{n}{2}, \frac{n}{4}, \frac{n}{8}, \ldots, k \right\}$. Such a test is provided in {\em Algorithm 2} in \cite{TSX15}.

\subsection{Accuracy Guarantees for the \revise{Plurality} Score Variants}

As in \S~\ref{sec:b_best}, the following analysis is shown for the positional-$p$-approval score, and hence also works for special cases, e.g., the \revise{plurality} and $p$-approval scores.

\smallskip

The estimated positional-$p$-approval score is defined as:
\begin{footnotesize}
\begin{equation}
    \widehat{F} \left( \widehat{B}^{(t)}[S],c_q \right) = \frac{n}{\theta} \sum_{j=1}^\theta \omega \left[ \beta \left( \widehat{b}^{(t)}_{qv_j} [S] \right) \right] \times \mathbbm{1} \left[ \beta \left( \widehat{b}^{(t)}_{qv_j} [S] \right) \leq p \right] \label{eq:br}
\end{equation}
\end{footnotesize}

Let $OPT$ denote the maximum positional-$p$-approval score for any size-$k$ seed set of $c_q$. We have:

\begin{lem}
\label{lem:estimate_br}
Suppose, for any sampled start node $v_j$, $j \in [1,\theta]$, the following holds with probability at least $\rho$ (Theorem \ref{th:bound_br_rw}):
\begin{footnotesize}
\begin{align}
    &\omega \left[ \beta \left( \widehat{b}_{qv_j}^{(t)}[S] \right) \right] \times \mathbbm{1} \left[ \beta \left( \widehat{b}_{qv_j}^{(t)}[S] \right) \leq p \right] \nonumber\\
    &= \omega \left[ \beta \left( b_{qv_j}^{(t)}[S] \right) \right] \times \mathbbm{1} \left[ \beta \left( b_{qv_j}^{(t)}[S] \right) \leq p \right] \label{eq:rho_br}
\end{align}
\end{footnotesize}
If $\theta$ satisfies
\begin{small}
\begin{equation}
    \rho^{\theta} \left[ 1 - 2 \exp \left(-\frac{\epsilon^2 \cdot OPT}{(8 + 2\epsilon) n} \cdot \theta\right) \right] \geq 1 - \binom{n}{k}^{-1}n^{-l} \label{eq:theta_br}
\end{equation}
\end{small}
Then, for any size-$k$ seed set $S$ for candidate $c_q$, the following holds with probability at least $1 - \binom{n}{k}^{-1} n^{-l}$.
\begin{small}
\begin{equation}
    \left| \widehat{F} \left( \widehat{B}^{(t)}[S], c_q \right) - F\left( B^{(t)}[S],c_q \right) \right| < \frac{\epsilon}{2} \cdot OPT \label{eq:error_br}
\end{equation}
\end{small}
\end{lem}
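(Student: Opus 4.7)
The plan is to first condition on the event that the rank-estimation error is zero on every sampled start node, and then apply a Bernstein-type concentration inequality (analogous to Lemma \ref{lem:conc}) to the resulting sum of bounded i.i.d.\ random variables.

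First, let $\mathcal{E}$ denote the event that Equation \ref{eq:rho_br} holds simultaneously for every $j \in [1,\theta]$. Since the $\theta$ sampled start nodes $v_j$ are chosen independently and the reverse random walks from different start nodes are generated independently, the events indexed by $j$ are mutually independent. By the assumption of the lemma, each individual event holds with probability at least $\rho$, and hence $\Pr(\mathcal{E}) \geq \rho^{\theta}$.

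Next, I would condition on $\mathcal{E}$. Under this conditioning, for each $j$ the summand in the definition of $\widehat{F}(\widehat{B}^{(t)}[S],c_q)$ coincides with the ``true'' random variable
\[
Z_j \;=\; \omega\!\left[\beta\!\left(b_{qv_j}^{(t)}[S]\right)\right]\cdot \mathbbm{1}\!\left[\beta\!\left(b_{qv_j}^{(t)}[S]\right) \leq p\right] \in [0,1].
\]
Because $v_j$ is drawn uniformly from $V$, one gets $\mathbbm{E}[Z_j] = \frac{1}{n}F(B^{(t)}[S],c_q) =: p[S]$, and, by the same computation as in the proof of Lemma \ref{lem:conc}, $\mathrm{Var}[Z_j] \leq p[S]$. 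The $Z_j$ are mutually independent because the sampled start nodes and their walks are independent. Writing the event of interest as a deviation event of $\sum_j Z_j$ from its mean $\theta \cdot p[S]$, one has $|\widehat{F}(\cdot)-F(\cdot)| \geq \tfrac{\epsilon}{2} OPT$ iff $\bigl|\sum_j Z_j - \theta p[S]\bigr| \geq \tfrac{\epsilon \theta \, OPT}{2n}$.

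Applying Bernstein's inequality (two-sided) with variance bound $\theta p[S] \leq \theta \cdot OPT/n$ and range $1$, the probability of this deviation is at most
\[
2\exp\!\left(-\,\frac{(\epsilon\theta\,OPT/(2n))^{2}}{2\theta p[S] + \tfrac{2}{3}\cdot \epsilon\theta\,OPT/(2n)}\right) \;\leq\; 2\exp\!\left(-\frac{\epsilon^{2}\cdot OPT}{(8+2\epsilon)\,n}\cdot \theta\right),
\]
using $p[S]\leq OPT/n$ and the loose inequality $8+\tfrac{4\epsilon}{3}\leq 8+2\epsilon$ in the denominator. Finally, combining via $\Pr(A) \geq \Pr(A\mid \mathcal{E})\Pr(\mathcal{E})$ with $\Pr(\mathcal{E})\geq \rho^\theta$ gives a lower bound of $\rho^{\theta}\bigl[1-2\exp(-\epsilon^{2} OPT\,\theta/((8+2\epsilon)n))\bigr]$ on the probability of Equation \ref{eq:error_br}; by the hypothesis Equation \ref{eq:theta_br}, this is at least $1-\binom{n}{k}^{-1}n^{-l}$.

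The main obstacle I anticipate is purely book-keeping: verifying that conditioning on $\mathcal{E}$ does not destroy the concentration argument, i.e.\ that under $\mathcal{E}$ the $Z_j$'s retain independence and the correct mean/variance bounds. This is true because $\mathcal{E}$ is the intersection of events each supported on disjoint randomness (the walks from $v_j$), and under $\mathcal{E}$ the sampled summand literally equals $Z_j$; so conditioning does not alter the joint distribution of $(v_1,\dots,v_\theta)$ beyond restricting to configurations where the rank-estimation is exact.
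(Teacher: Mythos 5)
Your proposal is correct and follows essentially the same route as the paper's proof: condition on the event that all $\theta$ rank estimates are exact (probability at least $\rho^{\theta}$ by independence of the walks), then apply a concentration bound to the i.i.d.\ true-value summands with mean $F(B^{(t)}[S],c_q)/n \leq OPT/n$, and multiply the two probabilities. The only cosmetic difference is that you invoke a Bernstein-type bound (the paper's Theorem~\ref{th:conc}) and loosen $8+\tfrac{4\epsilon}{3}$ to $8+2\epsilon$, whereas the paper uses the multiplicative Chernoff bound (its Theorem~\ref{th:chernoff}) and loosens $F(B^{(t)}[S],c_q)$ to $OPT$; both land on the identical exponent $\tfrac{\epsilon^{2}\cdot OPT}{(8+2\epsilon)n}\cdot\theta$.
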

\begin{proof} 
From Equations \ref{eq:br} and \ref{eq:rho_br}, the following holds with probability at least $\rho^\theta$:
\begin{small}
\begin{equation*}
    \widehat{F} \left( \widehat{B}^{(t)}[S],c_q \right) = \frac{n}{\theta} \sum_{j=1}^\theta \omega \left[ \beta \left( b^{(t)}_{qv_j} [S] \right) \right] \times \mathbbm{1} \left[ \beta \left( b^{(t)}_{qv_j} [S] \right) \leq p \right]
\end{equation*}
\end{small}
Also, for all $j \in [1, \theta]$, we have:
\begin{footnotesize}
\begin{align*}
    &\mathbbm{E} \left[ \omega \left[ \beta \left( b^{(t)}_{qv_j} [S] \right) \right] \times \mathbbm{1} \left[ \beta \left( b^{(t)}_{qv_j} [S] \right) \leq p \right] \right] \\
    &= \sum_{v \in V} \frac{1}{n} \times \omega \left[ \beta \left( b^{(t)}_{qv} [S] \right) \right] \times \mathbbm{1} \left[ \beta \left( b^{(t)}_{qv} [S] \right) \leq p \right] \\
    &= \frac{1}{n} \cdot F\left( B^{(t)}[S],c_q \right)
\end{align*}
\end{footnotesize}
Since $\omega[i] \in [0, 1] \, \forall i \in [1, r]$, using the concentration inequality in Theorem \ref{th:chernoff} in Appendix \ref{sec:ineq} and Inequality \ref{eq:theta_br}, we obtain:
\begin{footnotesize}
\begin{align*}
    &\Pr \left( \left| \widehat{F} \left( \widehat{B}^{(t)}[S], c_q \right) - F\left( B^{(t)}[S],c_q \right) \right| < \frac{\epsilon}{2} \cdot OPT \right) \\
    &\geq \rho^\theta \Pr \left( \left| \sum_{j=1}^\theta \omega \left[ \beta \left( b^{(t)}_{qv_j} [S] \right) \right] \times \mathbbm{1} \left[ \beta \left( b^{(t)}_{qv_j} [S] \right) \leq p \right] \right. \right. \\
    &\left. \left. \quad - \frac{\theta}{n} \cdot F\left( B^{(t)}[S],c_q \right) \right| < \frac{\epsilon}{2} \cdot \frac{OPT}{F\left( B^{(t)}[S],c_q \right)} \cdot \frac{\theta}{n} \cdot F\left( B^{(t)}[S],c_q \right) \right) \\
    &\geq \rho^\theta \left[ 1 - 2 \exp \left(-\frac{\epsilon^2 \cdot OPT^2}{8 \cdot F\left( B^{(t)}[S],c_q \right) + 2\epsilon  \cdot OPT} \cdot \frac{\theta}{n} \right) \right] \\
    &\geq \rho^{\theta} \left[ 1 - 2 \exp \left(-\frac{\epsilon^2 \cdot OPT}{(8 + 2\epsilon) n} \cdot \theta\right) \right] \geq 1 - \binom{n}{k}^{-1}n^{-l}
\end{align*}
\end{footnotesize}
\end{proof}

Notice that, given a value of $\rho$, the LHS of Equation \ref{eq:theta_br} is not a monotonic function of $\theta$. Instead, it increases up to
a certain value of $\theta$
and then decreases. Thus, if the RHS of Equation \ref{eq:theta_br} is greater than the maximum value of the LHS, there is no satisfying $\theta$. Otherwise, there are two satisfying values of $\theta$, and we can choose the smaller one. To find those values, we plot the LHS of Equation \ref{eq:theta_br} as a function of $\theta$. This process is illustrated in Figure \ref{fig:br_plot}.

\begin{figure}[tb!]
    \centering
    \includegraphics[scale=0.34]{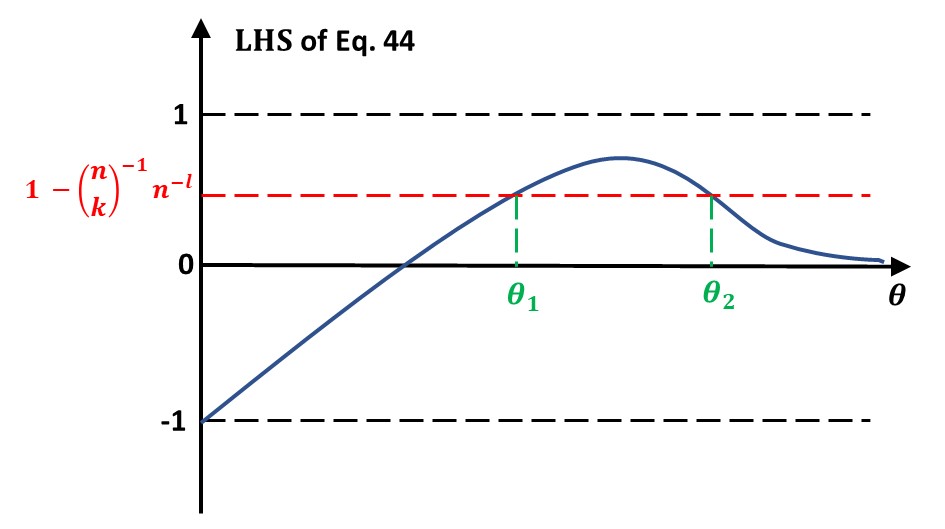}
    \caption{\small{Plot of the LHS of Equation \ref{eq:theta_br} as a function of $\theta$. We choose $\theta = \theta_1$, which is the smallest of all admissible values of $\theta$.}}
    \label{fig:br_plot}
\end{figure}

Suppose sandwich approximation ensures that the greedy algorithm returns a $\zeta$-approximate solution to maximizing the estimated positional-$p$-approval score. In that case, we have the following theorem.

\begin{theor}
\label{th:approx_ratio}
If $\theta$ satisfies Inequality \ref{eq:theta_br}, our algorithm returns a $(\zeta - \epsilon)$-approximate solution $S^*$ with probability at least $1 - n^{-l}$.
\begin{small}
\begin{equation}
    \Pr \left( F\left( B^{(t)}[S^*],c_q \right) > \left( \zeta - \epsilon \right)OPT \right) \geq 1 - n^{-l}
\end{equation}
\end{small}
\end{theor}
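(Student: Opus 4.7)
\textbf{Proof plan for Theorem \ref{th:approx_ratio}.} The plan is to combine Lemma \ref{lem:estimate_br} (per-set concentration) with a union bound over all $\binom{n}{k}$ size-$k$ seed sets, and then chain the resulting uniform accuracy with the sandwich-approximation guarantee. The probability factor $\binom{n}{k}^{-1}$ in Inequality \ref{eq:theta_br} is clearly put there precisely so that this union bound goes through.

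First I would define the ``good event'' $\mathcal{E}$: for \emph{every} $S \subseteq V$ with $|S| = k$, $|\widehat{F}(\widehat{B}^{(t)}[S],c_q) - F(B^{(t)}[S],c_q)| < \tfrac{\epsilon}{2}\cdot OPT$. By Lemma \ref{lem:estimate_br}, each individual $S$ satisfies this with probability at least $1 - \binom{n}{k}^{-1} n^{-l}$, so a union bound over the $\binom{n}{k}$ size-$k$ seed sets gives $\Pr(\mathcal{E}) \geq 1 - n^{-l}$.

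Next, condition on $\mathcal{E}$. Let $S^o$ denote the true optimum (so $F(B^{(t)}[S^o],c_q) = OPT$), let $\widehat{S}^o$ denote a maximizer of $\widehat{F}(\widehat{B}^{(t)}[\cdot],c_q)$ over size-$k$ sets, and let $S^*$ be the output of the sandwich-based greedy algorithm, which by hypothesis satisfies $\widehat{F}(\widehat{B}^{(t)}[S^*],c_q) \geq \zeta \cdot \widehat{F}(\widehat{B}^{(t)}[\widehat{S}^o],c_q)$. Writing $F(S)$ and $\widehat{F}(S)$ as shorthand and chaining inequalities:
\begin{align*}
F(S^*) &> \widehat{F}(S^*) - \tfrac{\epsilon}{2}\, OPT \quad (\text{accuracy of } \widehat{F} \text{ on } S^*) \\
       &\geq \zeta \cdot \widehat{F}(\widehat{S}^o) - \tfrac{\epsilon}{2}\, OPT \quad (\text{sandwich guarantee}) \\
       &\geq \zeta \cdot \widehat{F}(S^o) - \tfrac{\epsilon}{2}\, OPT \quad (\text{optimality of } \widehat{S}^o \text{ for } \widehat{F}) \\
       &> \zeta \bigl( F(S^o) - \tfrac{\epsilon}{2}\, OPT \bigr) - \tfrac{\epsilon}{2}\, OPT \quad (\text{accuracy of } \widehat{F} \text{ on } S^o) \\
       &= \bigl( \zeta - \tfrac{(\zeta+1)\epsilon}{2} \bigr) \cdot OPT \geq (\zeta - \epsilon) \cdot OPT,
\end{align*}
where the last step uses $\zeta \leq 1$. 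Thus, on the event $\mathcal{E}$, the returned seed set $S^*$ satisfies $F(B^{(t)}[S^*],c_q) > (\zeta - \epsilon)\, OPT$, which yields the claimed probability bound of $1 - n^{-l}$.

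The main obstacle, actually already resolved inside Lemma \ref{lem:estimate_br}, is dealing with the random seed set $S^*$ (whose identity depends on the sketches) using only per-set concentration bounds; hence the need to pay the $\binom{n}{k}$ union-bound factor in $\theta$. The remaining arithmetic — namely verifying $\zeta - \tfrac{(\zeta+1)\epsilon}{2} \geq \zeta - \epsilon$ via $\zeta \leq 1$ — is the only subtlety, and it is harmless since every voting-based score approximation factor we consider lies in $[0,1]$.
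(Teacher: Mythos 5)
Your proposal is correct and follows essentially the same route as the paper's own proof: a union bound over all $\binom{n}{k}$ size-$k$ seed sets to get the uniform accuracy event with probability $1-n^{-l}$, followed by the identical chain of inequalities through $\widehat{F}(S^*)$, the sandwich guarantee, the estimated optimum, and $\widehat{F}(S^o)$, ending with the same $\zeta - \tfrac{(1+\zeta)\epsilon}{2} \geq \zeta - \epsilon$ arithmetic via $\zeta \leq 1$. No gaps.
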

\begin{proof} 
Let $S^*$ be the seed set returned by our algorithm, and $S^+ = \argmax_{S \subseteq V, |S| = k} \widehat{F}( \widehat{B}^{(t)}[S],c_q )$.
Let $S^o$ denote the optimal solution to our Problem \ref{prob:fjvote}.
Assume that $\theta$ satisfies Inequality \ref{eq:theta_br}. By Lemma \ref{lem:estimate_br}, Equation \ref{eq:error_br} holds with probability at least $1 - \binom{n}{k}^{-1}n^{-l}$ for any size-$k$ seed set $S$. Then, by the union bound, Equation \ref{eq:error_br} should hold simultaneously for all size-$k$ seed sets with probability at least $1-n^{-l}$. In that case, we have
\begin{small}
\begin{align*}
    &F\left( B^{(t)}[S^*],c_q \right) \\
    &> \widehat{F} \left( \widehat{B}^{(t)}[S^*], c_q \right) - \frac{\epsilon}{2} \cdot OPT \qquad\qquad\qquad\qquad\quad (\text{Equation \ref{eq:error_br}}) \\
    &\geq \zeta \cdot \widehat{F} \left( \widehat{B}^{(t)}[S^+], c_q \right) - \frac{\epsilon}{2} \cdot OPT \qquad (\text{Sandwich Approximation}) \\
    &\geq \zeta \cdot \widehat{F} \left( \widehat{B}^{(t)}[S^o], c_q \right) - \frac{\epsilon}{2} \cdot OPT \qquad\qquad\quad (\text{Definition of }S^+) \\
    &> \zeta \cdot \left(F\left( B^{(t)}[S^o], c_q \right) - \frac{\epsilon}{2} \cdot OPT\right) - \frac{\epsilon}{2} \cdot OPT \quad (\text{Equation \ref{eq:error_br}}) \\
    &= \zeta \cdot OPT - \frac{1 + \zeta}{2} \cdot \epsilon \cdot OPT \qquad\quad \left( F \left( B^{(t)}[S^o],c_q \right) = OPT \right) \\
    &> \left( \zeta - \epsilon \right) OPT \qquad\qquad\qquad \qquad\qquad\qquad\qquad\qquad\qquad (\zeta < 1)
\end{align*}
\end{small}
\end{proof}

$OPT$ is estimated in a similar way as in \S~\ref{sec:sketch_quality}.

\subsection{Accuracy Guarantee for the \revise{Copeland} Score}
\label{sec:sketch_con}

The relation $\succ_{\widehat{M}}$ is defined as: $c_q \succ_{\widehat{M}} c_x$ if, among the $\theta$ samples, more users $v_j$ satisfy $\widehat{b}^{(t)}_{qv_j} [S] > b^{(t)}_{xv_j}$ than the other way round.
The estimated \revise{Copeland} score is then defined as
\begin{small}
\begin{align}
    &\widehat{F} \left( \widehat{B}^{(t)}[S], c_q \right) = \sum_{c_x \in C \setminus \{c_q\}} \mathbbm{1} \left[ c_q \succ_{\widehat{M}} c_x \right] \nonumber \\
    &= \sum_{c_x \in C \setminus \{c_q\}} \mathbbm{1} \left[ \sum_{v \in V} \mathbbm{1} \left[ \widehat{b}^{(t)}_{qv} > b^{(t)}_{xv} \right] > \sum_{v \in V} \mathbbm{1} \left[ \widehat{b}^{(t)}_{qv} < b^{(t)}_{xv} \right] \right]
\end{align}
\end{small}

Following Theorem \ref{th:bound_con_rw}, the preference between $c_q$ and any other candidate $c_x$ can be estimated correctly for any user with probability at least $\rho$. We define $\mu[S]$ as the minimum (across all candidates $c_x$) of the difference between the fraction of users who rank $c_q$ above $c_x$ and those who do not. More formally,
\begin{small}
    \begin{equation*}
        \mu[S] = \min_{c_x \in C \setminus \{c_q\}} \frac{1}{n} \left| \sum_{v=1}^n \mathbbm{1} \left[ b^{(t)}_{qv}[S] > b^{(t)}_{xv} \right] - \sum_{v=1}^n \mathbbm{1} \left[ b^{(t)}_{qv}[S] < b^{(t)}_{xv} \right] \right|
    \end{equation*}
\end{small}

\begin{lem}
\label{lem:estimate_con}
Assume $\gamma_v[S] = \min_{c_p \in C \setminus \{c_q\}} \left| b^{(t)}_{pv} - b^{(t)}_{qv}[S] \right| \neq 0 \; \forall v \in V$. Suppose, for any $c_x \neq c_q$ and any sampled start node $v_j$, $j \in [1,\theta]$, the following holds with probability at least $\rho$ (Theorem \ref{th:bound_con_rw}), i.e.,
\begin{small}
\begin{equation*}
    \Pr \left( \mathbbm{1} \left[ \widehat{b}^{(t)}_{qv}[S] > b^{(t)}_{xv} \right] = \mathbbm{1} \left[ b^{(t)}_{qv}[S] > b^{(t)}_{xv} \right] \right) \geq \rho
\end{equation*}
\end{small}
If $\theta$ satisfies
\begin{small}
\begin{equation}
    \rho^\theta \left[ 1 - \left( 1 - [\mu[S]]^2 \right)^{\frac{\theta}{2}} \right] \geq 1 - \binom{n}{k}^{-1} n^{-l} (r - 1)^{-1} \label{eq:theta_con}
\end{equation}
\end{small}
Then, for any size-$k$ seed set $S$ of candidate $c_q$, the following holds with probability at least $1 - \binom{n}{k}^{-1} n^{-l}$.
\begin{small}
\begin{equation}
    \widehat{F} \left( \widehat{B}^{(t)}[S],c_q \right) = F\left( B^{(t)}[S],c_q \right) \label{eq:error_con}
\end{equation}
\end{small}
\end{lem}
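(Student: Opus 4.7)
My plan is to prove the statement for a fixed non-target candidate $c_x$ first, establishing that the estimated pairwise relation between $c_q$ and $c_x$ matches the true one with probability at least $\rho^\theta [1 - (1 - \mu[S]^2)^{\theta/2}]$, and then apply a union bound over the $r-1$ non-target candidates. For a fixed $c_x$, I decompose the event $\mathbbm{1}[c_q \succ_{\widehat{M}} c_x] = \mathbbm{1}[c_q \succ_M c_x]$ into two joint subevents: (a) every per-sample indicator $\mathbbm{1}[\widehat{b}^{(t)}_{qv_j}[S] > b^{(t)}_{xv_j}]$ agrees with its exact counterpart $\mathbbm{1}[b^{(t)}_{qv_j}[S] > b^{(t)}_{xv_j}]$, and (b) the sample majority of the true signed indicators matches the population majority. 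If both hold, the estimated majority equals the true majority, so it suffices to lower-bound $\Pr(\text{(a) and (b)})$.

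For the joint bound, observe that once we condition on the sampled start nodes $v_1,\ldots,v_\theta$, event (b) becomes deterministic while event (a) depends only on the independently generated random walks from those nodes. By Theorem \ref{th:bound_con_rw}, each per-sample indicator is correct with probability at least $\rho$, and by mutual independence of the walks this gives $\Pr(\text{(a)} \mid v_1,\ldots,v_\theta) \geq \rho^\theta$. Applying the tower rule then yields $\Pr(\text{(a) and (b)}) \geq \rho^\theta \Pr(\text{(b)})$.

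The main obstacle is bounding $\Pr(\text{(b)})$ by $1 - (1 - \mu[S]^2)^{\theta/2}$. Let $p_x$ (resp.\ $q_x$) denote the fraction of users $v$ for whom $b^{(t)}_{qv}[S] > b^{(t)}_{xv}$ (resp.\ $<$); then $|p_x - q_x| \geq \mu[S]$ by the definition of $\mu[S]$. The crucial consequence of the hypothesis $\gamma_v[S] \neq 0$ for every $v$ is that the equality $b^{(t)}_{qv}[S] = b^{(t)}_{xv}$ is ruled out, so the signed indicators $W_j = \mathbbm{1}[b^{(t)}_{qv_j}[S] > b^{(t)}_{xv_j}] - \mathbbm{1}[b^{(t)}_{qv_j}[S] < b^{(t)}_{xv_j}]$ take values only in $\{-1, +1\}$ with $p_x + q_x = 1$. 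Assuming WLOG $p_x > q_x$, the optimized Chernoff inequality $\Pr(\sum_j W_j \leq 0) \leq (\mathbbm{E}[e^{-\lambda W_1}])^\theta$ minimized at $\lambda = \frac{1}{2}\ln(p_x/q_x)$ evaluates to $(2\sqrt{p_x q_x})^\theta$, and the identity $4 p_x q_x = 1 - (p_x - q_x)^2$ (valid when $p_x + q_x = 1$) together with $|p_x - q_x| \geq \mu[S]$ collapses this to $(1 - \mu[S]^2)^{\theta/2}$, as required.

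Combining the two bounds, for any fixed $c_x$ the estimated pairwise relation equals the true one with probability at least $\rho^\theta[1 - (1 - \mu[S]^2)^{\theta/2}] \geq 1 - \binom{n}{k}^{-1} n^{-l} (r-1)^{-1}$ by the hypothesis on $\theta$. A union bound over the $r-1$ non-target candidates then guarantees that every pairwise relation, and hence the Copeland scores themselves, agree with probability at least $1 - \binom{n}{k}^{-1} n^{-l}$. The trickiest step is recognizing that an optimized exponential tilt (not a direct application of Hoeffding, which would give the sharper but differently-shaped $e^{-\theta \mu[S]^2/2}$) is exactly what produces the $(1 - \mu[S]^2)^{\theta/2}$ form appearing in Inequality \ref{eq:theta_con}; the reduction from $\{-1,0,+1\}$-valued to $\{-1,+1\}$-valued $W_j$ via the $\gamma_v[S] \neq 0$ assumption is essential for this Chernoff bound to come out in closed form.
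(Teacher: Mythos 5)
Your proof is correct and follows essentially the same route as the paper: factor out $\rho^{\theta}$ for the per-sample indicator correctness, bound the probability that the sampled majority disagrees with the population majority by $\left(1-\mu[S]^2\right)^{\theta/2}$ (using $\gamma_v[S]\neq 0$ to rule out ties so the fractions sum to one), and finish with a union bound over the $r-1$ non-target candidates. The only difference is cosmetic: you derive the tail bound by an explicitly optimized Chernoff tilt giving $\left(2\sqrt{p_xq_x}\right)^{\theta}$, whereas the paper invokes the relative-entropy form of Hoeffding's inequality (Theorem~\ref{th:hoeffding}) at the threshold $\theta/2$, which evaluates to the identical quantity.
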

\begin{proof}
We derive:
\begin{small}
\begin{align}
    &\Pr \left( \widehat{F} \left( \widehat{B}^{(t)}[S],c_q \right) = F\left( B^{(t)}[S],c_q \right) \right) \nonumber \\
    &\geq \Pr \left( \forall c_x \in C \setminus \{c_q\} : \mathbbm{1} \left[ c_q \succ_{\widehat{M}} c_x \right] = \mathbbm{1} \left[ c_q \succ_M c_x \right] \right) \nonumber \\
    &\geq 1 - \sum_{c_x \in C \setminus \{c_q\}} \Pr \left( \mathbbm{1} \left[ c_q \succ_{\widehat{M}} c_x \right] \neq \mathbbm{1} \left[ c_q \succ_M c_x \right] \right) \label{eq:con_correct}
\end{align}
\end{small}
For any candidate $c_x \neq c_q$, assume (without loss of generality) that $c_q \succ_M c_x$, i.e. $\mathbbm{1} \left[ c_q \succ_M c_x \right] = 1$. Define the following:
\begin{align*}
    V_1 &= \left\{ v \in V : b^{(t)}_{qv}[S] > b^{(t)}_{xv} \right\} \\
    V_2 &= \left\{ v \in V : b^{(t)}_{qv}[S] < b^{(t)}_{xv} \right\}
\end{align*}
Let us define:
\begin{footnotesize}
\begin{equation*}
    \mu_x[S] = \frac{|V_1| - |V_2|}{n} = \frac{1}{n} \left[ \sum_{v=1}^n \left[ \mathbbm{1} \left[ b^{(t)}_{qv}[S] > b^{(t)}_{xv} \right] - \mathbbm{1} \left[ b^{(t)}_{qv}[S] < b^{(t)}_{xv} \right] \right] \right] 
\end{equation*}
\end{footnotesize}
Since $\gamma_v[S] \neq 0 \; \forall v \in V$, we have $|V_1| + |V_2| = n$. Also, by definition, $|V_1| - |V_2| = n \cdot \mu_x[S]$. Thus, we obtain $|V_1| = \frac{n}{2} \left[ 1 + \mu_x[S] \right]$ and $|V_2| = \frac{n}{2} \left[ 1 - \mu_x[S] \right]$. \\
Now, for $j \in [1, \theta]$, define $Z_j = \mathbbm{1} \left[ v_j \in V_2 \right] = \mathbbm{1} \left[ b^{(t)}_{qv_j}[S] < b^{(t)}_{xv_j} \right]$. This means
\begin{footnotesize}
\begin{align}
    &\Pr \left( \mathbbm{1} \left[ c_q \succ_{\widehat{M}} c_x \right] = \mathbbm{1} \left[ c_q \succ_M c_x \right] \right) = \Pr \left( \mathbbm{1} \left[ c_q \succ_{\widehat{M}} c_x \right] = 1 \right) \nonumber \\
    &= \Pr \left( \sum_{j=1}^\theta \mathbbm{1} \left[ \widehat{b}^{(t)}_{qv_j}[S] < b^{(t)}_{xv_j} \right] < \frac{\theta}{2} \right)\geq \rho^\theta \left[ 1 - \Pr \left( \sum_{j=1}^\theta Z_j \geq \frac{\theta}{2} \right) \right] \label{eq:con_cq}
\end{align}
\end{footnotesize}
Clearly $\mathbbm{E} \left[Z_j\right] = \Pr \left( Z_j = 1 \right) = \frac{|V_2|}{|V|} = \frac{1 - \mu_x[S]}{2} \; \forall j$. Using the concentration inequality in Theorem \ref{th:hoeffding} in Appendix \ref{sec:ineq}, and noticing that $\mu_x[S] \geq \mu[S]$, we have
\begin{small}
\begin{align*}
    \Pr \left( \sum_{j=1}^\theta Z_j \geq \frac{\theta}{2} \right) &\leq \left( \left( 1 - \mu_x[S] \right)^{\frac{1}{2}} \left( 1 + \mu_x[S] \right)^{\frac{1}{2}} \right)^\theta \\
    &= \left( 1 - [\mu_x[S]]^2 \right)^{\frac{\theta}{2}} \leq \left( 1 - [\mu[S]]^2 \right)^{\frac{\theta}{2}}
\end{align*}
\end{small}
Substituting the above into Inequality~\ref{eq:con_cq}, we have:
\begin{small}
\begin{align*}
    \Pr \left( \mathbbm{1} \left[ c_q \succ_{\widehat{M}} c_x \right] = \mathbbm{1} \left[ c_q \succ_M c_x \right] \right) 
    \geq \rho^\theta \left[ 1 - \left( 1 - [\mu[S]]^2 \right)^{\frac{\theta}{2}} \right]
\end{align*}
\end{small}
Substituting into Inequality~\ref{eq:con_correct} and using Inequality \ref{eq:theta_con},
\begin{small}
\begin{align*}
    &\Pr \left( \widehat{F} \left( \widehat{B}^{(t)}[S],c_q \right) = F\left( B^{(t)}[S],c_q \right) \right) \\
    &\geq 1 - \sum_{c_x \in C \setminus \{c_q\}} \Pr \left( \mathbbm{1} \left[ c_q \succ_{\widehat{M}} c_x \right] \neq \mathbbm{1} \left[ c_q \succ_M c_x \right] \right) \\
    &\geq 1 - (r - 1) \left[ 1 - \rho^\theta \left[ 1 - \left( 1 - [\mu[S]]^2 \right)^{\frac{\theta}{2}} \right] \right] \geq 1 - \binom{n}{k}^{-1} n^{-l}
\end{align*}
\end{small}
Note that even though we proved for the case when $c_q \succ_M c_x$, we can prove for the case when $c_x \succ_M c_q$ by simply reversing the definitions of $V_1$ and $V_2$ above and following similar steps. Since these two are mutually exclusive and exhaustive cases, the result is proved in general.
\end{proof}

To compute $\theta$ using Equation \ref{eq:theta_con}, we use a similar plotting method as with the plurality score variants (Figure \ref{fig:br_plot}).

Suppose sandwich approximation ensures that the greedy algorithm returns a $\zeta$-approximate solution to maximizing the estimated \revise{Copeland} score. In that case, denoting by $OPT$ the maximum \revise{Copeland} score for any size-$k$ seed set of $c_q$, we have the following.

\begin{theor}
If $\theta$ follows Lemma~\ref{lem:estimate_con}, our algorithm returns a $\zeta$-approximate solution $S^*$ with probability at least $1 - n^{-l}$.
\begin{small}
\begin{equation}
    F \left( B^{(t)}[S^*],c_q \right) \geq \zeta \cdot OPT
\end{equation}
\end{small}
\end{theor}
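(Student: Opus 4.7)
The plan is to mirror the argument of Theorem~\ref{th:approx_ratio}, exploiting the fact that Lemma~\ref{lem:estimate_con} gives us an \emph{exact} agreement $\widehat{F}(\widehat{B}^{(t)}[S], c_q) = F(B^{(t)}[S], c_q)$ with high probability (rather than an $\epsilon$-close estimate as in the plurality case). Because of this stronger guarantee, no additional $\epsilon$ slack enters the final approximation factor.

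First, I would fix notation by letting $S^+ = \argmax_{S \subseteq V, |S|=k} \widehat{F}(\widehat{B}^{(t)}[S], c_q)$ denote the maximizer of the estimated Copeland score, $S^o$ the true optimizer of $F(B^{(t)}[\cdot], c_q)$ (so that $F(B^{(t)}[S^o], c_q) = OPT$), and $S^*$ the seed set returned by the greedy/sandwich procedure. By the hypothesis, the sandwich approximation guarantees
\begin{small}
\begin{equation*}
    \widehat{F}(\widehat{B}^{(t)}[S^*], c_q) \geq \zeta \cdot \widehat{F}(\widehat{B}^{(t)}[S^+], c_q).
\end{equation*}
\end{small}

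Next, I would apply Lemma~\ref{lem:estimate_con}: for any fixed size-$k$ seed set $S$, Equation~\ref{eq:error_con} holds with probability at least $1 - \binom{n}{k}^{-1} n^{-l}$ whenever $\theta$ satisfies Inequality~\ref{eq:theta_con}. Taking a union bound over all $\binom{n}{k}$ size-$k$ subsets of $V$, the event
\begin{small}
\begin{equation*}
    \mathcal{E} = \Big\{ \widehat{F}(\widehat{B}^{(t)}[S], c_q) = F(B^{(t)}[S], c_q) \text{ for every } S \subseteq V \text{ with } |S| = k \Big\}
\end{equation*}
\end{small}
occurs with probability at least $1 - n^{-l}$. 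Conditioning on $\mathcal{E}$, both $S^*$ and $S^o$ enjoy the exact identity.

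Finally, conditioning on $\mathcal{E}$, I would chain the inequalities:
\begin{small}
\begin{align*}
    F(B^{(t)}[S^*], c_q) &= \widehat{F}(\widehat{B}^{(t)}[S^*], c_q) && (\text{event } \mathcal{E}) \\
    &\geq \zeta \cdot \widehat{F}(\widehat{B}^{(t)}[S^+], c_q) && (\text{sandwich approximation}) \\
    &\geq \zeta \cdot \widehat{F}(\widehat{B}^{(t)}[S^o], c_q) && (\text{definition of } S^+) \\
    &= \zeta \cdot F(B^{(t)}[S^o], c_q) = \zeta \cdot OPT && (\text{event } \mathcal{E}).
\end{align*}
\end{small}
Since $\mathcal{E}$ holds with probability at least $1 - n^{-l}$, we conclude $\Pr(F(B^{(t)}[S^*], c_q) \geq \zeta \cdot OPT) \geq 1 - n^{-l}$, as desired.

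The main subtlety I anticipate is ensuring the union bound calibration is correct: Lemma~\ref{lem:estimate_con} was designed with the per-set failure probability $\binom{n}{k}^{-1} n^{-l}$ precisely so that the union over all $\binom{n}{k}$ size-$k$ subsets yields a clean $n^{-l}$ tail. Everything else is essentially bookkeeping, since the exact-equality version of the concentration result removes the need for the $1+\zeta$ slack term that appeared in the plurality analysis.
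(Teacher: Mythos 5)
Your proposal is correct and follows the same overall strategy as the paper's proof: invoke Lemma~\ref{lem:estimate_con}, take a union bound over all $\binom{n}{k}$ size-$k$ seed sets to get a simultaneous guarantee with probability at least $1 - n^{-l}$, and then chain $F(S^*) \to \widehat{F}(S^*) \to \zeta\,\widehat{F}(S^+) \to \zeta\,\widehat{F}(S^o) \to \zeta\,F(S^o) = \zeta \cdot OPT$. The one substantive difference is in your favor: you correctly exploit the fact that Equation~\ref{eq:error_con} is an \emph{exact} equality $\widehat{F}(\widehat{B}^{(t)}[S],c_q) = F(B^{(t)}[S],c_q)$, so no slack enters and the final bound is exactly $\zeta \cdot OPT$, matching the theorem statement. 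The paper's own proof, by contrast, appears to have been transcribed from the plurality-score argument (Theorem~\ref{th:approx_ratio}): it retains the $\frac{\epsilon}{2}\cdot OPT$ correction terms at each step and terminates at $(\zeta - \epsilon)\cdot OPT$, which is strictly weaker than the claimed $\zeta \cdot OPT$ and also references an $\epsilon$ that plays no role in Lemma~\ref{lem:estimate_con}. Your version is the cleaner and more faithful derivation of the stated result.
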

\begin{proof}
Let $S^*$ be the seed set returned by our algorithm, and $S^+ = \argmax_{S \subseteq V, |S| = k} \widehat{F}( \widehat{B}^{(t)}[S],c_q )$.
Let $S^o$ denote the optimal solution to our Problem \ref{prob:fjvote}.
Assume that $\theta$ satisfies Inequality \ref{eq:theta_con}. By Lemma \ref{lem:estimate_con}, Equation \ref{eq:error_con} holds with probability at least $1 - \binom{n}{k}^{-1}n^{-l}$ for any size-$k$ seed set $S$. Then, by the union bound, Equation \ref{eq:error_con} should hold simultaneously for all size-$k$ seed sets with probability at least $1-n^{-l}$. In that case, we have
\begin{small}
\begin{align*}
    &F\left( B^{(t)}[S^*],c_q \right) \\
    &> \widehat{F} \left( \widehat{B}^{(t)}[S^*], c_q \right) - \frac{\epsilon}{2} \cdot OPT \qquad\qquad\qquad\qquad\quad (\text{Equation \ref{eq:error_con}}) \\
    &\geq \zeta \cdot \widehat{F} \left( \widehat{B}^{(t)}[S^+], c_q \right) - \frac{\epsilon}{2} \cdot OPT \qquad (\text{Sandwich Approximation}) \\
    &\geq \zeta \cdot \widehat{F} \left( \widehat{B}^{(t)}[S^o], c_q \right) - \frac{\epsilon}{2} \cdot OPT \qquad\qquad\quad (\text{Definition of }S^+) \\
    &> \zeta \cdot \left(F\left( B^{(t)}[S^o], c_q \right) - \frac{\epsilon}{2} \cdot OPT\right) - \frac{\epsilon}{2} \cdot OPT \quad (\text{Equation \ref{eq:error_con}}) \\
    &= \zeta \cdot OPT - \frac{1 + \zeta}{2} \cdot \epsilon \cdot OPT \qquad\quad \left( F \left( B^{(t)}[S^o],c_q \right) = OPT \right) \\
    &> \left( \zeta - \epsilon \right) OPT \qquad\qquad\qquad\qquad \qquad\qquad\qquad\qquad\qquad (\zeta < 1)
\end{align*}
\end{small}
\end{proof}

Note that the LHS of Equation \ref{eq:theta_con} requires the value of $\mu[S]$, which is not monotonic with $S$. A value which works for all seed sets is given by
\begin{small}
\begin{equation}
    \mu^* = \min_{S \subseteq V : |S| \leq k} \mu[S]
\end{equation}
\end{small}
This value can be estimated akin to $\gamma^*$ in Equation \ref{eq:gamma}.

\vspace{-1mm}
\subsection{Heuristic Estimation of $\theta$ for the \revise{Plurality} Score Variants and the \revise{Copeland} Score}
\label{sec:theta_exp}


While theoretical bounds on $\theta$ for the \revise{plurality} score variants and the \revise{Copeland} score can be derived as above,
we find them to be not so effective: {\bf (1)} From the inequalities obtained in the theoretical guarantees, it is difficult to compute a closed-form expression for $\theta$; 
{\bf (2)} The sandwich approximation factor is smaller than $(1-1/e)$ (\S~\ref{sec:practical_effectiveness}); coupled with the approximation via sketches, the overall approximation factor is even smaller.
Instead, we use a heuristic method to compute the optimal value of $\theta$.
Note that our sketch-based method is more efficient than our
random walk-based approach only when $\theta < n$. For a given dataset
and score, we empirically find the smallest $\theta$ when that score
converges (for some $k$ and $t$). This one-time estimate of $\theta$
can be re-used on the same dataset and score, even with different number of seeds ($k$)
and time horizon ($t$) as inputs, since we find such an estimate to be less sensitive
to $k$ and $t$. In \S~\ref{sec:sensitivity}, we demonstrate that the above mentioned heuristic estimation
of $\theta$ produces good-quality results.

\vspace{-1mm}
\section{Related Works}
\label{sec:related}
\vspace{-1mm}
%
%
\spara{Opinion Manipulation.} \cite{DDMH17, PB15, BC18} consider
network modification
to enable (or prevent) opinion consensus (or convergence).
\cite{GMS20} proposes strategies for manipulating users' opinions with the voter model.
Opinion maximization with the voter model is considered in \cite{MMLB20,KK19,Li0WZ13}.
Conformity, an opposite notion of stubbornness (used in the FJ model),
measures the likelihood of a user adopting the opinions of her neighbors. Conformity-based
opinion maximization has been studied in \cite{DasGKL14,0005BSC15}, albeit in a {\em single-campaign setting}. \cite{GionisTT13, AKPT18} study seed selection for opinion maximization in a single-campaign and
without a given finite time horizon
 (details in Appendices \ref{sec:compare_om} and \ref{sec:time_horizon}).
To the best of our knowledge, {\bf (a)} {\em we are the first to bridge two different disciplines:
{\bf (1)} seed selection for opinion maximization at a finite time horizon and {\bf (2)} voting-based winning criteria with multiple campaigners}.
Moreover, {\bf (b)} {\em we are the first to design random walk and sketch-based efficient algorithms, with theoretical guarantees,
for DeGroot and FJ model-based opinion maximization}.

Recall that the cumulative score, due to its aggregate nature, is independent of the other campaigns; thus
it is similar to opinion maximization in a single-campaigner setting \cite{GionisTT13}. Hence, the greedy algorithm
in \cite{GionisTT13}, with proper modifications (e.g., adapted for a finite time horizon), would become similar to our
Algorithm~\ref{alg:whole} via direct matrix-vector multiplication for the cumulative score. Regarding this score, however, we make the following {\em novel} contributions: {\bf (a)} our \NP-hardness and submodularity proofs for the
cumulative score (those in \cite{GionisTT13} cannot be trivially extended to our case with any finite time horizon);
{\bf (b)} our random walk and sketch-based {\em efficient} algorithms, {\em with theoretical guarantees}, for the cumulative score ({\em more efficient} than
the greedy algorithm in \cite{GionisTT13}).

\vspace{-0.6mm}
\spara{Other Opinion Diffusion Models.}
Opinion diffusion has been investigated both from network science and statistical physics \cite{Weidlich1971THESD,Galam1982SociophysicsAN} perspectives,
and via discrete and continuous models.
In discrete models, an individual opinion is confined to be one of several integers; examples include the voter model \cite{HL75},
Axelrod model \cite{A97}, Sznajd model \cite{SS00}, majority rule models \cite{PhysRevLett.90.238701,Lambiotte_2008},
and social impact theory \cite{Bordogna_2007}. For instance, in the voter model, at each time stamp,
a node chooses a random neighbor and adopts the state (i.e., preference
for a certain campaigner) of this neighbor.
In contrast, continuous models, including
DeGroot \cite{DeG74} (the classic model) and its extensions ---
FJ \cite{FJ90,FJ99}, Deffuant \cite{Deffuant2000}, bounded confidence (BC) \cite{DNAW00}
and HK \cite{HK02} models, permit opinions to be represented by real numbers.
As such, these models are well-suited to be integrated with voting-based winning criteria in a multi-campaign setting.

\begin{table}[t]
	\scriptsize
	\centering
	\begin{center}
	\vspace{-2mm}
		\caption{\small Characteristics of our datasets}
		\vspace{-3mm}
		\begin{tabular}{c||ccc}
			\textbf{Name}   & \textbf{\#Nodes} & \textbf{\#Edges} & \textbf{\#Candidates} \\ \hline \hline
			\revise{\textbf{DBLP}}   & \revise{63\,910} & \revise{2\,847\,120}    & \revise{2} \\
			\textbf{Yelp}   & 966\,240 & 8\,815\,788     & 10 \\ 
			\textbf{Twitter\_US\_Election} & 2\,246\,604 & 4\,270\,918 & 4 \\ 
			\textbf{Twitter\_Social\_Distancing} & 3\,244\,762 & 4\,202\,083 & 2 \\ 
			\textbf{Twitter\_Mask} & 2\,341\,769 & 3\,241\,153 & 2 \\	
		\end{tabular}
		\vspace{-6mm}
		\label{tab:dataset}
	\end{center}
\end{table}
\vspace{-1mm}
\section{Experimental Results}
\label{sec:exp}
\vspace{-1mm}
We perform experiments to demonstrate the accuracy, efficiency, scalability, and memory usage of our
methods\eat{, along with the robustness of our scores and the characteristics of the seeds}. Our code (available at \cite{code})
is executed on a single core, 512GB, 2.4GHz Xeon server.
\vspace{-1.5mm}
\subsection{Experimental Setup}
\label{sec:setup}
\vspace{-1mm}
\spara{Datasets.} We obtain \revise{five} directed graphs from \revise{three} real sources (Table \ref{tab:dataset}).
\revise{{\bf (1)} {\sf DBLP} \cite{dblp} is a well-known collaboration network. Nodes are users and edges are co-author relations. We only consider senior researchers who have published at least 50 papers.}
{\bf (2)} {\sf Yelp} \cite{yelp}
is a network of users who review businesses. Nodes are users and edges are friendships. We generate a graph based on restaurant-related records.
{\bf (3)} {\sf Twitter} is a social network. Nodes are users and edges are re-tweet relationships. We generate graphs from 24M tweets (Jul. 1 to Nov. 11, 2020) related to US elections \cite{25te-j338-20},
and 75M tweets (Mar. 19 to Oct. 5, 2020) related to two topics (``Social distancing'' and ``Wear a mask'') about COVID-19 \cite{781w-ef42-20}.

\vspace{-0.8mm}
\spara{Candidates.} \revise{{\bf (1)} {\sf DBLP.} We consider the candidates for the post of President in the ACM general election 2022, i.e., Yannis E. Ioannidis and Joseph A. Konstan.} {\bf (1)} {\sf Yelp.} We use the restaurant categories as candidates, e.g., American, Chinese, Italian, etc.
{\bf (2)} {\sf Twitter}. The political parties (Democratic, Republican, Green, Libertarian) are the candidates in {\em Twitter\_US\_Election}.
For each of the topics related to COVID-19, people may tweet for or against it. These two standpoints are the candidates in the respective Twitter COVID-19 datasets.
Without loss of generality, we
consider the following default target candidates for the respective datasets: \revise{``Joseph A. Konstan''}, ``Chinese Restaurant'', ``Democratic Party'', ``For Wearing a Mask'', and ``For Social Distancing''.

\vspace{-0.6mm}
\spara{Edge Weights.} Intuitively, for each category in {\sf Yelp}, if user $v$ visits a restaurant within one month of her friend $u$ (called a common visit), we say that $u$ influences $v$. Also, more common visits implies higher influence, and hence a larger edge weight. Thus, the edge $(u, v)$ is assigned a weight of $1 - e^{- a / \mu}$ \cite{PBGK10}, where $a$ is the number of common visits.
We set $\mu=10$ by default (details given in Appendix~\ref{sec:mu}). Similarly, we obtain edge weights \revise{(1) using the co-authorship counts for {\sf DBLP};} and (2) using
the number of retweets of a user pair for the {\sf Twitter} datasets. Finally, we normalize the edge weights such that the incoming weights of each node add up to 1.

\vspace{-0.8mm}
\spara{Initial Opinion Values.} \revise{{\bf (1)} {\sf DBLP.} A user's initial opinion is computed as the cosine similarity between the embeddings (obtained using SpaCy \cite{vasiliev2020natural}) of her papers to those of a candidate.} {\bf (2)} {\sf Yelp.} We use the average rating of a user towards a category as the initial opinion value.
{\bf (3)} {\sf Twitter.} 
We set the average sentiment score (computed using VADER \cite{HG14}) of each user about each candidate as her initial opinion. All the initial opinion values are normalized to $[0,1]$.

\vspace{-0.8mm}
\spara{Stubbornness Values.} \revise{{\bf (1)} {\sf DBLP} (resp. {\bf (2)} {\sf Yelp}). We set the stubbornness value of a user to 1 minus the variance of her yearly (resp. monthly) average opinions (as above)}, since a stubborn user is less likely to change her opinion about a candidate.
{\bf (3)} {\sf Twitter.} Since most users have only 1 tweet, we assign stubborness values uniformly at random in $[0,1]$.

\vspace{-0.6mm}
\spara{Methods Compared.} We find the best seed set by {\bf (1)} {\sf Direct Matrix Multiplication (DM)} via the greedy framework, coupled with CELF optimization \cite{LeskovecKGFVG07}. {\bf (2)} {\sf Random Walk Simulation (RW)} and {\bf (3)} {\sf Reverse Sketching (RS)} methods are implemented for better efficiency, with accuracy guarantees. We compare them with {\bf (4)} {\sf Independent Cascade (IC)} and {\bf (5)} {\sf Linear Threshold (LT)} models-based seed selection, both coupled with {\sf IMM} \cite{TSX15}, considering only the edge weights, and assuming that a user has only one chance to accept or reject a candidate. Multi-campaign versions {\sf MCIC} and {\sf MCLT} \cite{BAA11,OCC16} also exist. However, in our problem setting, the opinions diffuse independently for different candidates, and our algorithm selects seeds for the target candidate. With this setting, {\sf MCIC} and {\sf MCLT} reduce to {\sf IC} and {\sf LT}, respectively. Thus, we do not include them in our experiments. In addition, we also compare against the {\bf (6)} {\sf Greedy} algorithm in \cite{GionisTT13} for opinion maximization, adapted for a finite time horizon, which is denoted by {\sf GED-T}. Other baselines include seed selection via {\bf (7)} {\sf PageRank score (PR)} (based on the intuition that more frequently reached nodes in a random graph traversal are more likely to influence other users), {\bf (8)} {\sf Random Walk with Restart (RWR)} \cite{GionisTT13} and {\bf (9)} {\sf Degree Centrality (DC)}. All baselines differ only in the seed selection methods. Once the seeds are selected, all of them are evaluated in the same multi-campaign setting with the same diffusion model and scores as in \S~\ref{sec:preliminaries}. We could not compare against \cite{AKPT18} since their algorithms only work for small graphs and require more than 512GB memory on our datasets.

\vspace{-0.8mm}
\spara{Parameters.} {\bf (1)} {\sf Seed set size (k).} We vary $k$ from 100 to 2000. In \S~\ref{sec:sensitivity}, $k$ is set to 100 by default. {\bf (2)} {\sf Time horizon (t).} We vary $t$ from 0 to 30 steps (default: 20 steps). {\bf (3)} {\sf Random Walk Simulation.} We vary $\rho$ from 0.75 to 0.95 (default: 0.9). $\delta$ is set to 0.1. {\bf (4)} {\sf Sketches.} We vary $\epsilon$ from 0.05 to 0.3 (default: 0.1). $l$ is set to 1 following \cite{TSX15}.

\vspace{-0.8mm}
\spara{Performance Metrics.} {\bf (1)} {\sf Accuracy.} We report the {\em cumulative}, {\em \revise{plurality}}, and {\em \revise{Copeland}} scores (\S~\ref{sec:voting}) of the seed sets returned by the above methods.
{\bf (2)} {\sf Efficiency.} We report the running time of each method for finding the best seed set.

\begin{table*}[t!]
	\scriptsize
	\centering
	\begin{center}
	    \revise{
		\caption{\small Case study: ACM General Election (\#Users=63910, \#Seeds=100, Time Horizon $t=20$)}
		\vspace{-3mm}
		\begin{tabular}{c|l|c|c|c}
		    \hline	\multirow{2}{*}{\textbf{Domain}} &	\textbf{Top-10 seeds and their distribution across domains} &
		    \multirow{2}{*}{\textbf{Total \#users}} &\multicolumn{2}{c}{\textbf{\# Users voting for target candidate}}  \\  \cline{4-5}
			& \textbf{in which they influence the most} & & Without seeds & With seeds\\ \hline \hline
			\multirow{3}{*}{Data Management (DM)} & \{Jiawei Han, Victor C. M. Leung, Philip S. Yu, & \multirow{3}{*}{5056} & \multirow{3}{*}{\ \ 1138 (22.5\%)\ \ } & \multirow{3}{*}{4060 (80.3\%)} \\
            & Lei Zhang, Athanasios V. Vasilakos, Dusit Niyato & & \\
            &  Witold Pedrycz\} \quad\qquad\qquad\qquad\qquad\qquad\qquad\qquad & & \\ \hline
            Human Computer & \{Yoshua Bengio, H. Vincent Poor, Lei Zhang, & \multirow{2}{*}{4688} & \multirow{2}{*}{360 (7.7\%)} & \multirow{2}{*}{3345 (71.4\%)} \\
            Interaction (HCI) &  Dusit Niyato\}\qquad\qquad\qquad\qquad\qquad\qquad\qquad& & \\ \hline
            \multirow{2}{*}{Machine Learning (ML)} & \{Yoshua Bengio, Philip S. Yu, Witold Pedrycz,  & \multirow{2}{*}{4263} & \multirow{2}{*}{161 (3.8\%)} & \multirow{2}{*}{3125 (73.3\%)} \\
            & Jiawei Han\}\qquad\qquad\qquad\qquad\qquad\qquad\qquad\qquad& & \\ \hline
            \multirow{2}{*}{Computer Networks (CN)} & \{ H. Vincent Poor, Dusit Niyato, Luca Benini, & \multirow{2}{*}{4969} & \multirow{2}{*}{1241 (25.0\%)} & \multirow{2}{*}{4620 (93.0\%)} \\
            & Victor C. M. Leung, Lei Zhang\} \qquad\qquad\qquad\qquad & & \\ \hline
            Algorithms (AL) & \{Athanasios V. Vasilakos, Witold Pedrycz\}  & 2641 & 136 (5.1\%) & 1382 (52.3\%) \\  \hline
            Software (SW) & \{Luca Benini\}  & 1729 & 936 (54.1\%) & 1528 (88.4\%) \\  \hline
            Hardware (HW) & \{Luca Benini, H. Vincent Poor\}  & 4113 & 780 (19.0\%) & 3486 (84.8\%) \\  \hline
		\end{tabular}}
		\label{tab:case}
	\end{center}
\end{table*}

\begin{table*}[t!]
	\scriptsize
	\vspace{-2mm}
	\centering
	\begin{center}
	    \revise{
		\caption{\small Topics constituting the domains}
		\vspace{-3mm}
		\begin{tabular}{p{0.05\linewidth}|p{0.9\linewidth}}
	        \hline	
		    Domain & Topics \\ \hline
		    DM & data management, database systems, data mining, query processing, indexing, graphs, knowledge bases, clustering, social networks, recommender systems, data analysis, data streams, anomaly detection, information flow, semantic web, information retrieval, association rules, ranking, schema, relational, XML, joins  \\ \hline
		    HCI & recognition systems, detection systems, multimedia applications, image processing, signal processing, adaptive filtering, digital filtering, FIR filtering, language models, pose estimation, motion estimation, face recognition, speech recognition, natural languages, image sensors, image annotation, computer graphics, human actions, 3D reconstruction, moving objects, user interfaces  \\ \hline
		    ML & neural networks, Bayesian networks, Gaussian processes, reinforcement learning, machine learning, active learning, probabilistic models, Markov model, particle filtering, collaborative filtering, recommender systems, decision trees, time series, recurrent neural, feature selection, random fields, regression, classification, pattern matching \\ \hline
		    CN & distributed networks, cellular networks, ad-hoc networks, overlay networks, area networks, mobile networks, peer-to-peer networks, wireless networks, signal processing, adaptive filtering, digital filtering, FIR filtering, congestion control, routing protocols, wireless communications, fading channels, wireless sensors  \\ \hline
		    AL & linear systems, non-linear systems, graphs, approximation algorithms, data structures, programming languages, linear programming, dynamic programming, shortest paths, proofs, theorems, algebra, polynomial, quantum \\ \hline
		    SW & software systems, mobile applications, web applications, source code, programming languages, web services, web sites, software engineering, software development, user interfaces, software architecture \\ \hline
		    HW & real-time systems, embedded systems, control systems, distributed systems, scheduling, virtual machines, state machines, access control, power control, VLSI, FPGA, integrated circuits, digital circuits, analog circuits, power amplifiers, shared memory, synthesis tools, on-chip, caches, clocks, CMOS, mobile devices  \\ \hline
		\end{tabular}}
		\vspace{-6mm}
		\label{tab:topics}
	\end{center}
\end{table*}

\begin{figure}[t!]
	\vspace{-1mm}
	\centering
	\revise{
	\subfigure[\small {\em Without Seeds}]
	{\includegraphics[scale=0.3]{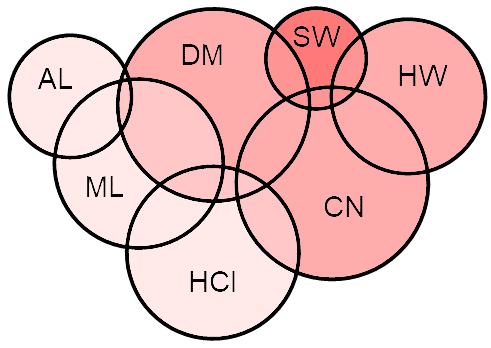}
		\label{fig:case_without}}
    \subfigure[\small {\em With Seeds}]
	{\includegraphics[scale=0.3]{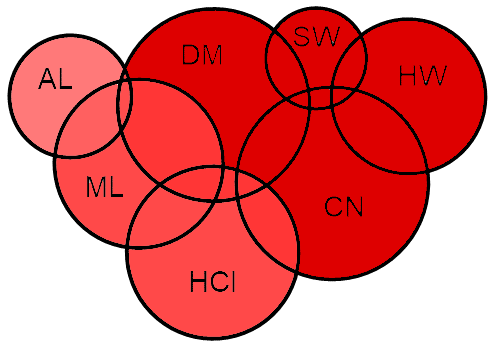}
		\label{fig:case_with}}	
	\vspace{-3mm}
	\caption{\small Case study: ACM general election  (\#Users=63910, \#Seeds =100, Time Horizon $t=20$). The size of each circle denotes the population of users in each domain, while the color captures the percentage of users who vote for the target candidate (Joseph A. Konstan). Darker color corresponds to higher percentage.}
	\label{fig:case}
	\vspace{-2mm}}
\end{figure}

\vspace{-2mm}
\subsection{\revise{Case Study: ACM General Election 2022; {\sf DBLP} Dataset}}
\label{sec:case}
\vspace{-1mm}
\revise{We observe that after including only the top-100 seeds, the number of users favoring our target candidate {\em Joseph A. Konstan} will significantly increase from 13\,990 (21.8\%) to 46\,433 (72.7\%), which might have reversed the election result. We select 7 frequent domains\footnote{\revise{We assume that a user may belong to at most 3 domains based on the frequencies of several keywords in the titles of their publications.
The selected keywords for each domain can be found in Table \ref{tab:topics}.}}
for the users who change their preferred candidates, and show the top-10 seeds and the domains in which these seeds influence the most 
(Table~\ref{tab:case}). Figure~\ref{fig:case} visualizes the domain overlaps 
and the percentage of users voting for our target candidate {\em Joseph A. Konstan}. Notice that a seed user may influence users from several domains. As DM is a common domain of both candidates, 7 out of the top-10 seeds are also active in the DM domain.
Only 1-2 seeds are from the SW and HW domains, since (1) the users in the SW domain already favor our target candidate more based on their initial opinions (thus introducing seeds who can influence users in this domain is not that useful); (2) the HW domain does not overlap with the DM domain. The number of seeds who influence the HCI, ML, and CN domains are higher, because (1) these domains have larger populations; (2) these domains have large overlaps with DM; and (3) the users in these domains initially prefer the competitor ({\em Yannis E. Ioannidis}) more, thus introducing seed nodes who can influence users in these domains is more helpful. Furthermore, we investigate the average distance 
between the candidates and those users who change minds after introducing the seeds. 14.5\% of them are closer to the target candidate, and 10.2\% of them are closer to the competitors (about 2 hops away). The majority of these users (75.3\%) are almost equidistant from both candidates (more than 3 hops away). This demonstrates that our solution focuses more on affecting the neutral users whose preferences are usually easier to switch.}

\begin{figure*}
  \centering
  \vspace{-4mm}
  \includegraphics[scale=0.47,angle=270]{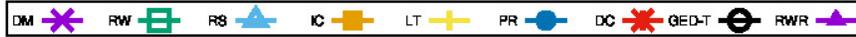}
  \vspace{-7mm}
  \caption{\small \revise{Legends for the methods compared in Figures \ref{fig:varyK_rank}-\ref{fig:varyK_cumu}}}
  \label{fig:legends}
    \vspace{-2mm}
\end{figure*}
\begin{figure*}[t!]
	\vspace{-4mm}
	\centering
	\subfigure[\small {\em \revise{Yelp}}]
	{\includegraphics[scale=0.154,angle=270]{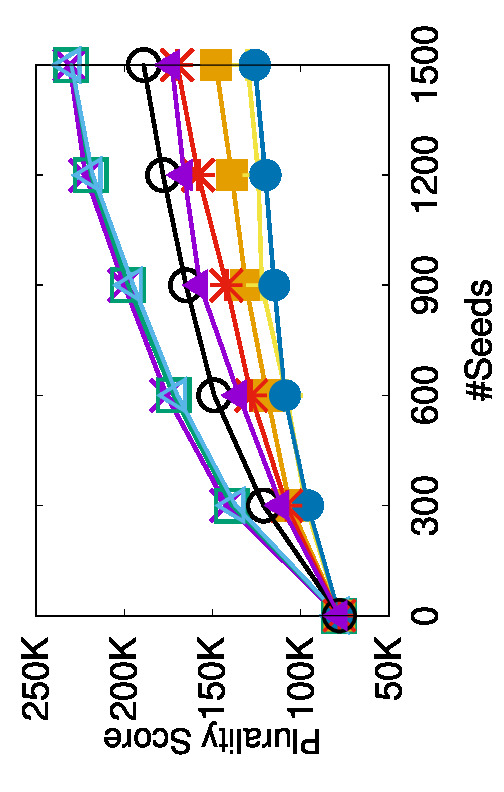}
		\label{fig:yelp_rank}}
	\subfigure[\small \revise{{\em Twitter\_US\_Election}}]
	{\includegraphics[scale=0.154,angle=270]{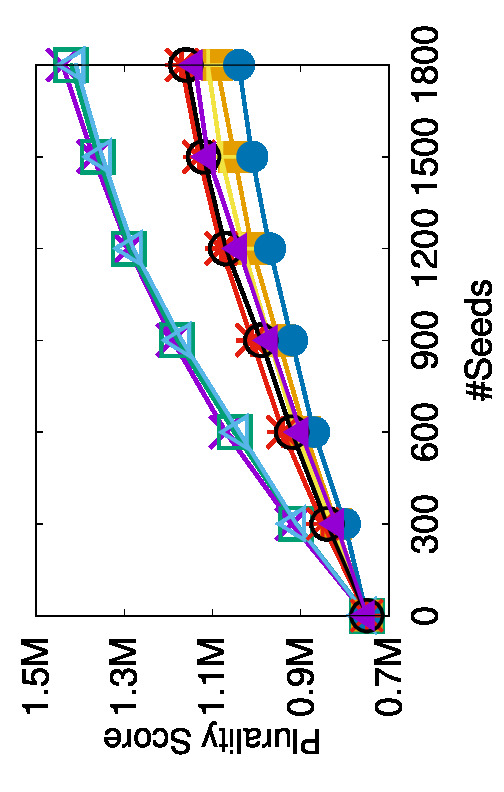}
	    \label{fig:us_rank}}
    \subfigure[\small {\em Twitter\_Mask}]
	{\includegraphics[scale=0.154,angle=270]{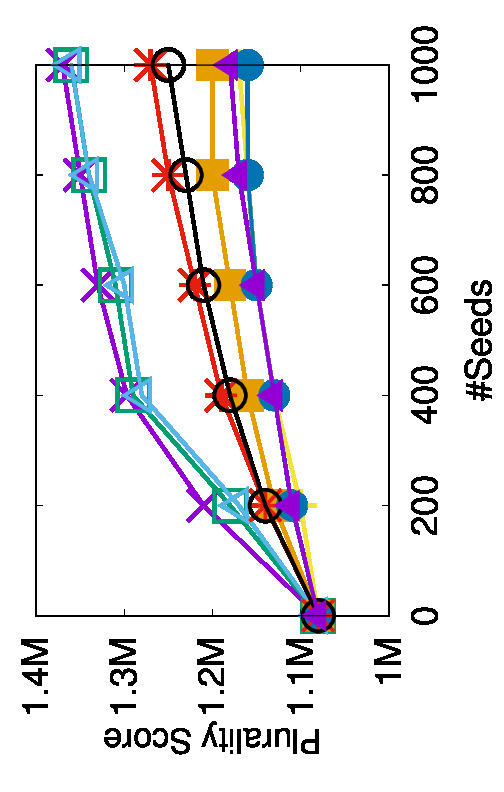}
		\label{fig:mask_rank}}	
    \subfigure[\small {\em Twitter\_Mask}]
	{\includegraphics[scale=0.154,angle=270]{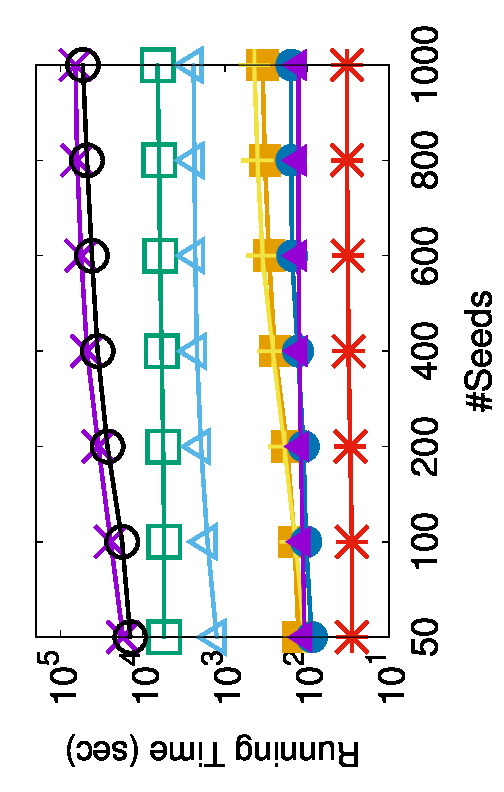}
		\label{fig:t_rank}}
	\vspace{-5mm}
	\caption{\small \revise{Plurality} score vs. seed set size $k$: (a-c) effectiveness, (d) efficiency}
	\label{fig:varyK_rank}
	\vspace{-2.5mm}
\end{figure*}
\begin{figure*}[t!]
	\vspace{-4mm}
	\centering
	\subfigure[\small {\em Yelp}]
	{\includegraphics[scale=0.154,angle=270]{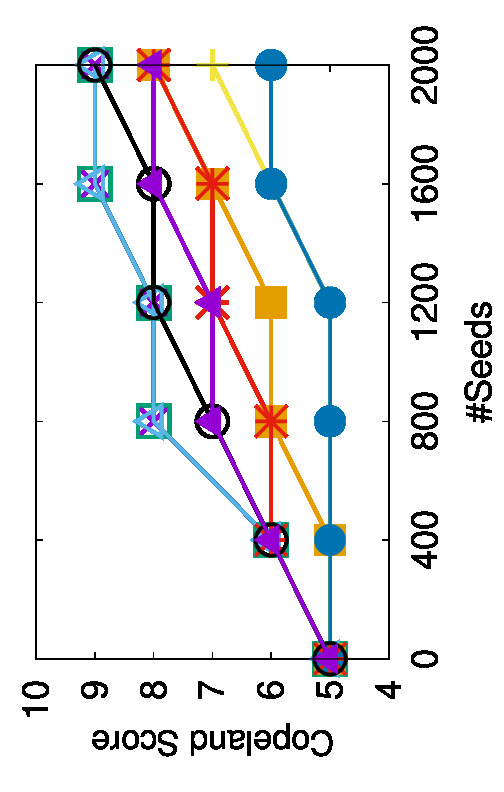}
		\label{fig:yelp_cond}}
	\subfigure[\small {\em \revise{Twitter\_US\_Election}}]
	{\includegraphics[scale=0.154,angle=270]{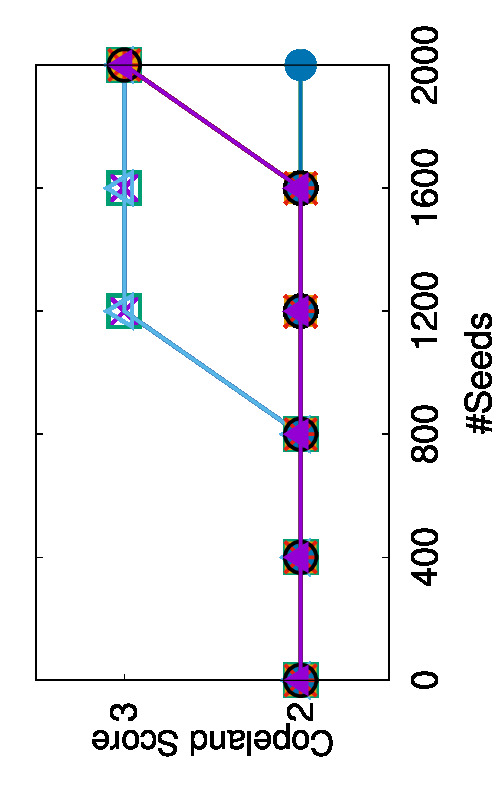}
		\label{fig:us_cond}}
	\subfigure[\small {\em \revise{Twitter\_Mask}}]
	{\includegraphics[scale=0.154,angle=270]{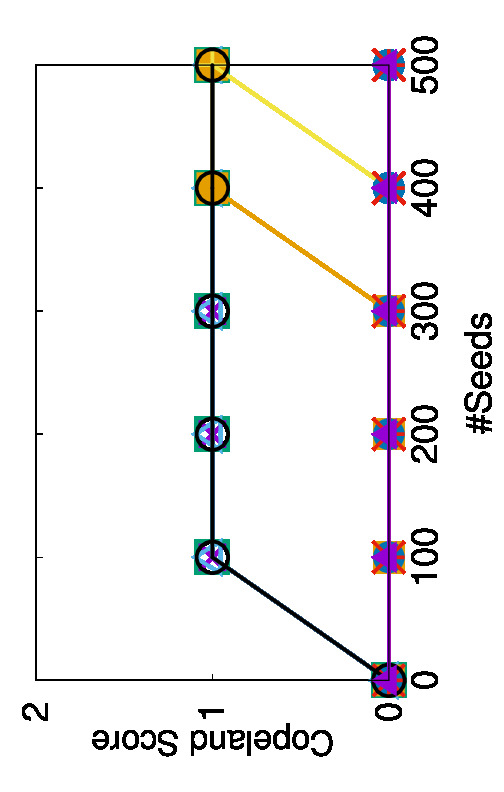}
		\label{fig:mask_cond}}
	\subfigure[\small {\em Twitter\_Mask}]
	{\includegraphics[scale=0.154,angle=270]{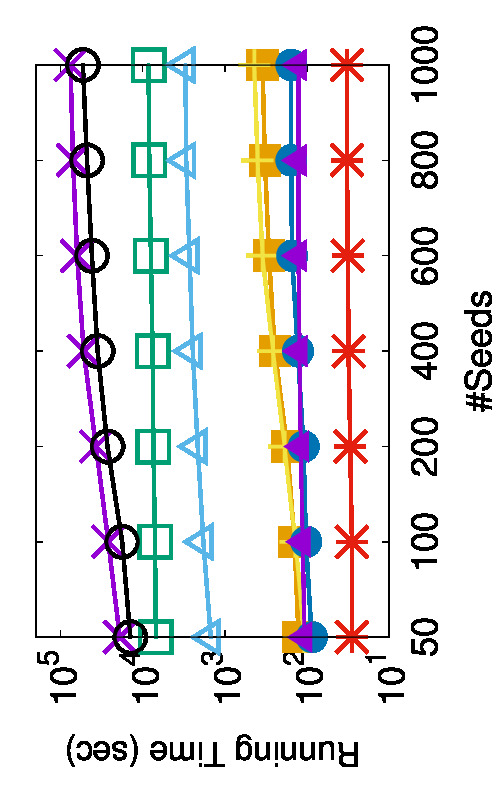}
		\label{fig:t_cond}}
	\vspace{-5mm}
	\caption{\small \revise{Copeland} score vs. seed set size $k$: (a-c) effectiveness, (d) efficiency}
	\label{fig:varyK_cond}
	\vspace{-2.5mm}
\end{figure*}
\begin{figure*}[t!]
	\vspace{-4mm}
	\centering
	\subfigure[\small {\em \revise{Yelp}}]
	{\includegraphics[scale=0.154,angle=270]{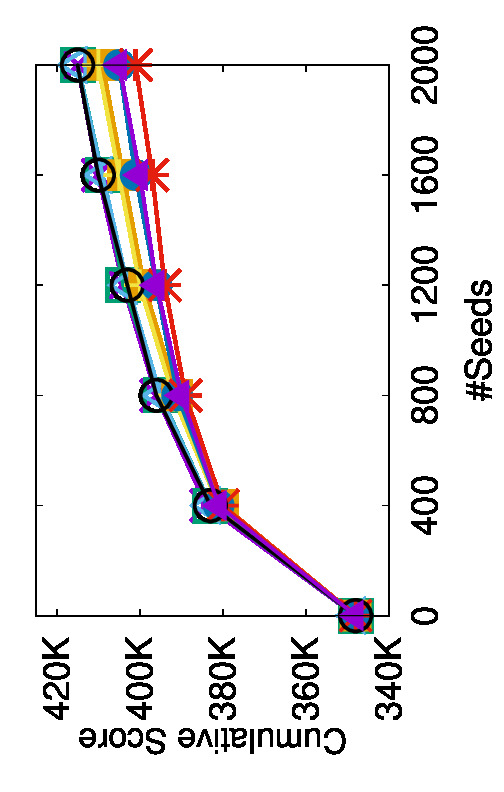}
		\label{fig:yelp_cumu}}
	\subfigure[\small {\emph{ Twitter\_US\_Election}}]
	{\includegraphics[scale=0.154,angle=270]{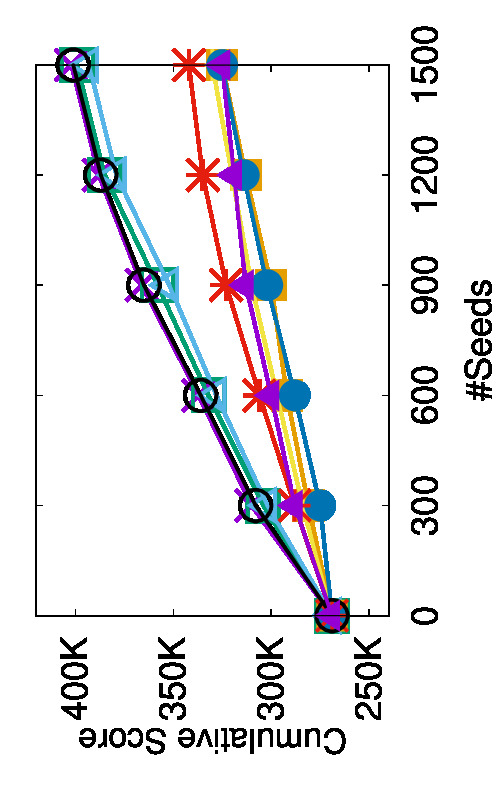}
		\label{fig:us_cumu}}
	\subfigure[\small {\em \revise{Twitter\_Mask}}]
	{\includegraphics[scale=0.154,angle=270]{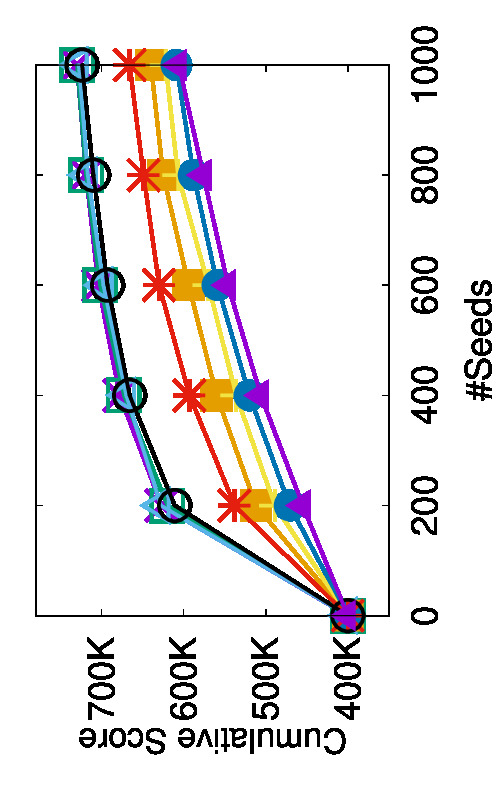}
		\label{fig:mask_cumu}}
	\subfigure[\small {\em Twitter\_Mask}]
	{\includegraphics[scale=0.154,angle=270]{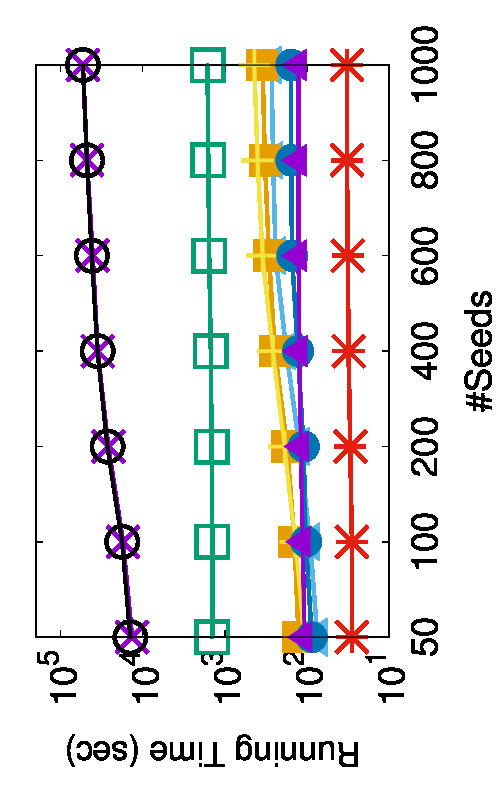}
		\label{fig:t_cumu}}
	\vspace{-5mm}
	\caption{\small Cumulative score vs. seed set size $k$: (a-c) effectiveness, (d) efficiency}
	\label{fig:varyK_cumu}
	\vspace{-4.5mm}
\end{figure*}

\vspace{-1mm}
\subsection{Performance Analysis}
\label{sec:varyK}
\vspace{-2mm}
\spara{Accuracy.} Our proposed methods outperform the baselines in all voting-based scores (Figures~\ref{fig:varyK_rank}-\ref{fig:varyK_cumu} (a-c)),
with the exception of our {\sf DM} vs. baseline {\sf GED-T} for the cumulative score.
The scores increase with the number of seeds $k$, and the growth rates are higher when $k$ is small.
For the \revise{plurality} and \revise{Copeland} scores, the proposed methods outperform the baselines more significantly.
For example, in {\em Twitter\_Social\_Distancing}, the best baseline {\sf DC} reaches up to 70\% of {\sf RW} with the cumulative score, while it
attains only 50\% of {\sf RW} with the \revise{plurality} score (the actual score difference is nearly 100K users, which can lead to a significant
impact in, e.g., an election's outcome). The classic IMM algorithm coupled with the IC and LT models performs poorly with voting-based scores,
as does {\sf GED-T}, since their seeds maximize different objective functions.
Recall that {\sf GED-T} is the greedy algorithm for opinion maximization \cite{GionisTT13}, adapted for a finite time horizon.
The cumulative score, due to its aggregate nature, is similar to opinion maximization in the single campaign setting, and therefore
our {\sf DM} and baseline {\sf GED-T} perform the same for the cumulative score ({\em only}).

\vspace{-1mm}
\spara{Efficiency.} The running time of {\sf RW} remains nearly the same for different $k$
(Figures~\ref{fig:varyK_rank}-\ref{fig:varyK_cumu} (d)),
while that of {\sf RS} increases slightly with $k$.
For {\sf RW}, we generate a fixed number (independent of $k$) of random walks starting from
each node (Theorem~\ref{th:cumu_rws}); while for {\sf RS}, we generate
one random walk starting from $\theta$ randomly sampled nodes (Theorem~\ref{th:approx_ratio}).
A larger $k$ does not necessarily increase $\theta$ as {\bf (1)} $OPT$ in the
denominator increases with $k$; {\bf (2)} $\binom{n}{k}$ in the numerator also increases
with $k$. Moreover, the random walk generation dominates the running time of both {\sf RW} and
{\sf RS}. The running time of {\sf DM} increases linearly with $k$, since it applies matrix-vector
multiplication in each of $k$ iterations.
The running times for the \revise{plurality} and \revise{Copeland} scores are higher than those of the cumulative
score, but follow the same trend. We also find that, among our proposed algorithms, {\sf RS} is the most efficient, and has accuracy
comparable to the others. {\em Therefore, we recommend {\sf RS} as our ultimately proposed method}.
Notice that {\sf RS} is about two orders of magnitude faster than {\sf GED-T}, even for the cumulative score.
\begin{figure}[t!]
	\vspace{-2mm}
	\centering
	\subfigure[\small {Positional-\revise{2-approval}}]
	{\includegraphics[scale=0.147,angle=270]{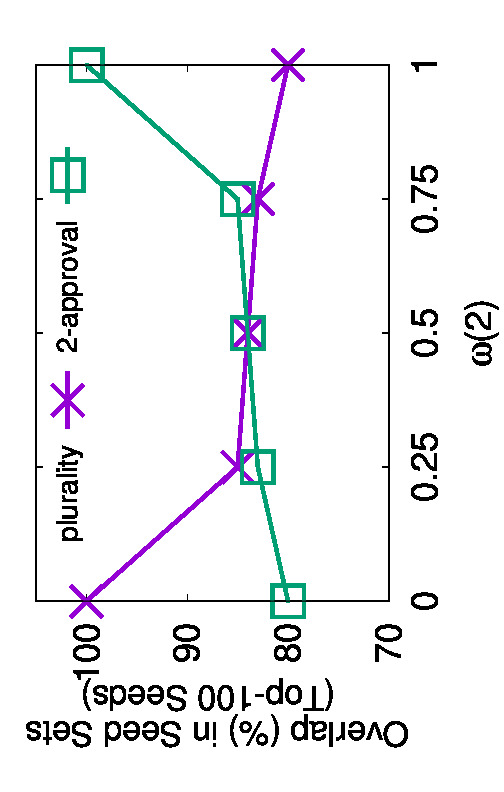}
		\label{fig:w-2-best}}
	\subfigure[\small {Positional-\revise{3-approval}}]
	{\includegraphics[scale=0.147,angle=270]{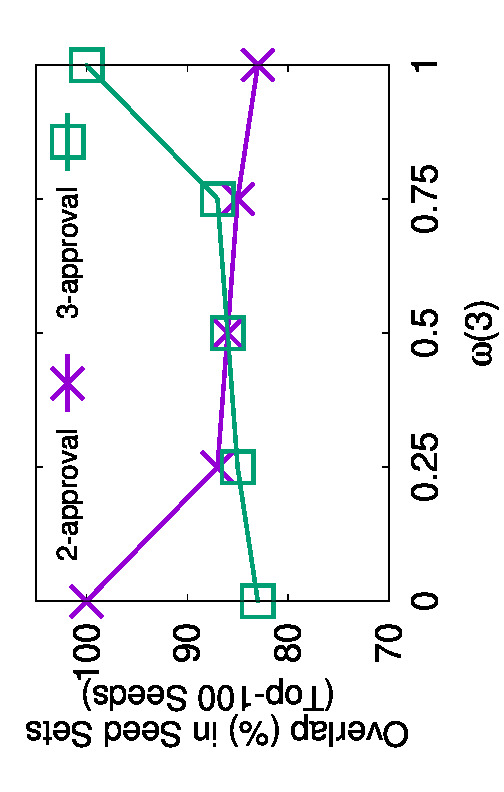}
		\label{fig:w-3-best}}
	\vspace{-3mm}
	\caption{\small Overlap of the seed set for the positional-$p$-approval score with respect to those for the \revise{plurality} and $p$-approval scores; {\em Yelp}}
	\label{fig:varyW}
	\vspace{-3mm}
\end{figure}
\begin{figure}[t!]
	\vspace{-2mm}
	\centering
	\includegraphics[scale=0.29]{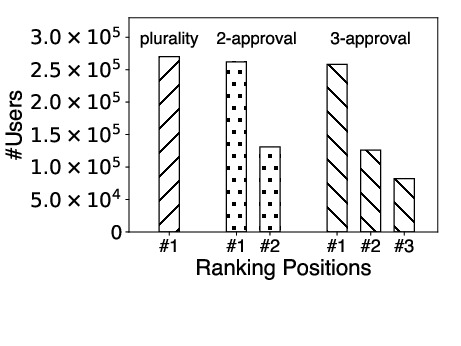}
	\vspace{-7mm}
	\caption{\small Number of users who rank the target candidate at the specified positions at the time horizon $t$; {\em Yelp}}
	\label{fig:pos}
	\vspace{-3mm}
\end{figure}

\vspace{-0.8mm}
\spara{Comparison among the \revise{plurality} score variants.} Figure~\ref{fig:varyW} shows the overlap of the seed sets ($k = 100$) 
returned for the \revise{plurality} score variants. For positional-$p$-\revise{approval}, we vary $\omega[p] \in [0, 1]$, while we keep $\omega[i]=1 \, \forall i < p$.
Thus, it becomes $p$-approval when $\omega[p]=1$ and $(p-1)$-approval when $\omega[p]=0$. 
The seed sets returned for \revise{plurality} and \revise{2-approval} have 80\% overlap. 
The seeds for \revise{plurality} help to improve the target candidate's first-position ranking for as many users as possible. 
However, once the ranking constraint is relaxed to also include the second-position ranking 
(e.g., \revise{2-approval}, positional-\revise{2-approval}), some seeds are changed to incorporate more users.
Similar results hold for the \revise{3-approval} variants. Figure~\ref{fig:pos} presents the ranking position distributions for various $p$. 
We also notice that all \revise{plurality} variants share similar running times.

\begin{table}[t]
    \centering
    \revise{
    \caption{Minimum seed set sizes achieved by our proposed methods for the target candidate to win w.r.t. the plurality score}
    \vspace{-2mm}
    \begin{tabular}{c||c|c|c}
        \hline
        {\bf Dataset} & {\sf DM} & {\sf RW} & {\sf RS} \\ \hline\hline
        {\em Twitter\_Mask} & 17 & 21 & 24 \\ \hline
        {\em Twitter\_Social\_Distancing} & 69 & 71 & 74 \\ \hline
    \end{tabular}}
    \vspace{-4mm}
    \label{tab:win}
\end{table}

\vspace{-0.8mm}
\spara{\revise{Minimum number of seeds for the target to win.}}
\revise{As discussed in \S~\ref{sec:overview}, we can adapt our methods to find the minimum number of seeds for the target to win. Table~\ref{tab:win} shows these values for our three proposed methods. For a ``more approximate'' method, the seed sets are ``less optimal'', and hence the minimum number of seeds required is larger.}

\begin{figure}[t!]
	\centering
	\subfigure[\small {IC Model}]
	{\includegraphics[scale=0.147,angle=270]{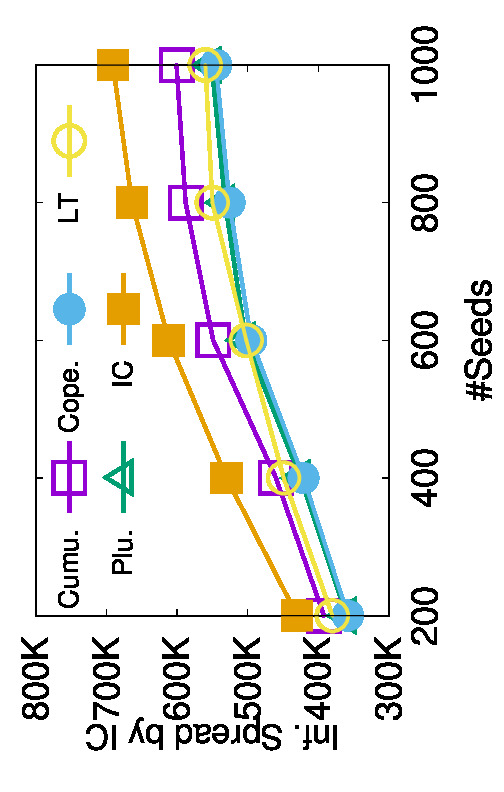}
		\label{fig:ic}}
	\subfigure[\small {LT Model}]
	{\includegraphics[scale=0.147,angle=270]{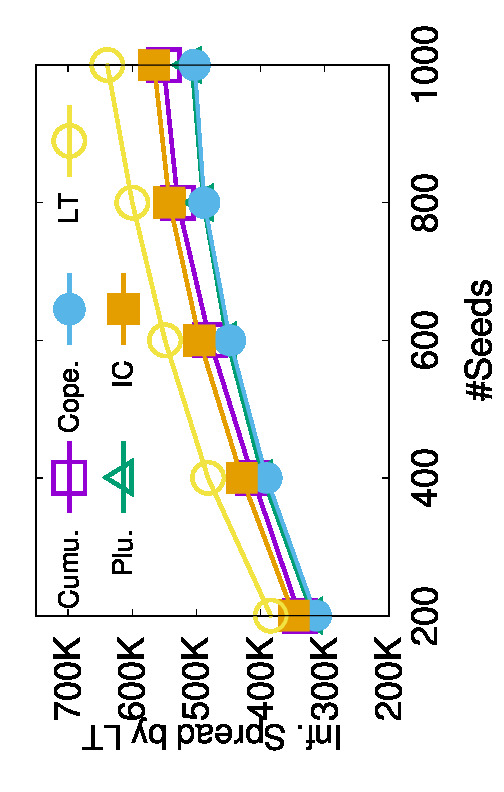}
		\label{fig:lt}}
	\vspace{-3mm}
	\caption{\small Expected influence spread over {\em Twitter\_Mask}.
		The seeds for the voting-based scores are selected by {\sf RW}.
	}
	\label{fig:inf}
	\vspace{-3mm}
\end{figure}
\vspace{-0.6mm}
\spara{Expected Influence Spread (EIS) Measurement.} EIS is the expected number of activated nodes from a given seed set when diffusion takes place following the IC or LT models \cite{KKT03}. For fairness, we compare the EIS of the seeds selected by {\sf RW} according to our three scores, with those of the seeds selected by {\sf IMM} \cite{TSX15} following the IC and LT models. This is done to demonstrate that the chosen seed set based on our proposed models and scores is not a bad solution with respect to the EIS. As shown in Figure \ref{fig:inf}, the performances of {\sf RW} and {\sf IMM} are comparable. The seeds given by {\sf RW} with the cumulative score can achieve over 80\% of the EIS of {\sf IMM} following both IC and LT. Thus, our seeds for the cumulative score work well even in the context of EIS following the IC and LT models.

\begin{figure}[t!]
	\centering
	\subfigure[\small {Score}]
	{\includegraphics[scale=0.147,angle=270]{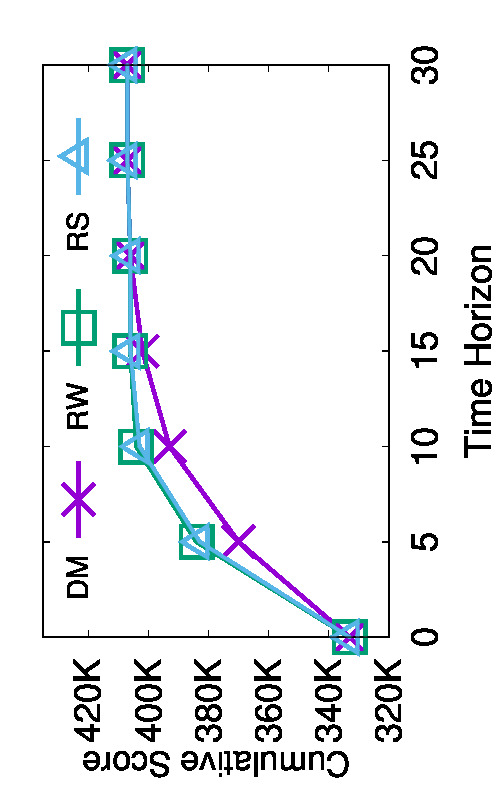}
		\label{fig:sens_t_score}}
	\subfigure[\small {Seed set finding time}]
	{\includegraphics[scale=0.147,angle=270]{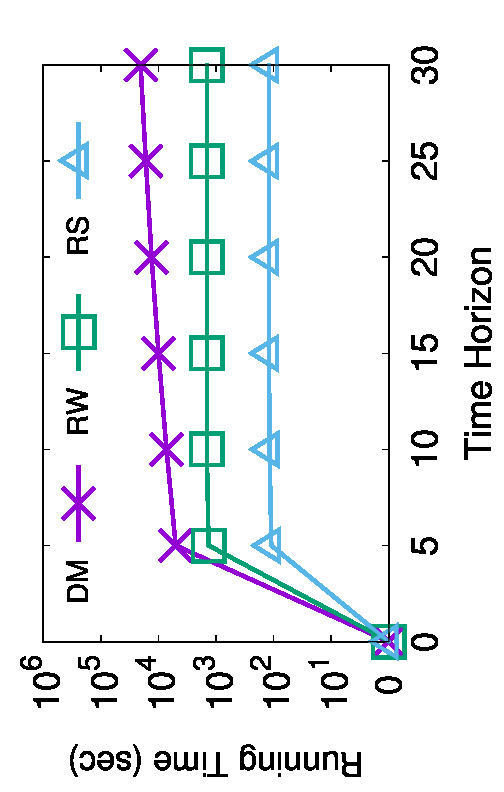}
		\label{fig:sens_t_time}}
	\vspace{-3mm}
	\caption{\small Cumulative score, seeds finding time vs. time horizon $t$; {\em Yelp}}
	\label{fig:varyT}
	\vspace{-3mm}
\end{figure}
\begin{figure}[t!]
	\vspace{-3mm}
	\centering
	\subfigure[\small {Varying seed set size, $k$}]
	{\includegraphics[scale=0.147,angle=270]{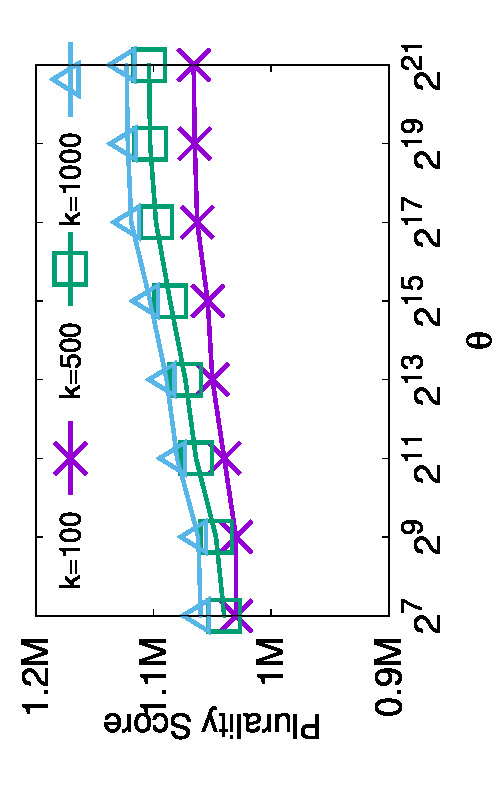}
		\label{fig:sens_theta_mask_k}}
	\subfigure[\small {Varying time horizon, $t$}]
	{\includegraphics[scale=0.147,angle=270]{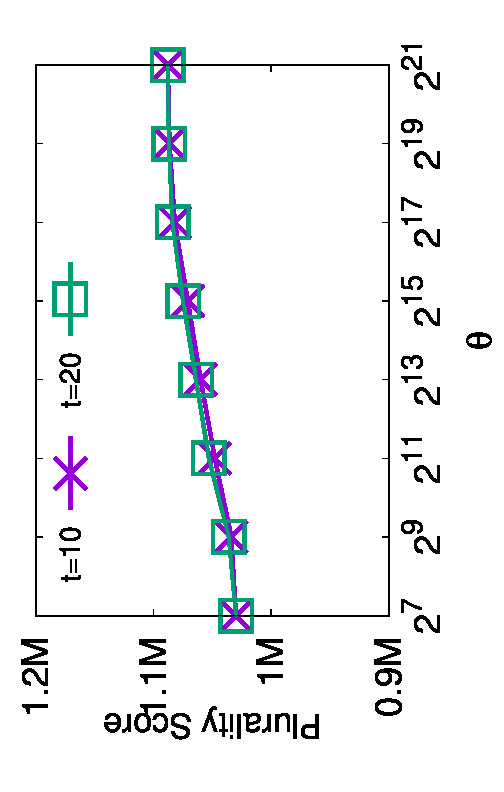}
		\label{fig:sens_theta_mask_t}}
	\vspace{-3mm}
	\caption{\small \revise{Plurality} score vs. $\theta$; {\em Twitter\_Mask}}
	\label{fig:varyTheta_mask}
	\vspace{-3mm}
\end{figure}
\begin{figure}[t!]
	\vspace{-3mm}
	\centering
	\subfigure[\small {Varying seed set size, $k$}]
	{\includegraphics[scale=0.147,angle=270]{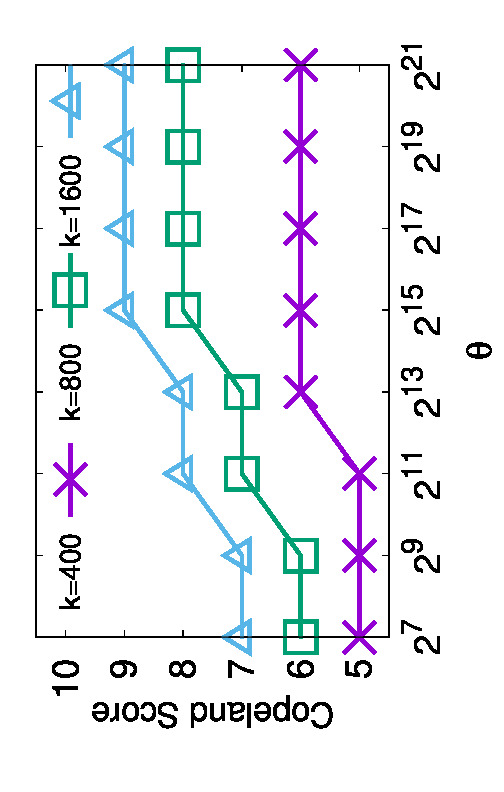}
		\label{fig:sens_theta_yelp_k}}
	\subfigure[\small {Varying time horizon, $t$}]
	{\includegraphics[scale=0.147,angle=270]{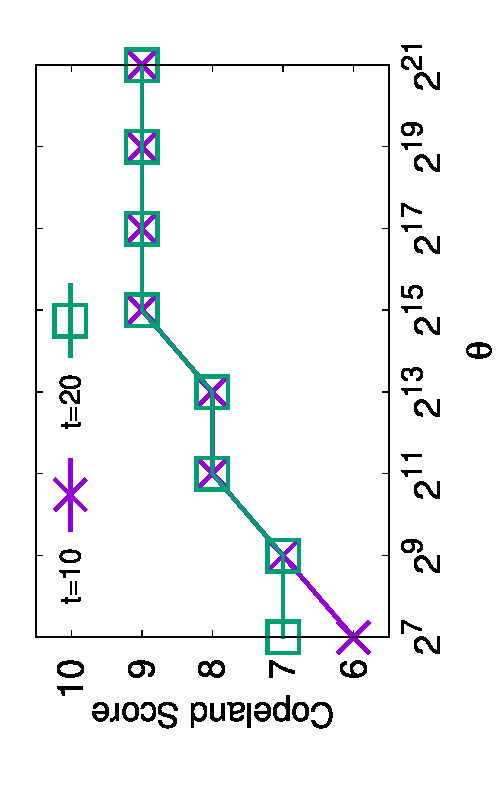}
		\label{fig:sens_theta_yelp_t}}
	\vspace{-3mm}
	\caption{\small \revise{Copeland} score vs. $\theta$; {\em Yelp}}
	\label{fig:varyTheta_yelp}
	\vspace{-4mm}
\end{figure}

\vspace{-1.5mm}
\vspace{-2mm}
\subsection{Parameter Sensitivity Analysis}
\label{sec:sensitivity}
\vspace{-1mm}
\spara{Impact of $t$.}
Figure~\ref{fig:varyT} shows that the cumulative score remains nearly the same after timestamp 20 for all the proposed methods. This happens slightly quicker for {\sf RW} and {\sf RS} than for {\sf DM}.
Thus, we set time horizon $t = 20$ as default in the rest of the experiments. The running time of {\sf DM} is more sensitive to $t$ than those of {\sf RW} and {\sf RS} because we need to conduct exactly $t$ rounds of matrix-vector multiplication in {\sf DM}, while random walks are often of length less than $t$ for {\sf RW} and {\sf RS}.

\vspace{-1mm}
\spara{Impact of $\theta$ for the \revise{plurality} and \revise{Copeland} scores.} We heuristically analyze the variation of these scores with $\theta$ (\S~\ref{sec:theta_exp}). Recall that {\sf RS} is more efficient than {\sf RW} only when $\theta < n$.
For a specific dataset and score, we empirically find the smallest $\theta$ when that score converges (for some $k$ and $t$),
which is $2^{19}$ for {\em Twitter\_Mask} with the \revise{plurality} score (Figure \ref{fig:varyTheta_mask}), and $2^{15}$ for {\em Yelp} with the \revise{Copeland} score (Figure \ref{fig:varyTheta_yelp}). Both values are smaller than the respective $n$. Moreover, this estimate
can be re-used on the same dataset and score, even with different $k$ and $t$ as inputs, since it is less sensitive to $k$ and $t$, as shown in Figures \ref{fig:varyTheta_mask} and \ref{fig:varyTheta_yelp}.

\vspace{-0.6mm}
\spara{Impact of $\epsilon$.} The parameter $\epsilon$ (for the cumulative score) controls how close the estimated score is to the true score in {\sf RS}, and affects the number of random walks to be generated.  Figure~\ref{fig:varyE} shows that the cumulative score suffers a drastic decrease from $\epsilon=0.1$ to $\epsilon=0.2$.
The running time decreases more sharply when $\epsilon$ is smaller. Thus, we select $\epsilon=0.1$ as the default value. 

\vspace{-0.6mm}
\spara{Impact of $\rho$.} The parameter $\rho$ controls the probability that the estimated score is the same as or close to the true score, and affects the number of random walks to be generated. As shown in Figure~\ref{fig:varyRho}, the \revise{plurality} score increases sharply when $\rho$ is small, while there is almost no difference from $\rho=0.9$ onward. The running time increases significantly with larger $\rho$.
Thus, we set $\rho=0.9$ as default in the rest of the experiments. 

\begin{figure}[t!]
	\vspace{-3mm}
	\centering
	\subfigure[\small {Score}]
	{\includegraphics[scale=0.147,angle=270]{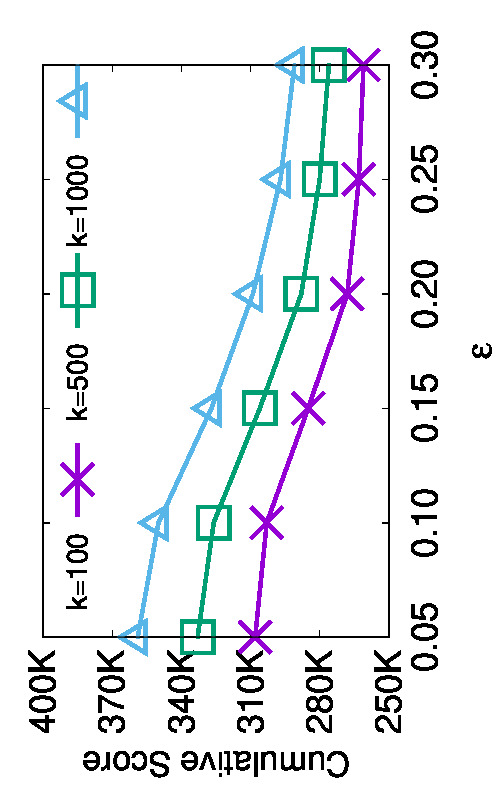}
		\label{fig:sens_e_score}}
	\subfigure[\small {Seed set finding Time}]
	{\includegraphics[scale=0.147,angle=270]{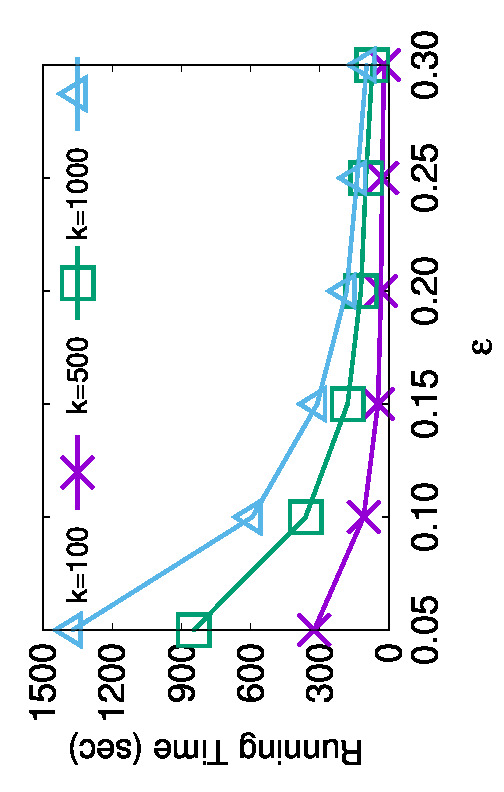}
		\label{fig:sens_e_time}}
	\vspace{-3mm}
	\caption{\small Cumulative score vs. $\epsilon$; {\sf RS} Method;  {\em Twitter\_US\_Election}} 
	\label{fig:varyE}
	\vspace{-6.5mm}
\end{figure}
\begin{figure}[t!]
	\centering
	\subfigure[\small {Score}]
	{\includegraphics[scale=0.147,angle=270]{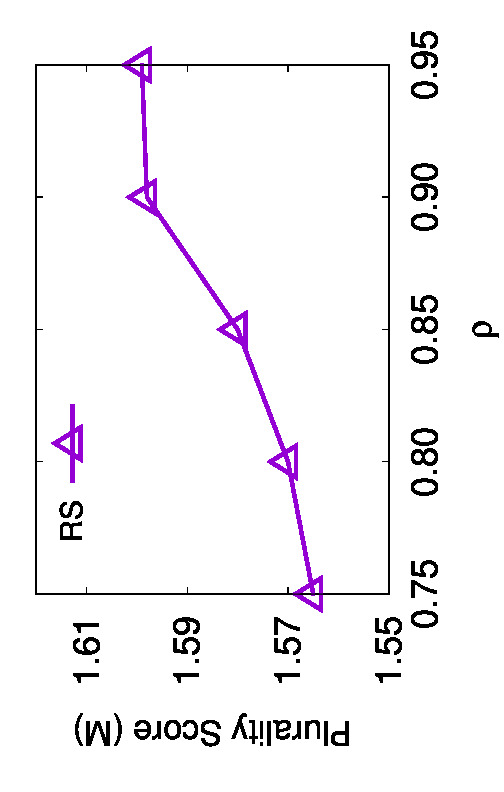}
		\label{fig:sens_r_score}}
	\subfigure[\small {Seed set finding Time}]
	{\includegraphics[scale=0.147,angle=270]{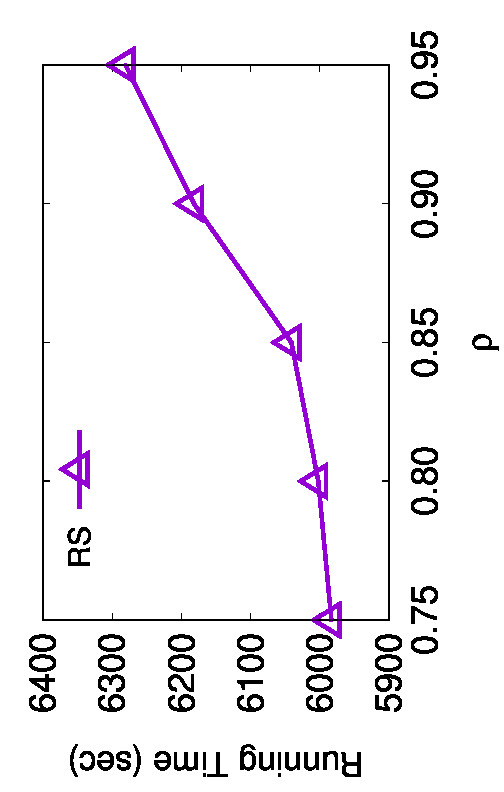}
		\label{fig:sens_r_time}}
	\vspace{-3mm}
	\caption{\small \revise{Plurality} score vs. $\rho$; {\em  Twitter\_Social\_Distancing}} 
	\label{fig:varyRho}
	\vspace{-6mm}
\end{figure}
\begin{figure}[t!]
	\centering
	\subfigure[\small {Seed set finding Time}]
	{\includegraphics[scale=0.147,angle=270]{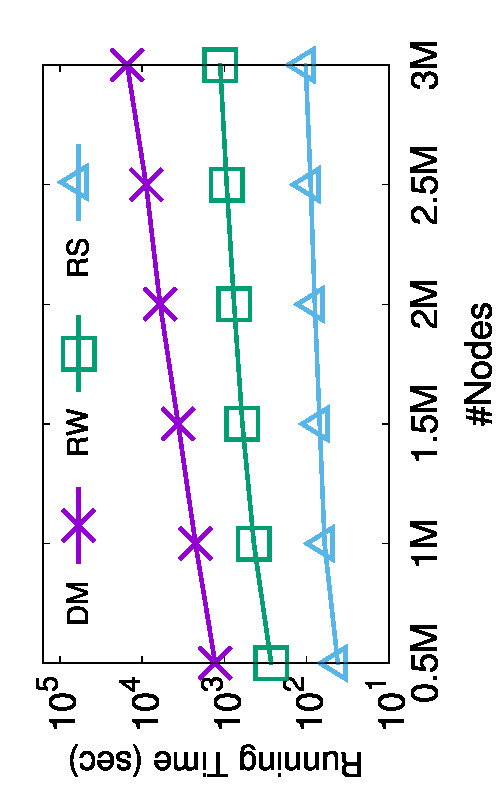}
		\label{fig:scala_time}}
	\subfigure[\small {Memory Usage}]
	{\includegraphics[scale=0.147,angle=270]{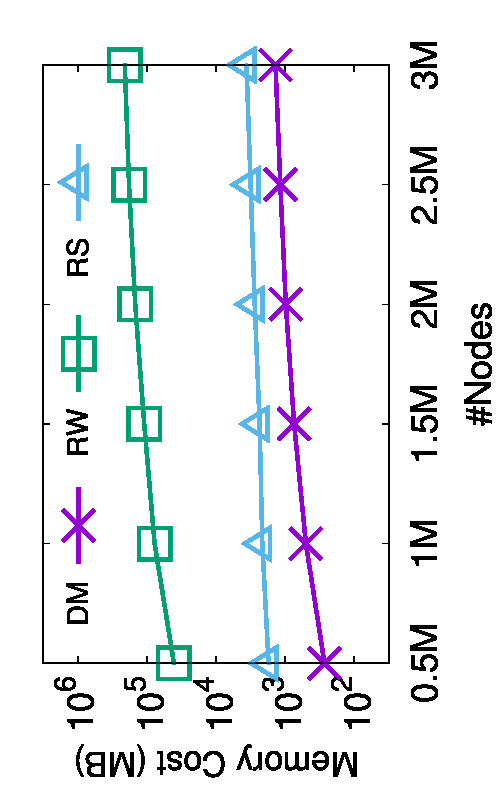}
		\label{fig:scala_mem}}
	\vspace{-3mm}
	\caption{\small Seed set finding time and memory usage for the cumulative score vs. graph size; {\em Twitter\_Social\_Distancing}}
	\label{fig:scala}
	\vspace{-5mm}
\end{figure}
\vspace{-2mm}
\subsection{Scalability and Memory Usage}
\label{sec:scalability}
\vspace{-1mm}
We test the scalability and memory usage of our algorithms with different graph sizes. The {\em Twitter\_Social\_Distancing} graph has about 3.2M nodes; we generate six graphs by
selecting 0.5M, 1M, 1.5M, 1M, 2.5M, 3M nodes uniformly at random, and apply our algorithms on the subgraphs induced by them. Figure~\ref{fig:scala_time} demonstrates that the running
times of {\sf RW} and {\sf RS} increase almost linearly with the number of nodes (the y-axis is logarithmic), which confirms good
scalability of our algorithms. The running time of {\sf DM} increases polynomially -- it has cubic growth with $n$ (\S~\ref{sec:overview}).

As shown in Figure \ref{fig:scala_mem}, {\sf DM} consumes the least memory since it only needs to store the edge weights, initial opinions, and stubbornness values. {\sf RW} and {\sf RS} further store random walks. ({\sf RW} far more than {\sf RS}). Our ultimately proposed method, {\sf RS}, consumes only a few GB for the {\em Twitter\_Social\_Distancing}
dataset.
\eat{
\subsection{Robustness Analysis}
\label{sec:roubstness}
We analyze the robustness of our scores for a given seed set by adding gaussian noise with mean $0$ and standard deviation $1$
to each of {\bf (1)} initial opinions, {\bf (2)} influence weights, and {\bf (3)} structure of the influence graph, separately,
corresponding to the target candidate.
For the case of initial opinions, we uniformly at random sample $x\%$ of the entries of the initial opinion vector, 
and add noise therein. For influence weights, we uniformly at random sample 
$x\%$ of the columns of the influence matrix, and add noise to all non-zero entries in those columns.
In case of structure of the influence graph, we also select $x\%$ of the columns of the influence matrix (uniformly
at random sampled), then add noise to all non-zero entries, as well as to 0.001\% of the zero entries (randomly chosen).
We constrain the values to be in the range $[0,1]$ after adding noise, and also normalize those columns in the influence matrix.
We vary $x\%$ as $0\%, 10\%, 20\%, 30\%$, and $40\%$, indicating the percentage of noise added.
\begin{figure}
	\centering
	\includegraphics[scale=0.18,angle=270]{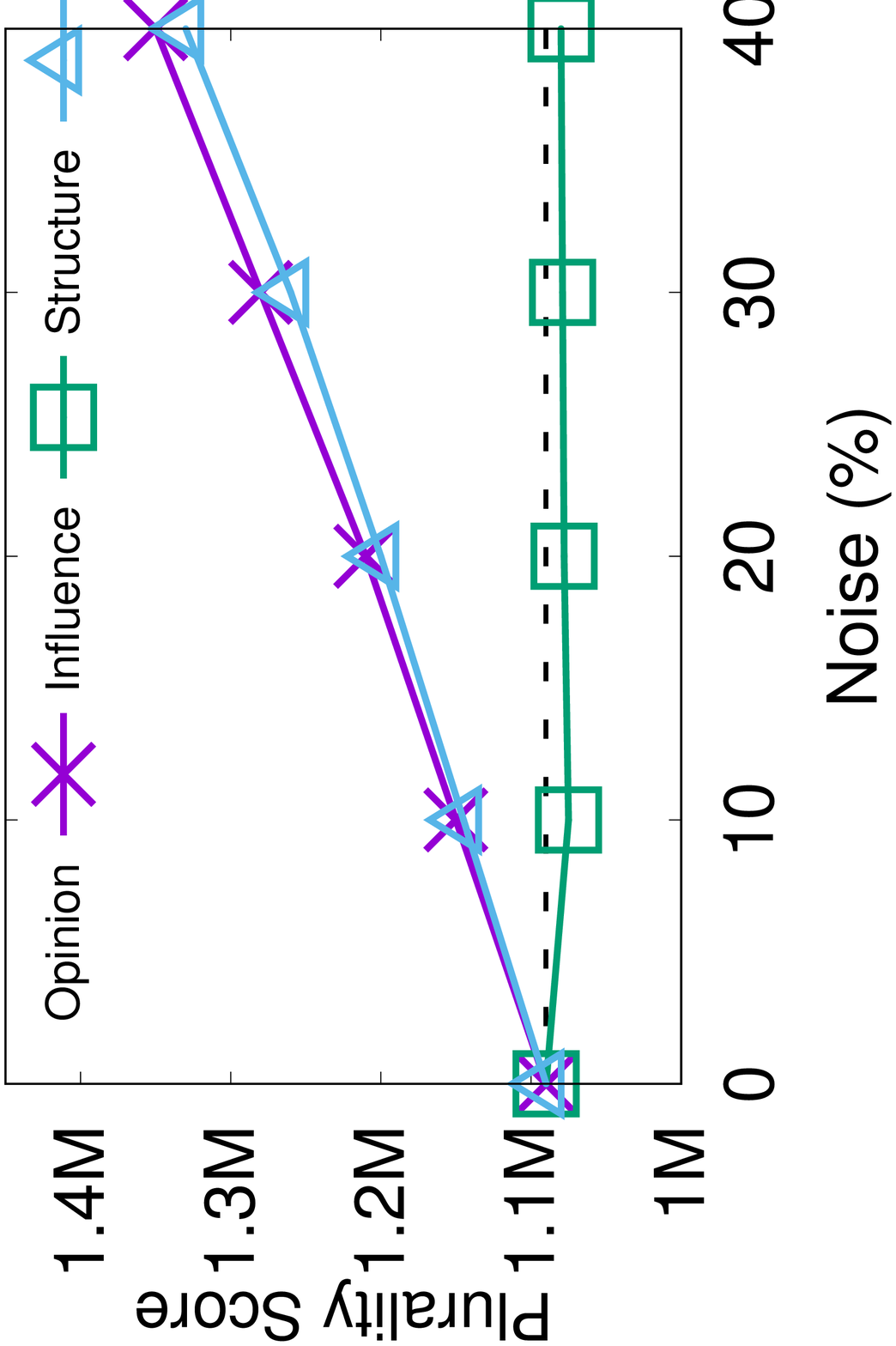}
	\vspace{-2mm}
	\caption{\small {\em \revise{Plurality}} score vs. noise; {\em  Twitter\_Mask}. The horizontal dotted line denotes the \revise{plurality} score for the top-100 greedy seed nodes over the input graph.}
	\vspace{-3mm}
	\label{fig:robust}
\end{figure}

Figure~\ref{fig:robust} demonstrates the performance of the seed set returned by the greedy algorithm on the input dataset, under each noisy setting. In general, the performance of our selected seeds is more sensitive to noise added to initial opinions and to graph structure. Imparting noise to the influence weights results in nearly no change in the \revise{plurality} score, since the noise will be almost eliminated by the normalization. With 40\% noise level, the performance of our seeds is affected by up to 20\% of that on the input dataset. We omit the plots for the other scores, since they follow similar trends.
\subsection{Seed Set Characteristics}
\label{sec:char}
In order to study the effects of network
properties on the seed set returned by our algorithms, 
we generate 500 scale-free synthetic graphs\footnote{\scriptsize{A scale-free network has a power-law degree distribution. 
Our scale-free networks are generated with the Barab\'{a}si-Albert preferential attachment model \cite{barabasi1999emergence}, where a graph of $n$ nodes is grown by attaching each new nodes with $e$ edges that are preferentially attached to existing nodes having high degrees. $n=1000$ and $e=10$ in our experiments.}} using the NetworkX \cite{NX} library. We assume that there are two candidates, and without loss of generality, the first one is our target. The edge weights, initial opinion values, and stubbornness values are assigned uniformly at random in $[0,1]$.

We sort the nodes in each network in descending order of {\bf (1)} {\sf PageRank score}, {\bf (2)} {\sf degree-centrality}, {\bf (3)} {\sf initial opinion} value about the target candidate, {\bf (4)} {\sf difference between initial opinions} (i.e., initial opinion about the target $-$ initial opinion about the non-target), and {\bf (5)} {\sf stubbornness} value for the target candidate. Moreover, we consider the reverse order (i.e., ascending) of {\bf (3)} and {\bf (4)}, as {\bf (6)} and {\bf (7)}, respectively. We retrieve the top-100 nodes $S$ based on each criterion, and measure their qualities according to the following metrics, considering the top-100 seeds $S^*$ returned by our {\sf DM} method for the \revise{plurality} score as the ground truth ranking.

$\bullet$ {\sf Precision} is the fraction of relevant nodes among the retrieved ones; formally: 
\begin{small}
\begin{align}
	Precision=\frac{|S\cap S^*|}{|S|}
\end{align}
\end{small}

$\bullet$ {\sf Normalized Discounted Cumulative Gain (NDCG)} is a measure of ranking quality. First, it computes the {\sf Cumulative Gain (CG)}, which is the sum of the {\em graded relevance values} of all nodes in a ranked list $S$. We assume that the graded relevance value $Rel(i)$ of a  retrieved node $i$ is: $|S^*|-pos(i,S^*)+1$, where $pos(i,S^*)$ is the position of node $i$ in the ground truth ranking $S^*$. (If $i$ does not belong to $S^*$, $pos(i,S^*) = |S^*|+1$.) We have:
\begin{small}
\begin{align}
	CG(S)=\sum_{i=1}^{|S|} Rel(i)
\end{align}
\end{small}
The {\sf Discounted Cumulative Gain (DCG)} is based on the intuition that highly relevant nodes appearing lower in a ranked list $S$ should be penalized by reducing the relevance value logarithmically proportional to the position of the node, and is defined below.
\begin{small}
\begin{align}
	DCG(S)=\sum_{i=1}^{|S|} \frac{Rel(i)}{\log_2(i+1)}
\end{align}
\end{small}
Finally, we define the {\sf Normalized Discounted Cumulative Gain} as
\begin{small}
\begin{align}
	NDCG(S)=\frac{DCG(S)}{DCG(S^*)}
\end{align}
\end{small}

$\bullet$ {\sf Kendall Tau Rank Correlation Coefficient (KTRCC)} measures the ordinal association between two rankings. 
It explores all $\binom{|S|}{2}$ pairs of nodes in a length-$|S|$ ranking list.
After that, it checks whether the order of each pair is the same 
as that in the ground truth ranking (i.e., concordant). We assume that if both nodes in a pair do not belong 
to the ground ranking, there is a tie between them, which will be neglected by KTRCC. 
We determine the number of concordant pairs and discordant pairs, denoted by $CP$ and $DP$, 
respectively. The KTRCC score is computed as:
\begin{small}
\begin{align}
	KT=\frac{CP-DP}{\binom{|S|}{2}}
\end{align}
\end{small}
\begin{table}[t]
	\footnotesize
	\centering
	\begin{center}
		\vspace{-2mm}
		\caption{\small Characteristics of seed sets}
		\vspace{-3mm}
		\begin{tabular}{l||c|c|c}
			\textbf{Ranking Criteria}   & \textbf{Precision} & \textbf{NDCG} & \textbf{KTRCC} \\ \hline \hline
			\textbf{PageRank}   & 0.718 & 0.777     & 0.492 \\ \hline
			\textbf{Degree-Centrality} & {\bf 0.723} & {\bf 0.781} & {\bf 0.498} \\ \hline	
			\textbf{Initial Opinion} & 0.059 & 0.059 & -0.090 \\ \hline
			\textbf{Opinion Difference} & 0.060 & 0.058 & -0.094 \\  \hline	
			\textbf{Stubornness} & 0.010 & 0.099 & -0.053 \\ \hline
			\textbf{Initial Opinion} &\multirow{2}{*}{0.157} & \multirow{2}{*}{0.157} & \multirow{2}{*}{-0.042} \\
			\textbf{(Ascending)} & & &\\ \hline
			\textbf{Opinion Difference} & \multirow{2}{*}{0.150} & \multirow{2}{*}{0.154} & \multirow{2}{*}{-0.047} \\
			\textbf{(Ascending)} & & &\\
		\end{tabular}
		\vspace{-6mm}
		\label{tab:seedset_char}
	\end{center}
\end{table}

Table~\ref{tab:seedset_char} demonstrates the scores of the top-100 retrieved nodes returned by each ranking criterion with respect to our top-100 seeds for \revise{plurality} score. As expected, no single criterion is sufficient alone to generate our seed set. We observe that:
{\bf (1)} The seeds maximizing the \revise{plurality} score tend to have high PageRank score and degree-centrality score. Moreover, the ranking of our selected seeds is more similar to the PageRank and degree-centrality based rankings.
{\bf (2)} The selected seeds have less correlation to the initial opinion values, opinion difference, and stubbornness values. Combined with (1), structural properties of a network are more important for identifying the seeds.
{\bf (3)} Our selected seeds for the \revise{plurality} score have higher preferences to those nodes with smaller initial opinion values towards the target candidate. Intuitively, including such nodes as seeds will increase their opinions drastically, then influence their neighbors as well.
{\bf (4)} Our selected seeds for the \revise{plurality} score also have higher preferences to those nodes having similar initial opinion values towards both candidates. By setting them as seeds, they will instead have clear preferences towards our target candidate.

The performances of these ranking criteria with respect to the seeds maximizing our other voting-based scores 
have similar trends to that with the \revise{plurality} score. Thus, we omit the details here.
}

\vspace{-1mm}
\section{Conclusions}
\vspace{-1mm}
\label{sec:concl}
We formulated and investigated the novel problem of
opinion maximization in a social network, coupled with voting-based
scores. We proved that our problem is \NP-hard and non-submodular under various scores.
To solve the problem, we employed the well-known Sandwich Approximation, under which we proved
that the greedy algorithm can still provide approximation guarantees to our objectives.
Since exact opinion computation via iterative matrix-vector multiplications is inefficient,
we proposed random walk and sketching-based opinion computations, with theoretical approximation guarantees.
Experimental results validated the effectiveness and efficiency of our proposed algorithms.
Considering both accuracy and efficiency results, we recommend the sketching-based approach {\sf RS} as our ultimately proposed method. In future works, it would be interesting to consider more opinion diffusion models and voting scores. 
\vspace{-2mm}

\appendices

\section{Comparison with Existing Works on Opinion Maximization}
\label{sec:compare_om}

Our problem setting is similar in some ways to \cite{GionisTT13, AKPT18};
however, there are important differences too, as shown below.

--- \cite{GionisTT13} is the first work on seed selection for opinion maximization in social networks.
For a given user $i$, the diffusion model in \cite{GionisTT13} (Equation 3.2) is given by
\begin{scriptsize}
\begin{align*}
    & \displaystyle z_i = \frac{s_i + \sum_{j \in N(i)} w_{ij} z_j}{1 + \sum_{l \in N(i)} w_{il}} \\
    & \displaystyle = \frac{1}{1 + \sum_{l \in N(i)} w_{il}} \cdot s_i + \left( 1 - \frac{1}{1 + \sum_{l \in N(i)} w_{il}} \right) \sum_{j \in N(i)} \frac{w_{ij}}{\sum_{l \in N(i)} w_{il}} \cdot z_j
\end{align*}
\end{scriptsize}
while, from Equation \ref{eq:fj}, our diffusion model is given by
\begin{align}
    \label{eq:fj_single}
    \small b_{qi}^{(t + 1)} = d_{qi} b_{qi}^{(0)} + \left( 1 - d_{qi} \right) \sum_{j \in V} w_{qji} b_{qj}^{(t)} 
\end{align}
Thus, the diffusion models are similar in following aspects: The intrinsic opinions $s_i$ (resp. expressed opinions $z_i$) in \cite{GionisTT13}
are analogous to our initial opinions $b_{qi}^{(0)}$ (resp. opinions $b_{qi}^{(t)}$ at any time $t$),
and the weights $\frac{w_{ij}}{\sum_{l \in N(i)} w_{il}}$ in \cite{GionisTT13} are the same as the weights $w_{qji}$ in our work. In \cite{GionisTT13},
each node has a preference towards its intrinsic opinion, which is equal to $\frac{1}{1 + \sum_{l \in N(i)} w_{il}}$.
Similarly, in our work, each node has a weight or preference to its initial opinion that is equal
to its stubbornness $d_{qi}$ (which can be any real number in $[0, 1]$, e.g., learnt from real data).
In \cite{GionisTT13}, when a node $i$ is made a seed, its expressed opinion $z_i$ is fixed to $1$. In our work, the same is achieved for the opinion $b_{qi}^{(t)}$ by setting both the initial opinion $b_{qi}^{(0)}$ and the stubbornness $d_{qi}$ to $1$, according to Equation \ref{eq:fj_single}.
However, there is one key difference: The problem in \cite{GionisTT13} involves choosing seeds that maximize the sum of the expressed
opinions at the Nash equilibrium, whereas our problem with the cumulative score involves the sum of the opinions at {\em any} given time horizon.

For the above reason,
the proofs of $\NP$-hardness and submodularity in \cite{GionisTT13} {\em cannot be extended trivially} to
our cumulative score for any finite time horizon. More specifically, \cite{GionisTT13} uses results from the theory of absorbing random walks (those that continue till an absorbing node is reached) to prove that the opinion computed by an absorbing random walk is an unbiased estimate of the true opinion at the Nash equilibrium, a property which is central to the proofs of $\NP$-hardness and submodularity in \cite{GionisTT13}. But in our work, we cannot use absorbing random walks to estimate the opinions at any finite time horizon, which renders the aforementioned proofs invalid (in our case), and hence the extension of the results in \cite{GionisTT13} to ours is non-trivial.


--- In our work, we provide accuracy guarantees for all our three methods (direct matrix-vector multiplication, random walks and sketches). However, \cite{GionisTT13} only provides a $(1 - 1/e)$-approximation guarantee for
the \textsf{Greedy} method (via direct matrix-vector multiplication) which is inefficient, and thus it proposes other heuristic methods without any accuracy guarantee.

--- For a given user $i$, the diffusion model in Equation 1 of \cite{AKPT18} is given by
\begin{equation*}
    \small x_i(t + 1) = \alpha_i s_i + \left( 1 - \alpha_i \right) \sum_{j \in N(i)} \frac{1}{deg(i)} \cdot x_j(t)
\end{equation*}
Thus, it is similar to ours (Equation \ref{eq:fj_single}) in the following aspects:
The innate opinions $s_i$ (resp. expressed opinions $x_i(t)$) in \cite{AKPT18} are analogous to our initial opinions $b_{qi}^{(0)}$ (resp. opinions $b_{qi}^{(t)}$),
and the weights $\frac{1}{deg(i)}$ in \cite{AKPT18} are similar to our weights $w_{qji}$.
In both works, each node has a weight or preference to its initial opinion equal to its stubbornness $d_{qi}$ or resistance $\alpha_i$
(which can be any real number in $[0, 1]$). However, the problem in \cite{AKPT18} requires maximizing
the sum of the expressed opinions at equilibrium, whereas our problem with the cumulative score
involves the sum of the opinions but at any given finite time horizon. In addition, the changes made
when a user is chosen to be a seed are different in our work from \cite{AKPT18}. We set both the initial opinion and stubbornness values to $1$,
whereas \cite{AKPT18} sets only the resistance value within a given interval $[l, u]$. Thus, the objective function in \cite{AKPT18} (under the budgeted setting) is
neither submodular nor supermodular, which is why \cite{AKPT18} does not provide any accuracy guarantee on even the greedy method for budgeted opinion maximization.
On the other hand, since our cumulative score is submodular, the greedy method provides a $(1 - 1/e)$-approximation guarantee.

--- In addition to the cumulative score, our work also involves the \revise{plurality} and Condorcet winner scores which are not used in prior works on opinion maximization \cite{GionisTT13, AKPT18}, and hence constitute one of our novel contributions (as rightly pointed out by the reviewer). Moreover, we design non-negative, non-decreasing, submodular upper and lower bound functions for the \revise{plurality} and Condorcet winner scores, and apply the {\em Sandwich Approximation} technique to achieve empirically good approximation guarantees (refer to \S~\ref{sec:practical_effectiveness} for more details).

--- The random walk interpretations in \cite{GionisTT13, AKPT18} are similar to that of ours, in as much as
the fact that the expressed opinion of a node $v$ is the expected innate opinion of the end node
of a random walk starting from $v$. However, \cite{GionisTT13, AKPT18} only deal with opinions at their respective equilibria,
which require that their random walks continue till absorption. In contrast, our method involves
opinions at any finite time horizon, which means that our random walks go on till absorption (by a fully stubborn node),
or the number of steps in the walk being equal to the time horizon, whichever happens earlier.
Also, \cite{GionisTT13, AKPT18} enable random walks by augmenting the graph with a set of $n$ new nodes and edges, in order to prove the unbiasedness of their random walk estimates by leveraging results from the theory of absorbing random walks.
However, there is no such
augmentation in our random walks, which requires us to prove the unbiasedness of our estimated opinions in a different way. Furthermore, \cite{GionisTT13, AKPT18} do not use random walks in their algorithms
and do not mention the number of walks needed to ensure some accuracy guarantee on the estimated opinions,
which is one of the core contributions in our work.

Additionally, all three works (ours, \cite{GionisTT13, AKPT18}) make the assumption that setting the opinion and/or stubbornness values
of the seed users is ``easy'' and under our control. Recall that the Greedy algorithm in \cite{GionisTT13} for opinion maximization can be adapted
for a finite time horizon and to consider input stubbornness values as ours, which we denoted as GED-T. Our cumulative score, due to its
aggregate nature, is similar to opinion maximization in the single
campaign setting; therefore, GED-T and our DM approach (direct matrix-vector multiplication) perform
the same for the cumulative score (only). However, our RS method (reverse
sketching) is about two orders of magnitude faster than GED-T, even for the cumulative score (\S \ref{sec:exp}).

\section{Usefulness of a Finite Time Horizon}
\label{sec:time_horizon}

In many real-world applications, taking into account a finite time horizon is important.
Consider a paid movie service with a limited period discount (mentioned in the second paragraph of \S~1),
or an upcoming election. In such cases, the optimal seed set considering opinions at convergence
can be drastically different from the corresponding set with a finite time horizon.

There are examples in the past literature where certain properties that are true when the time horizon is infinite
cease to hold (or cannot be proved by a direct extension) for any given finite time horizon.
As a concrete example, in \cite{CautisMT19},
the expected spread is adaptive submodular under full feedback (analogous to an infinite time horizon)
but not under partial feedback (analogous to any given finite time horizon).
Similarly, despite \cite{GionisTT13} establishing that the sum of all opinions at the Nash equilibrium is submodular w.r.t. the seed set,
the submodularity of our cumulative score under similar problem settings but at any given finite time horizon
is non-trivial and cannot be proved by a direct adaptation of the proof in \cite{GionisTT13}.

\begin{figure}
  \vspace{-2mm}
  \centering
  \includegraphics[scale=0.18,angle=270]{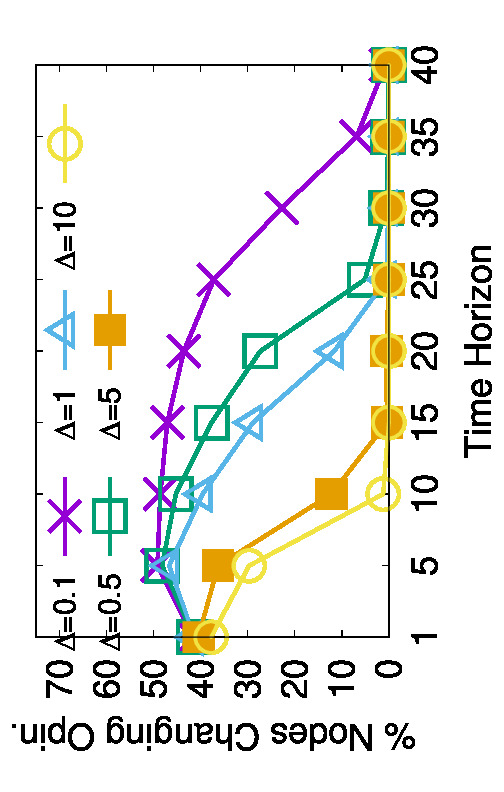}
  \vspace{-2mm}
  \caption{\small Variation of the percentage of nodes changing opinion from timestamp $t-1$ to $t$ as a function of $t$ for the {\sf Yelp} dataset. $\Delta$ denotes the tolerance, i.e., the maximum percentage opinion change (from $t-1$ to $t$) allowed for being considered as ``no change in opinion''.}
  \vspace{-4mm}
  \label{fig:conv}
\end{figure}

In addition, we empirically study the variation of opinions with respect to the time horizon. Specifically, at every time horizon $t$, we compute the fraction of nodes $v$ for which $\left| b_{qv}^{(t)} - b_{qv}^{(t - 1)} \right| > \frac{\Delta}{100} \times b_{qv}^{(t - 1)}$, where $\Delta$ is a tolerance parameter that decides how much change in opinion is considered negligible. Figure \ref{fig:conv} shows the variation of the above percentage with the time horizon for different values of the tolerance $\Delta$. We observe that there is a significant fraction of users changing their opinions before time horizon 30, especially when the tolerance $\Delta$ is small. We further compare the optimal seed sets ($k=100$, cumulative score) returned by our greedy algorithm for different time horizons, and find that they can differ from each other significantly. For example, the optimal seed sets at $t=5$, $10$, and $20$ have only 42\%, 48\%, and 61\% overlap with that at $t=30$, respectively. Finally, our experimental results in Figure \ref{fig:varyT} show that the cumulative score varies with respect to the time horizon. These demonstrate the importance of considering the time horizon in practice.

\eat{As mentioned in \S~2.1, the FJ model can reach convergence (after which none of the users' opinions changes with time) if and only if the edge weight matrix of the subgraph induced by all oblivious nodes is regular or there is no oblivious node, where oblivious nodes are those that have zero stubbornness and are not reachable from nodes with non-zero stubbornness. We tried to check whether this condition is true for our datasets. In the {\sf Twitter} datasets, there is no oblivious node; in fact, there is no node with zero stubbornness (those values were chosen uniformly at random). However, in the {\sf Yelp} dataset, the number of oblivious nodes is so large that the weight matrix of the induced subgraph does not fit in the main memory, and hence we cannot verify the above property. Thus, we check empirically whether the opinions of all nodes converge or not.
Specifically, at every time-stamp $t$, we compute the fraction of nodes $v$ for which $\left| b_{qv}^{(t)} - b_{qv}^{(t - 1)} \right| > \frac{\Delta}{100} \times b_{qv}^{(t - 1)}$, where $\Delta$ is a tolerance parameter that decides how much change in opinion is considered negligible. Fig. \ref{fig:conv}
shows the variation of the above percentage with the time horizon for different values of the tolerance $\Delta$. The larger the tolerance $\Delta$ is,
the smaller the time-stamp at which convergence is reached. For $\Delta = 0.1$, convergence is reached at time-stamp $40$. Beyond time-stamp $20$,
even for $\Delta = 1$, the fraction of nodes incurring changes in opinions is only about $4\%$. This explains our observation in Figure \ref{fig:varyT} of our paper,
where the cumulative score remains nearly the same beyond time-stamp $20$.}

\section{Differences between Random Walks in PageRank and in the FJ Model}
\label{sec:rw_pr}

Let us start with a brief overview of PageRank.
The PageRank vector \cite{ALNO07} $\pi \in \mathbbm{R}^n$ is the solution (at convergence) to: 
\begin{equation*}
	\small \pi^{(t + 1)} = c \pi^{(t)} P + (1 - c) \pi^{(0)}
\end{equation*}
where $P$ is the edge weight matrix (which is row-stochastic), $\pi^{(0)}$ is a column
vector with each element equal to $n^{-1}$, and $c \in (0, 1)$ is a damping factor.
As mentioned in \cite{ALNO07}, the elements of the PageRank vector can be estimated by a random walk method,
which is repeated $m$ times starting from each node: At each step, if the node has no outgoing edges, the walk stops;
otherwise, the random walk terminates with probability $1 - c$, and makes a transition according to the matrix $P$
with probability $c$. Once all walks are generated, the estimate $\widehat{\pi}_v$ of $\pi_v$ for a node $v \in V$
is the total number of visits to $v$ divided by the total number of visited nodes.

Based on this, there exist similarities between the equations of the FJ model
(Equation \ref{eq:fj}) and PageRank, as also in their corresponding random walk methods.
However, there are important differences between the two random walk methods.

--- In PageRank, a random walk goes on until the decision is taken to stop at a node (with probability $c$), or if a node with no outgoing edge is reached. But in the FJ model, in addition to stopping at a node with probability equal to its stubbornness, the random walk continues only till a specified time horizon.

--- In PageRank, the estimated value for a node $v$ is the fraction of the total number of visits to $v$ (in all random walks, not necessarily those starting from $v$). But in the FJ model, the estimated opinion value for a node $v$ is the average of the initial opinion values of the end nodes of those walks which start from $v$.

--- In PageRank, combining Theorem 1 and Equations 16-17 in \cite{ALNO07}, if the number of walks $m$ starting from each node $v$ satisfies
\begin{equation*}
	\small m \geq \frac{1 + q_{vv}}{1 - q_{vv}} \left( \frac{x_{1 - \frac{\alpha}{2}}}{\epsilon} \right)^2
\end{equation*}
where $q_{vv}$ is the probability of the walk returning to $v$ if it starts from $v$ and $x_{1 - \frac{\alpha}{2}}$ is a $\left( 1 - \frac{\alpha}{2} \right)$-quantile of the standard normal distribution, then
\begin{equation*}
	\small \Pr \left( \left| \widehat{\pi}_v - \pi_v \right| \leq \epsilon' \pi_v \right) \geq (1 - \alpha) (1 - \beta)
\end{equation*}
for any $\beta > 0$ and $\epsilon'$ satisfying
\begin{equation*}
	\small |\epsilon - \epsilon'| < \frac{x_{1 - \frac{\beta}{2}} (1 + \epsilon)}{\sqrt{nm}} \cdot \frac{c}{1 - c} \sqrt{\left( 1 - \frac{n_0}{n} \right) \left( 1 + c^3 \right)}
\end{equation*}
where $n_0$ is the number of nodes without any outgoing edge. In contrast, in the FJ model, the number of walks $\lambda_v$ from each node $v$ should satisfy different conditions to ensure different accuracy guarantees for various scores, as shown in Theorems \ref{th:cumu_rws} - \ref{th:bound_con_rw}.

\begin{figure}[tb!]
	\centering
	\includegraphics[scale=0.14,angle=270]{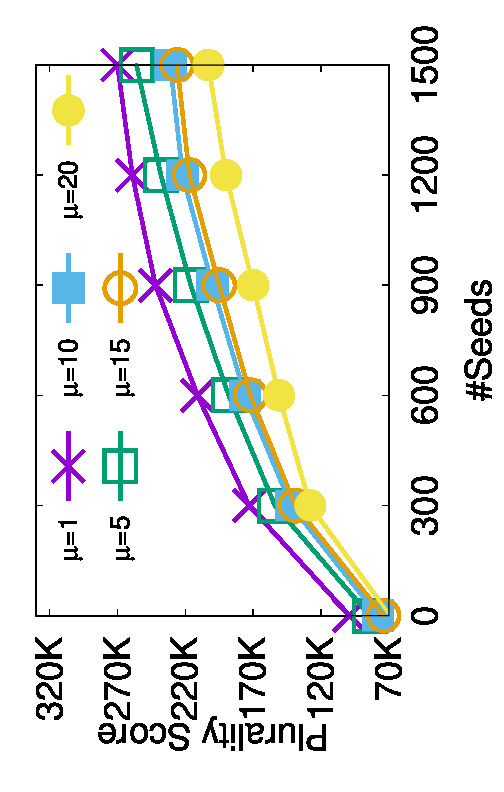}
	\includegraphics[scale=0.14,angle=270]{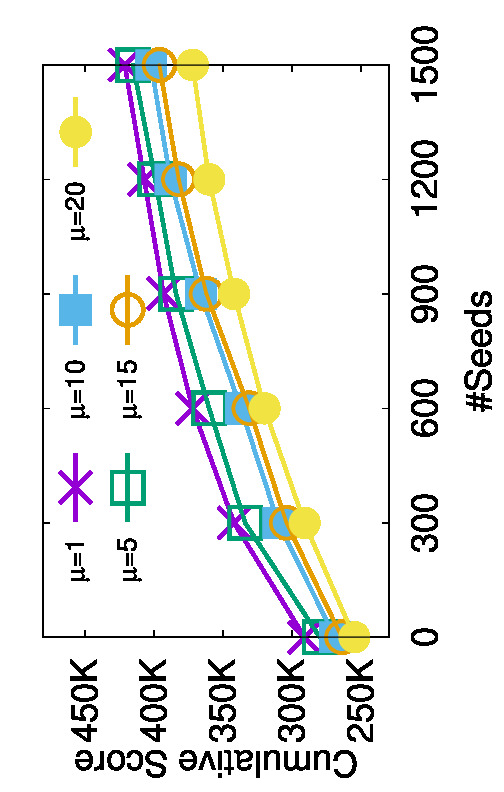}
	\vspace{-2mm}
	\caption{\small Opinion scores vs. $\mu$ for candidates ``Chinese'' on {\sf Yelp} (left) and ``Democratic'' on {\sf Twitter\_US\_Election} (right)}
	\label{fig:varyMU_score}
	\vspace{-4mm}
\end{figure}

In addition, there also exist differences in \textit{how the random walks for PageRank and the FJ model are
used for finding the corresponding top-$k$ nodes.} In PageRank, we need to generate $m$
random walks from each node {\em only once}, and then return the $k$ nodes with the largest PageRank estimates.
In contrast, for maximizing a voting-based score with the FJ model, we need to run a greedy algorithm,
where in each of the $k$ iterations, we find the node that maximizes the \textit{marginal gain} in the score, and include it
as a seed. For that, originally {\em in every iteration of the greedy algorithm}, we have to
generate the random walks for each of $\bigO(n)$ candidate seed nodes, in order to compute
its marginal gain estimate. Next, we optimize the process by
generating all walks right in the beginning and then reuse them in each of the $k$ iterations (\S~\ref{sec:random_walk}).
As a further optimization, we propose the sketch-based method (\S~\ref{sec:sketching}) which computes random walks starting
from only $\theta \ll n$ nodes, thereby making the process even more efficient, with quality guarantees on the $k$ seed nodes returned (Theorem 13),
which is our ultimate objective. To the best of our knowledge, the proofs of the above guarantees (Theorems \ref{th:cumu_rws} - \ref{th:theta}) are novel and do not follow from any prior
results on PageRank.

\section{Empirical Justification of the Default Value of $\mu$}
\label{sec:mu}

\eat{ 
To explain our choice of the default value of $\mu$ for our datasets, we conduct the following empirical studies. 
} First, we note that 
the edge weight distribution is based on $1 - e^{-a/\mu}$ (i.e., before normalization). 
\eat{ with varying $\mu$, is shown in Fig.~\ref{fig:varyMU_dist} (in this response letter, see previous page)
for two candidates from two datasets. When $\mu$ is too small (e.g., 1)
or too large (e.g., 20), we observe that over 95\% of the edges can have extremely large ($>0.5$) or small ($\leq 0.1$) weights (before
normalization). On the other hand, the weight distribution shows more diversity when $\mu=10$ and $\mu=15$.}  
In Fig. \ref{fig:varyMU_score}, we show the
Cumulative score (on {\em Twitter\_US\_Election}) and the \revise{plurality} score
(on {\em Yelp}) when varying the number of seeds, for different values of $\mu$.
In general, the difference is small: we observed that the normalization step (which ensures the row-stochastic property)
reduces the impact of $\mu$ on the edge probabilities. The curves corresponding to $\mu=10$ and $\mu=15$ lie in the middle
and nearly overlap. Thus, we choose $\mu=10$ as our default setting.


\section{Concentration Inequalities Used}
\label{sec:ineq}

\begin{theor}
[\cite{chung2006concentration}]
\label{th:conc}
Let $X_1, \ldots, X_\theta$ be non-negative independent random variables satisfying $X_i - \mathbbm{E}[X_i] \leq M \; \forall i \in [1, \theta]$. Let $X = \sum_{i=1}^\theta X_i$. For any $\beta > 0$,
\begin{small}
\begin{align*}
    \Pr \left( X - \mathbbm{E}[X] \geq \beta \right) &\leq \exp \left( - \frac{\beta^2}{2 \left( Var[X] + \frac{M\beta}{3} \right)} \right) \\
    \Pr \left( X - \mathbbm{E}[X] \leq -\beta \right) &\leq \exp \left( - \frac{\beta^2}{2 \sum_{i=1}^\theta \mathbbm{E} \left[ X_i^2 \right]} \right)
\end{align*}
\end{small}
\end{theor}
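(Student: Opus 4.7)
This is a classical Bernstein-type concentration result, and I would prove both tails by the standard Chernoff/exponential moment method, handling the two directions with slightly different MGF bounds because the upper tail needs the one-sided boundedness assumption $X_i - \mathbbm{E}[X_i] \leq M$ while the lower tail can exploit non-negativity of $X_i$ directly.

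First, for the upper tail, I would let $Y_i = X_i - \mathbbm{E}[X_i]$ so that $\mathbbm{E}[Y_i] = 0$ and $Y_i \leq M$. For any $\lambda > 0$, Markov's inequality on $e^{\lambda(X - \mathbbm{E}[X])} = \prod_i e^{\lambda Y_i}$ gives
\[
    \Pr(X - \mathbbm{E}[X] \geq \beta) \leq e^{-\lambda \beta}\prod_{i=1}^{\theta} \mathbbm{E}\bigl[e^{\lambda Y_i}\bigr].
\]
The key analytic step is the one-sided bound $e^{x} - 1 - x \leq \tfrac{x^2/2}{1 - x/3}$ for $x < 3$, valid for $x \leq M$ as long as $\lambda M < 3$. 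Applying this to $x = \lambda Y_i$, taking expectations, and using $\mathbbm{E}[Y_i] = 0$ yields $\mathbbm{E}[e^{\lambda Y_i}] \leq \exp\!\bigl(\tfrac{\lambda^2 \mathbbm{E}[Y_i^2]/2}{1 - \lambda M/3}\bigr)$. Summing over $i$ gives $\prod_i \mathbbm{E}[e^{\lambda Y_i}] \leq \exp\!\bigl(\tfrac{\lambda^2 \mathrm{Var}[X]/2}{1-\lambda M/3}\bigr)$. Choosing $\lambda^* = \beta / (\mathrm{Var}[X] + M\beta/3)$ (which satisfies $\lambda^* M < 3$) and simplifying would recover the claimed bound $\exp(-\beta^2 / (2(\mathrm{Var}[X] + M\beta/3)))$.

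Next, for the lower tail, I would exploit $X_i \geq 0$ to use the simpler pointwise inequality $e^{-u} \leq 1 - u + u^2/2$ valid for all $u \geq 0$. Setting $u = \lambda X_i$ with $\lambda > 0$ and taking expectations gives $\mathbbm{E}[e^{-\lambda X_i}] \leq 1 - \lambda \mathbbm{E}[X_i] + \tfrac{\lambda^2}{2}\mathbbm{E}[X_i^2] \leq \exp(-\lambda \mathbbm{E}[X_i] + \tfrac{\lambda^2}{2}\mathbbm{E}[X_i^2])$. Multiplying across $i$ and applying Markov to $e^{-\lambda(X - \mathbbm{E}[X])}$ yields
\[
    \Pr(X - \mathbbm{E}[X] \leq -\beta) \leq \exp\!\Bigl(-\lambda \beta + \tfrac{\lambda^2}{2}\textstyle\sum_{i=1}^{\theta}\mathbbm{E}[X_i^2]\Bigr),
\]
and optimizing at $\lambda^* = \beta / \sum_i \mathbbm{E}[X_i^2]$ gives exactly the stated lower-tail bound.

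The principal obstacle is the upper tail: the specific denominator $\mathrm{Var}[X] + M\beta/3$ is not the most obvious form, and it comes from the quantitative inequality $e^x - 1 - x \leq (x^2/2)/(1 - x/3)$, which must be verified (e.g., by comparing Taylor coefficients of both sides and noting $k! \geq 2 \cdot 3^{k-2}$ for $k \geq 2$). Everything else---the Chernoff reduction, independence to factor the MGF, and the one-variable optimization in $\lambda$---is routine. The lower tail is noticeably easier because the non-negativity of the $X_i$ lets the second-moment bound $e^{-u} \leq 1 - u + u^2/2$ hold without any truncation, which is precisely why the denominator there is $\sum_i \mathbbm{E}[X_i^2]$ rather than $\mathrm{Var}[X]$.
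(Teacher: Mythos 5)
The paper does not prove this statement at all: Theorem~\ref{th:conc} is imported verbatim from the cited survey of Chung and Lu \cite{chung2006concentration} and used as a black box in Lemma~\ref{lem:conc}, so there is no in-paper proof to compare against. Your argument is correct and is essentially the standard Chernoff/Bernstein proof of this result (and the one given in the cited reference): exponential Markov, factor the moment generating function by independence, bound each factor via a quadratic-over-linear control of $e^{x}-1-x$ for the upper tail and via $e^{-u}\leq 1-u+u^{2}/2$ on $u\geq 0$ for the lower tail, then optimize $\lambda$. Your choices $\lambda^{*}=\beta/(\mathrm{Var}[X]+M\beta/3)$ and $\lambda^{*}=\beta/\sum_{i}\mathbbm{E}[X_i^2]$ do reproduce the stated exponents exactly, and the observation that non-negativity (not boundedness) is what drives the lower tail is the right one.

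One small point to tighten: you justify $e^{x}-1-x\leq \frac{x^{2}/2}{1-x/3}$ by comparing Taylor coefficients with $k!\geq 2\cdot 3^{k-2}$, but that term-by-term comparison is only valid for $x\geq 0$, whereas you apply the inequality at $x=\lambda Y_i$ with $Y_i=X_i-\mathbbm{E}[X_i]$ possibly negative. The inequality does hold for all $x<3$, but the cleaner standard route is to note that $\phi(x)=(e^{x}-1-x)/x^{2}$ is nondecreasing on $\mathbb{R}$, so $Y_i\leq M$ gives $e^{\lambda Y_i}-1-\lambda Y_i\leq \lambda^{2}Y_i^{2}\,\phi(\lambda M)\leq \frac{\lambda^{2}Y_i^{2}/2}{1-\lambda M/3}$ for $0\leq\lambda M<3$; alternatively, since $1-\lambda Y_i/3\geq 1-\lambda M/3>0$, your pointwise bound can be relaxed to the same expression. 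Either patch closes the gap without changing anything downstream.
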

\begin{theor}
[\cite{TXS14}]
\label{th:chernoff}
Let $X_1, \ldots, X_\theta$ be i.i.d. random variables such that $X_i \in [0, 1]$ and $\mathbbm{E}[X_i] = \mu \; \forall i \in [1, \theta]$. For any $\epsilon > 0$,
\begin{equation*}
    \small \Pr \left( \left| X - \theta\mu \geq \epsilon \cdot \theta\mu \right| \right) \leq \exp \left( \frac{\epsilon^2}{2 + \epsilon} \cdot \theta\mu \right)
\end{equation*}
\end{theor}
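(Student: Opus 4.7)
This is the classical multiplicative Chernoff bound for a sum of independent $[0,1]$-valued random variables, and since the statement is cited from \cite{TXS14} I would not re-derive it in full, but the recipe I would follow is the standard exponential-moment (Chernoff--Cram\'er) argument. The plan is to split the two-sided deviation into its upper and lower tails via a union bound, $\Pr(|X - \theta\mu| \geq \epsilon\theta\mu) \leq \Pr(X \geq (1+\epsilon)\theta\mu) + \Pr(X \leq (1-\epsilon)\theta\mu)$, and bound each tail by applying Markov's inequality to the moment generating function $e^{\lambda X}$ for an appropriately chosen $\lambda$ (positive for the upper tail, negative for the lower tail).

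For the upper tail I would use the boundedness $X_i \in [0,1]$ together with convexity of $x \mapsto e^{\lambda x}$ to obtain the one-variable MGF bound $\mathbb{E}[e^{\lambda X_i}] \leq 1 + \mu(e^\lambda - 1) \leq \exp\bigl(\mu(e^\lambda - 1)\bigr)$, and by independence lift it to $\mathbb{E}[e^{\lambda X}] \leq \exp\bigl(\theta\mu(e^\lambda - 1)\bigr)$. Plugging into Markov's inequality gives $\Pr(X \geq (1+\epsilon)\theta\mu) \leq \exp\bigl(\theta\mu(e^\lambda - 1 - \lambda(1+\epsilon))\bigr)$. Minimizing over $\lambda$ yields the optimum $\lambda^\star = \ln(1+\epsilon)$ and the classical Chernoff form $\Pr(X \geq (1+\epsilon)\theta\mu) \leq \exp\bigl(-\theta\mu \cdot \psi(\epsilon)\bigr)$, where $\psi(\epsilon) = (1+\epsilon)\ln(1+\epsilon) - \epsilon$.

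The only non-routine ingredient, and the main obstacle, is the algebraic simplification $\psi(\epsilon) \geq \tfrac{\epsilon^2}{2+\epsilon}$, which is what delivers the clean exponent in the stated inequality. I would verify this by setting $h(\epsilon) = (2+\epsilon)\psi(\epsilon) - \epsilon^2$, checking $h(0) = 0$, and showing $h'(\epsilon) \geq 0$ on $[0,\infty)$ via one or two further differentiations (alternatively, one can use the integral representation $\psi(\epsilon) = \int_0^\epsilon \ln(1+s)\,ds$ and compare integrands). The lower tail is handled symmetrically with $\lambda < 0$ and in fact yields the stronger exponent $\tfrac{\epsilon^2}{2}$, so the worse upper-tail exponent controls the combined two-sided bound, giving the stated inequality (possibly up to an absorbed factor of $2$ that is often folded into the constants by the cited paper).
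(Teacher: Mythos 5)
Your outline is the standard and correct Chernoff--Cram\'er derivation of this bound, and the key simplification $(1+\epsilon)\ln(1+\epsilon)-\epsilon \geq \tfrac{\epsilon^2}{2+\epsilon}$ that you flag does go through (e.g., by showing $h(\epsilon)=(2+\epsilon)\bigl[(1+\epsilon)\ln(1+\epsilon)-\epsilon\bigr]-\epsilon^2$ satisfies $h(0)=h'(0)=h''(0)=0$ and $h'''>0$). The paper itself gives no proof of this statement --- it is imported verbatim from \cite{TXS14} as a black-box concentration inequality --- so there is nothing to diverge from; note only that the printed statement has typos (a misplaced absolute value, a missing minus sign in the exponent, and a missing factor of $2$ for the two-sided bound) which your derivation implicitly corrects, and which the paper's own later application in Lemma~\ref{lem:estimate_br} (where the factor $2$ reappears) confirms.
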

\begin{theor}
[\cite{hoeffding1963probability}]
\label{th:hoeffding}
Let $X_1, \ldots, X_\theta$ be independent random variables such that $X_i \in [0, 1] \; \forall i \in [1, \theta]$. Let $\overline{X} = \frac{1}{\theta} \sum_{i=1}^\theta X_i$ and $\mu = \mathbbm{E}\left[\overline{X}\right]$. Then, for $0 \leq \epsilon < 1 - \mu$,
\begin{equation*}
    \small \Pr \left( \overline{X} - \mu \geq \epsilon \right) \leq \left[ \left( \frac{\mu}{\mu + \epsilon} \right)^{\mu + \epsilon} \left( \frac{1 - \mu}{1 - \mu - \epsilon} \right)^{1 - \mu - \epsilon} \right]^n
\end{equation*}
\end{theor}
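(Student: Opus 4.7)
The plan is to derive this bound via the standard Chernoff--Markov technique applied to the moment generating function of $\sum_i X_i$. The starting point is to rewrite the event as a sum condition: $\Pr(\overline{X} - \mu \geq \epsilon) = \Pr\left(\sum_{i=1}^\theta X_i \geq \theta(\mu+\epsilon)\right)$. For any $s > 0$, applying Markov's inequality to $e^{s\sum_i X_i}$ gives the exponential tail bound
\begin{equation*}
    \Pr\left(\sum_{i=1}^\theta X_i \geq \theta(\mu+\epsilon)\right) \leq e^{-s\theta(\mu+\epsilon)} \cdot \mathbbm{E}\!\left[e^{s \sum_i X_i}\right] = e^{-s\theta(\mu+\epsilon)} \prod_{i=1}^\theta \mathbbm{E}\!\left[e^{sX_i}\right],
\end{equation*}
where the factorization uses independence of the $X_i$.

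Next I would bound each individual MGF using the fact that $X_i \in [0,1]$. By convexity of $e^{sx}$, for any $x \in [0,1]$ we have $e^{sx} \leq (1-x) + x e^s$; taking expectation gives $\mathbbm{E}[e^{sX_i}] \leq 1 - \mathbbm{E}[X_i] + \mathbbm{E}[X_i] e^s$. Under the natural interpretation (consistent with the statement defining $\mu = \mathbbm{E}[\overline{X}]$ as a single number) that $\mathbbm{E}[X_i] = \mu$ for all $i$, the product collapses to $(1 - \mu + \mu e^s)^\theta$, yielding
\begin{equation*}
    \Pr(\overline{X} - \mu \geq \epsilon) \leq \left[e^{-s(\mu+\epsilon)} \left(1 - \mu + \mu e^s\right)\right]^\theta.
\end{equation*}

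The final step is to minimize the bracketed expression over $s > 0$. Setting the derivative with respect to $s$ to zero gives the optimizer $e^{s^\star} = \frac{(\mu+\epsilon)(1-\mu)}{\mu(1-\mu-\epsilon)}$, which is positive exactly under the hypothesis $0 \leq \epsilon < 1 - \mu$. Substituting this back, the factor $1 - \mu + \mu e^{s^\star}$ simplifies to $\frac{1-\mu}{1-\mu-\epsilon}$, and combining with the $e^{-s^\star(\mu+\epsilon)}$ factor and regrouping exponents gives exactly $\left(\frac{\mu}{\mu+\epsilon}\right)^{\mu+\epsilon}\left(\frac{1-\mu}{1-\mu-\epsilon}\right)^{1-\mu-\epsilon}$, which, raised to the power $\theta$, is the claimed upper bound.

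The main obstacle will be the algebraic simplification after substituting the optimizer back into the bracket: the exponents $\mu + \epsilon$ and $1 - \mu - \epsilon$ need to be extracted cleanly, which requires rewriting $\frac{1-\mu}{1-\mu-\epsilon}$ as $\left(\frac{1-\mu-\epsilon}{1-\mu}\right)^{\mu+\epsilon - 1}$ so that the exponents on the $(1-\mu)/(1-\mu-\epsilon)$ factor combine into $1 - \mu - \epsilon$. A secondary subtlety is handling the case where the $X_i$ have distinct means; this can be absorbed by either imposing the i.i.d. assumption (as is standard for this particular entropy-form Hoeffding bound), or by applying the arithmetic--geometric mean step $\prod_i(1 - \mu_i + \mu_i e^s) \leq (1 - \bar{\mu} + \bar{\mu} e^s)^\theta$ via concavity of $\log(1 - \cdot + \cdot e^s)$ in its mean argument.
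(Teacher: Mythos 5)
The paper does not prove this statement at all: it is imported verbatim (as Theorem~\ref{th:hoeffding}) from Hoeffding's 1963 paper and used as a black box in the proof of Lemma~\ref{lem:estimate_con}. Your proof is correct and is in fact essentially Hoeffding's own argument: exponentiate, apply Markov, bound each moment generating function by the chord $e^{sx}\leq(1-x)+xe^{s}$ on $[0,1]$, collapse the product (via AM--GM when the means differ, which is exactly how Hoeffding handles the non-identically-distributed case), and optimize over $s$; your optimizer $e^{s^\star}=\frac{(\mu+\epsilon)(1-\mu)}{\mu(1-\mu-\epsilon)}$ and the subsequent simplification to the entropy form are all correct, and the hypothesis $0\leq\epsilon<1-\mu$ is used exactly where it should be (well-definedness of $s^\star$ and of the bound). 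One point worth flagging: your derivation yields the exponent $\theta$ (the number of variables), whereas the paper's statement carries the exponent $n$; since the variables are indexed up to $\theta$ and $n$ denotes the number of graph nodes elsewhere, the paper's $n$ is a typo and your $\theta$ is the correct exponent.
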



\balance
\bibliographystyle{IEEEtran}
\bibliography{ref}

\begin{thebibliography}{10}
\providecommand{\url}[1]{#1}
\csname url@samestyle\endcsname
\providecommand{\newblock}{\relax}
\providecommand{\bibinfo}[2]{#2}
\providecommand{\BIBentrySTDinterwordspacing}{\spaceskip=0pt\relax}
\providecommand{\BIBentryALTinterwordstretchfactor}{4}
\providecommand{\BIBentryALTinterwordspacing}{\spaceskip=\fontdimen2\font plus
\BIBentryALTinterwordstretchfactor\fontdimen3\font minus
  \fontdimen4\font\relax}
\providecommand{\BIBforeignlanguage}[2]{{%
\expandafter\ifx\csname l@#1\endcsname\relax
\typeout{** WARNING: IEEEtran.bst: No hyphenation pattern has been}%
\typeout{** loaded for the language `#1'. Using the pattern for}%
\typeout{** the default language instead.}%
\else
\language=\csname l@#1\endcsname
\fi
#2}}
\providecommand{\BIBdecl}{\relax}
\BIBdecl

\bibitem{2018Khan2}
A.~Khan, Y.~Ye, and L.~Chen, \emph{On Uncertain Graphs}, ser. Synthesis
  Lectures on Data Management.\hskip 1em plus 0.5em minus 0.4em\relax Morgan
  {\&} Claypool Publishers, 2018.

\bibitem{GalhotraAR16}
S.~Galhotra, A.~Arora, and S.~Roy, ``Holistic influence maximization: Combining
  scalability and efficiency with opinion-aware models,'' in \emph{Proceedings
  of the 2016 ACM SIGMOD International Conference on Management of Data}, 2016,
  p. 743–758.

\bibitem{TSX15}
Y.~Tang, Y.~Shi, and X.~Xiao, ``Influence maximization in near-linear time: A
  martingale approach,'' in \emph{Proceedings of the 2015 ACM SIGMOD
  International Conference on Management of Data}, 2015, p. 1539–1554.

\bibitem{AroraGR17}
A.~Arora, S.~Galhotra, and S.~Ranu, ``Debunking the myths of influence
  maximization: An in-depth benchmarking study,'' in \emph{Proceedings of the
  2017 ACM International Conference on Management of Data}, 2017, p. 651–666.

\bibitem{TangHXLTSL19}
J.~Tang, K.~Huang, X.~Xiao, L.~V. Lakshmanan, X.~Tang, A.~Sun, and A.~Lim,
  ``Efficient approximation algorithms for adaptive seed minimization,'' in
  \emph{Proceedings of the 2019 ACM SIGMOD International Conference on
  Management of Data}, 2019, p. 1096–1113.

\bibitem{GBL11}
A.~Goyal, F.~Bonchi, and L.~V. Lakshmanan, ``A data-based approach to social
  influence maximization,'' \emph{Proc. VLDB Endow.}, vol.~5, no.~1, p.
  73–84, 2011.

\bibitem{TXS14}
Y.~Tang, X.~Xiao, and Y.~Shi, ``Influence maximization: Near-optimal time
  complexity meets practical efficiency,'' in \emph{Proceedings of the 2014 ACM
  SIGMOD International Conference on Management of Data}, 2014, p. 75–86.

\bibitem{TTXY18}
J.~Tang, X.~Tang, X.~Xiao, and J.~Yuan, ``Online processing algorithms for
  influence maximization,'' in \emph{Proceedings of the 2018 ACM SIGMOD
  International Conference on Management of Data}.\hskip 1em plus 0.5em minus
  0.4em\relax New York, NY, USA: Association for Computing Machinery, 2018, p.
  991–1005.

\bibitem{KKT03}
D.~Kempe, J.~Kleinberg, and E.~Tardos, ``Maximizing the spread of influence
  through a social network,'' in \emph{Proceedings of the 9th ACM SIGKDD
  International Conference on Knowledge Discovery and Data Mining}, 2003, p.
  137–146.

\bibitem{DR01}
P.~Domingos and M.~Richardson, ``Mining the network value of customers,'' in
  \emph{Proceedings of the 7th ACM SIGKDD International Conference on Knowledge
  Discovery and Data Mining}, 2001, p. 57–66.

\bibitem{BKS07}
S.~Bharathi, D.~Kempe, and M.~Salek, ``Competitive influence maximization in
  social networks,'' in \emph{International Workshop on Web and Internet
  Economics}.\hskip 1em plus 0.5em minus 0.4em\relax Berlin, Heidelberg:
  Springer, 2007, pp. 306--311.

\bibitem{CNWZ07}
T.~Carnes, C.~Nagarajan, S.~M. Wild, and A.~van Zuylen, ``Maximizing influence
  in a competitive social network: A follower's perspective,'' in
  \emph{Proceedings of the 9th International Conference on Electronic
  Commerce}, 2007, p. 351–360.

\bibitem{HSCJ12}
X.~He, G.~Song, W.~Chen, and Q.~Jiang, ``{Influence blocking maximization in
  social networks under the competitive linear threshold model},'' in
  \emph{Proceedings of the 2012 SIAM International Conference on Data Mining},
  2012, pp. 463--474.

\bibitem{LL15}
Y.~Lin and J.~C. Lui, ``{Analyzing competitive influence maximization problems
  with partial information: An approximation algorithmic framework},''
  \emph{Performance Evaluation}, vol.~91, pp. 187--204, 2015.

\bibitem{KLRS21}
M.~Kahr, M.~Leitner, M.~Ruthmair, and M.~Sinnl, ``{Benders decomposition for
  competitive influence maximization in (social) networks},'' \emph{Omega},
  vol. 100, p. 102264, 2021.

\bibitem{BAA11}
C.~Budak, D.~Agrawal, and A.~El~Abbadi, ``Limiting the spread of misinformation
  in social networks,'' in \emph{Proceedings of the 20th International
  Conference on World Wide Web}, 2011, p. 665–674.

\bibitem{LuB0L13}
W.~Lu, F.~Bonchi, A.~Goyal, and L.~V. Lakshmanan, ``The bang for the buck: Fair
  competitive viral marketing from the host perspective,'' in \emph{Proceedings
  of the 19th ACM SIGKDD International Conference on Knowledge Discovery and
  Data Mining}, 2013, pp. 928--936.

\bibitem{KhanZK16}
A.~Khan, B.~Zehnder, and D.~Kossmann, ``Revenue maximization by viral
  marketing: A social network host's perspective,'' in \emph{2016 IEEE 32nd
  International Conference on Data Engineering}, 2016, pp. 37--48.

\bibitem{DeG74}
M.~H. DeGroot, ``Reaching a consensus,'' \emph{Journal of the American
  Statistical Association}, vol.~69, no. 345, pp. 118--121, 1974.

\bibitem{FJ90}
N.~E. Friedkin and E.~C. Johnsen, ``Social influence and opinions,''
  \emph{Journal of Mathematical Sociology}, vol.~15, no. 3-4, pp. 193--206,
  1990.

\bibitem{FJ99}
N.~E. Friedkin and E.~C. Johnsen, ``Social influence networks and opinion
  change,'' \emph{Advances in Group Processes}, vol.~16, no.~1, pp. 1--29,
  1999.

\bibitem{VM19}
E.~Pacuit, ``{Voting methods},'' in \emph{{The Stanford Encyclopedia of
  Philosophy}}.\hskip 1em plus 0.5em minus 0.4em\relax Stanford, CA, USA:
  Metaphysics Research Lab, Stanford University, 2019.

\bibitem{Gae06}
W.~Gaertner, \emph{{A primer in social choice theory}}.\hskip 1em plus 0.5em
  minus 0.4em\relax Oxford, UK: Oxford University Press, 2006.

\bibitem{Fish74}
P.~C. Fishburn, ``Paradoxes of voting,'' \emph{The American Political Science
  Review}, vol.~68, no.~2, pp. 537--546, 1974.

\bibitem{GionisTT13}
A.~Gionis, E.~Terzi, and P.~Tsaparas, ``Opinion maximization in social
  networks,'' in \emph{Proceedings of the 2013 SIAM International Conference on
  Data Mining}, 2013, pp. 387--395.

\bibitem{AKPT18}
R.~Abebe, J.~Kleinberg, D.~Parkes, and C.~E. Tsourakakis, ``Opinion dynamics
  with varying susceptibility to persuasion,'' in \emph{Proceedings of the 24th
  ACM SIGKDD International Conference on Knowledge Discovery and Data Mining},
  2018, p. 1089–1098.

\bibitem{bruno94}
B.~S. Frey, ``Direct democracy: politico-economic lessons from swiss
  experience,'' \emph{The American Economic Review}, vol.~84, no.~2, pp.
  338--342, 1994.

\bibitem{Emerson2020}
P.~Emerson, \emph{Decision-making in parliaments and referendums}.\hskip 1em
  plus 0.5em minus 0.4em\relax Springer International Publishing, 2020, pp.
  3--30.

\bibitem{Li0WZ13}
Y.~Li, W.~Chen, Y.~Wang, and Z.-L. Zhang, ``Influence diffusion dynamics and
  influence maximization in social networks with friend and foe
  relationships,'' in \emph{Proceedings of the 6th ACM International Conference
  on Web Search and Data Mining}, 2013, p. 657–666.

\bibitem{0005BSC15}
H.~Li, S.~S. Bhowmick, A.~Sun, and J.~Cui, ``Conformity-aware influence
  maximization in online social networks,'' \emph{{VLDB} J.}, vol.~24, no.~1,
  pp. 117--141, 2015.

\bibitem{LCL15}
W.~Lu, W.~Chen, and L.~V. Lakshmanan, ``From competition to complementarity:
  Comparative influence diffusion and maximization,'' \emph{Proc. VLDB Endow.},
  vol.~9, no.~2, p. 60–71, 2015.

\bibitem{ALNO07}
K.~Avrachenkov, N.~Litvak, D.~Nemirovsky, and N.~Osipova, ``Monte carlo methods
  in pagerank computation: When one iteration is sufficient,'' \emph{SIAM
  Journal on Numerical Analysis}, vol.~45, no.~2, pp. 890--904, 2007.

\bibitem{CohenL99}
E.~Cohen and D.~D. Lewis, ``Approximating matrix multiplication for pattern
  recognition tasks,'' \emph{Journal of Algorithms}, vol.~30, no.~2, pp.
  211--252, 1999.

\bibitem{BBCL14}
C.~Borgs, M.~Brautbar, J.~Chayes, and B.~Lucier, ``{Maximizing social influence
  in nearly optimal time},'' in \emph{Proceedings of the 2014 ACM-SIAM
  Symposium on Discrete Algorithms}, 2014, pp. 946--957.

\bibitem{Noor20}
H.~Noorazar, ``Recent advances in opinion propagation dynamics: A 2020
  survey,'' \emph{The European Physical Journal Plus}, vol. 135, no.~6, pp.
  1--20, 2020.

\bibitem{HM20}
H.~Z. Brooks and M.~A. Porter, ``A model for the influence of media on the
  ideology of content in online social networks,'' \emph{Physical Review
  Research}, vol.~2, p. 023041, 2020.

\bibitem{CFLFTT15}
S.~Chen, J.~Fan, G.~Li, J.~Feng, K.-l. Tan, and J.~Tang, ``Online topic-aware
  influence maximization,'' \emph{Proc. VLDB Endow.}, vol.~8, no.~6, p.
  666–677, 2015.

\bibitem{TuAG20}
S.~Tu, {\c{C}}.~Aslay, and A.~Gionis, ``Co-exposure maximization in online
  social networks,'' in \emph{Proceedings of the 33rd International Conference
  on Neural Information Processing Systems}, 2020.

\bibitem{GuoCW21}
J.~Guo, T.~Chen, and W.~Wu, ``A multi-feature diffusion model: rumor blocking
  in social networks,'' \emph{{IEEE/ACM} Trans. Netw.}, vol.~29, no.~1, pp.
  386--397, 2021.

\bibitem{GarimellaGPT17}
K.~Garimella, A.~Gionis, N.~Parotsidis, and N.~Tatti, ``Balancing information
  exposure in social networks,'' in \emph{Advances in Neural Information
  Processing Systems 30: Annual Conference on Neural Information Processing
  Systems 2017}, 2017, pp. 4663--4671.

\bibitem{KKC18}
X.~Ke, A.~Khan, and G.~Cong, ``Finding seeds and relevant tags jointly: For
  targeted influence maximization in social networks,'' in \emph{Proceedings of
  the 2018 ACM SIGMOD International Conference on Management of Data}, 2018, p.
  1097–1111.

\bibitem{PT17}
A.~V. Proskurnikov and R.~Tempo, ``A tutorial on modeling and analysis of
  dynamic social networks. part i,'' \emph{Annual Reviews in Control}, vol.~43,
  pp. 65--79, 2017.

\bibitem{PPTF17}
S.~E. Parsegov, A.~V. Proskurnikov, R.~Tempo, and N.~E. Friedkin, ``Novel
  multidimensional models of opinion dynamics in social networks,'' \emph{IEEE
  Transactions on Automatic Control}, vol.~62, no.~5, pp. 2270--2285, 2016.

\bibitem{Fish77}
P.~C. Fishburn, ``Condorcet social choice functions,'' \emph{SIAM Journal on
  Applied Mathematics}, vol.~33, no.~3, pp. 469--489, 1977.

\bibitem{K72}
R.~M. Karp, ``{Reducibility among combinatorial problems},'' in
  \emph{Complexity of Computer Computations}.\hskip 1em plus 0.5em minus
  0.4em\relax Boston, MA, USA: Springer, 1972, pp. 85--103.

\bibitem{NWF78}
G.~L. Nemhauser, L.~A. Wolsey, and M.~L. Fisher, ``An analysis of
  approximations for maximizing submodular set functions — i,''
  \emph{Mathematical Programming}, vol.~14, no.~1, pp. 265--294, 1978.

\bibitem{BianB0T17}
A.~A. Bian, J.~M. Buhmann, A.~Krause, and S.~Tschiatschek, ``Guarantees for
  greedy maximization of non-submodular functions with applications,'' in
  \emph{Proceedings of the 34th International Conference on Machine Learning},
  vol.~70, 2017, pp. 498--507.

\bibitem{DasK18}
A.~Das and D.~Kempe, ``Approximate submodularity and its applications: Subset
  selection, sparse approximation and dictionary selection,'' \emph{Journal of
  Machine Learning Research}, vol.~19, pp. 3:1--3:34, 2018.

\bibitem{LeskovecKGFVG07}
J.~Leskovec, A.~Krause, C.~Guestrin, C.~Faloutsos, J.~VanBriesen, and
  N.~Glance, ``Cost-effective outbreak detection in networks,'' in
  \emph{Proceedings of the 13th ACM SIGKDD International Conference on
  Knowledge Discovery and Data Mining}, 2007, p. 420–429.

\bibitem{ZhouS16}
Y.~Zhou and C.~J. Spanos, ``Causal meets submodular: Subset selection with
  directed information,'' in \emph{Proceedings of the 30th International
  Conference on Neural Information Processing Systems}, 2016, pp. 2649--2657.

\bibitem{DDMH17}
Y.~Dong, Z.~Ding, L.~Mart{\'i}nez, and F.~Herrera, ``{Managing consensus based
  on leadership in opinion dynamics},'' \emph{Information Sciences}, vol.
  397-398, pp. 187--205, 2017.

\bibitem{PB15}
M.~Pineda and G.~M. Buend{\'i}a, ``{Mass media and heterogeneous bounds of
  confidence in continuous opinion dynamics},'' \emph{Physica A: Statistical
  Mechanics and its Applications}, vol. 420, pp. 73--84, 2015.

\bibitem{BC18}
D.~Bauso and M.~Cannon, ``{Consensus in opinion dynamics as a repeated game},''
  \emph{Automatica}, vol.~90, pp. 204--211, 2018.

\bibitem{GMS20}
A.~Gupta, S.~Moharir, and N.~Sahasrabudhe, ``Influencing opinion dynamics in
  networks with limited interaction,'' arXiv:2002.00664, 2020.

\bibitem{MMLB20}
G.~Romero~Moreno, E.~Manino, L.~Tran-Thanh, and M.~Brede, \emph{{Zealotry and
  influence maximization in the voter model: When to target partial
  zealots?}}\hskip 1em plus 0.5em minus 0.4em\relax Cham: Springer, 2020.

\bibitem{KK19}
K.~Rawal and A.~Khan, ``Maximizing contrasting opinions in signed social
  networks,'' in \emph{2019 IEEE International Conference on Big Data}, 2019,
  pp. 1203--1210.

\bibitem{DasGKL14}
A.~Das, S.~Gollapudi, A.~Khan, and R.~P. Leme, ``Role of conformity in opinion
  dynamics in social networks,'' in \emph{Proceedings of the second {ACM}
  conference on Online social networks}, 2014, pp. 25--36.

\bibitem{Weidlich1971THESD}
W.~Weidlich, ``The statistical description of polarization phenomena in
  society,'' \emph{British Journal of Mathematical and Statistical Psychology},
  vol.~24, pp. 251--266, 1971.

\bibitem{Galam1982SociophysicsAN}
S.~Galam, Y.~Gefen, and Y.~Shapir, ``Sociophysics: a new approach of
  sociological collective behaviour. i. mean‐behaviour description of a
  strike,'' \emph{Journal of Mathematical Sociology}, vol.~9, pp. 1--13, 1982.

\bibitem{HL75}
R.~A. Holley and T.~M. Liggett, ``{Ergodic theorems for weakly interacting
  infinite systems and the voter model},'' \emph{The Annals of Probability},
  vol.~3, no.~4, pp. 643--663, 1975.

\bibitem{A97}
R.~Axelrod, ``The dissemination of culture: A model with local convergence and
  global polarization,'' \emph{Journal of Conflict Resolution}, vol.~41, no.~2,
  pp. 203--226, 1997.

\bibitem{SS00}
K.~Sznajd-Weron and J.~Sznajd, ``Opinion evolution in closed community,''
  \emph{International Journal of Modern Physics C}, vol.~11, p. 1157, 2000.

\bibitem{PhysRevLett.90.238701}
P.~L. Krapivsky and S.~Redner, ``Dynamics of majority rule in two-state
  interacting spin systems,'' \emph{Physical Review Letters}, vol.~90, p.
  238701, 2003.

\bibitem{Lambiotte_2008}
R.~Lambiotte, ``Majority rule on heterogeneous networks,'' \emph{Journal of
  Physics A: Mathematical and Theoretical}, vol.~41, no.~22, p. 224021, 2008.

\bibitem{Bordogna_2007}
C.~M. Bordogna and E.~V. Albano, ``Statistical methods applied to the study of
  opinion formation models: A brief overview and results of a numerical study
  of a model based on the social impact theory,'' \emph{Journal of Physics:
  Condensed Matter}, vol.~19, no.~6, p. 065144, 2007.

\bibitem{Deffuant2000}
G.~Deffuant, D.~Neau, F.~Amblard, and G.~Weisbuch, ``Mixing beliefs among
  interacting agents,'' \emph{Advanced Complex System}, vol.~3, pp. 87--98,
  2000.

\bibitem{DNAW00}
G.~Deffuant, D.~Neau, F.~Amblard, and G.~Weisbuch, ``{Mixing belief among
  interacting agents},'' \emph{Advances in Complex Systems}, vol.~03, no.
  01n04, pp. 87--98, 2000.

\bibitem{HK02}
R.~Hegselmann and U.~Krause, ``{Opinion dynamics and bounded confidence:
  Models, analysis and simulation},'' \emph{Journal of Artificial Societies and
  Social Simulation}, vol.~5, no.~3, 2002.

\bibitem{code}
\BIBentryALTinterwordspacing
A.~Saha, X.~Ke, A.~Khan, and L.~V. Lakshmanan, ``{Voting-based opinion
  maximization: Code and data},'' 2022. [Online]. Available:
  \url{https://github.com/ArkaSaha/Opinion-Vote}
\BIBentrySTDinterwordspacing

\bibitem{dblp}
\BIBentryALTinterwordspacing
{The dblp team: dblp computer science bibliography}, ``Monthly snapshot release
  of july 2022,'' 1993. [Online]. Available: \url{http://dblp.uni-trier.de/xml}
\BIBentrySTDinterwordspacing

\bibitem{yelp}
\BIBentryALTinterwordspacing
{Yelp Inc}, ``The yelp dataset,'' 2004. [Online]. Available:
  \url{https://www.yelp.com/dataset}
\BIBentrySTDinterwordspacing

\bibitem{25te-j338-20}
I.~Sabuncu, ``Usa nov. 2020 election 20 million tweets (with sentiment and
  party name labels) dataset,'' 2020.

\bibitem{781w-ef42-20}
R.~Lamsal, ``Coronavirus (covid-19) tweets dataset,'' 2020.

\bibitem{PBGK10}
M.~Potamias, F.~Bonchi, A.~Gionis, and G.~Kollios, ``K-nearest neighbors in
  uncertain graphs,'' \emph{Proc. VLDB Endow.}, vol.~3, no. 1–2, p.
  997–1008, 2010.

\bibitem{vasiliev2020natural}
Y.~Vasiliev, \emph{Natural Language Processing with Python and SpaCy: A
  Practical Introduction}.\hskip 1em plus 0.5em minus 0.4em\relax No Starch
  Press, 2020.

\bibitem{HG14}
C.~J. Hutto and E.~Gilbert, ``Vader: A parsimonious rule-based model for
  sentiment analysis of social media text,'' in \emph{Proceedings of the AAAI
  Conference on Web and Social Media}, vol.~8, 2014, pp. 216--225.

\bibitem{OCC16}
H.-C. Ou, C.-K. Chou, and M.-S. Chen, ``Influence maximization for
  complementary goods: Why parties fail to cooperate?'' in \emph{Proceedings of
  the 25th ACM International Conference on Information and Knowledge
  Management}, 2016, p. 1713–1722.

\bibitem{CautisMT19}
B.~Cautis, S.~Maniu, and N.~Tziortziotis, ``Adaptive influence maximization,''
  in \emph{Proceedings of the 25th ACM SIGKDD International Conference on
  Knowledge Discovery and Data Mining}, ser. KDD '19.\hskip 1em plus 0.5em
  minus 0.4em\relax New York, NY, USA: Association for Computing Machinery,
  2019, p. 3185–3186.

\bibitem{chung2006concentration}
F.~Chung and L.~Lu, ``Concentration inequalities and martingale inequalities: A
  survey,'' \emph{Internet Mathematics}, vol.~3, no.~1, pp. 79--127, 2006.

\bibitem{hoeffding1963probability}
W.~Hoeffding, ``Probability inequalities for sums of bounded random
  variables,'' \emph{Journal of the American Statistical Association}, vol.~58,
  no. 301, pp. 13--30, 1963.

\end{thebibliography}

\end{document}